\journal{Information and Computation}
\newtheorem{theorem}{Theorem}
\newtheorem{lemma}[theorem]{Lemma}
\newtheorem{proposition}[theorem]{Proposition}
\newtheorem{corollary}[theorem]{Corollary}
\newtheorem{definition}[theorem]{Definition}
\newtheorem{claim}{Claim}
\newtheorem{example}{Example}
\newproof{proof}{Proof}
\newcommand*{\DEBUG}{}%
\newcommand{\fixme}[1]{{\textcolor{red}{\bf{\textsf{FIXME: #1}}}}}
\newcommand{\bug}[1]{{\textcolor{blue}{\bf{\textsf{BUG: #1}}}}}
\newcommand{\idea}[1]{{\textcolor{blue}{\bf{\textsf{IDEA: #1}}}}}
\newcommand{\TODO}[1]{{\textcolor{red}{\bf{\textsf{
TODO: #1
}}}}}
\newcommand{\fixme}[1]{}
\newcommand{\bug}[1]{}
\newcommand{\TODO}[1]{}
\newcommand{\idea}[1]{}
\newclass{\DCM}{DCM}
\newclass{\DCA}{DCA}
\newclass{\DPDA}{DPDA}
\newclass{\NPDA}{NPDA}
\newclass{\DCMNE}{DCM_{NE}}
\newclass{\TwoDCM}{2DCM}
\newclass{\NCM}{NCM}
\newclass{\DPCM}{DPCM}
\newclass{\NPCM}{NPCM}
\newclass{\TRE}{TRE}
\newclass{\NFA}{NFA}
\newclass{\MPCA}{MPCA}
\newclass{\TCA}{TCA}
\newclass{\QCA}{QCA}
\newclass{\EPDA}{EPDA}
\newclass{\LL}{{\cal L}}
\newcommand\abs[1]{\left|#1\right|}
\newcommand\set[1]{\left\{#1\right\}}
\newcommand\rquot[2]{#1 {#2}^{-1}}
\newcommand\lquot[2]{#1^{-1} #2}
\newcommand{\sst}{\ensuremath{\mid}}
\newcommand\natnum{\mathbb{N}}
\newcommand\natzero{\mathbb{N}_0}
\DeclareMathOperator{\pref}{pref}
\DeclareMathOperator{\suff}{suff}
\DeclareMathOperator{\infx}{inf}
\DeclareMathOperator{\outf}{outf}
\DeclareMathOperator{\emb}{emb}
\newcommand\up{_{\uparrow}}
\newcommand\down{_{\downarrow}}
\begin{document}

\begin{frontmatter}




\title{Deletion Operations on Deterministic Families of Automata\tnoteref{t1}\tnoteref{t2}}

\tnotetext[t1]{This is an extended version of the conference paper \cite{TAMC2015}.}

\tnotetext[t2]{\textcopyright 2016. This manuscript version is made available under the CC-BY-NC-ND 4.0 license \url{http://creativecommons.org/licenses/by-nc-nd/4.0/}}


\author[label1]{Joey Eremondi\fnref{fn3}}
\address[label1]{Department of Information and Computing Sciences\\ Utrecht University, P.O.\ Box 80.089 3508 TB Utrecht, The Netherlands}
\ead[label1]{j.s.eremondi@students.uu.nl}

\author[label2]{Oscar H. Ibarra\fnref{fn2}}
\address[label2]{Department of Computer Science\\ University of California, Santa Barbara, CA 93106, USA}
\ead[label2]{ibarra@cs.ucsb.edu}
\fntext[fn2]{Supported, in part, by
NSF Grant CCF-1117708 (Oscar H. Ibarra).}

\author[label3]{Ian McQuillan\corref{corr1}\fnref{fn3}}
\address[label3]{Department of Computer Science, University of Saskatchewan\\
Saskatoon, SK S7N 5A9, Canada}
\ead[label3]{mcquillan@cs.usask.ca}
\cortext[corr1]{Corresponding author}
\fntext[fn3]{Supported, in part, by Natural Sciences and Engineering Research Council of Canada Grant 327486-2010 (Ian McQuillan).}

\begin{abstract}
Many different deletion operations are investigated applied to languages accepted by one-way and two-way deterministic reversal-bounded multicounter machines, deterministic pushdown automata, and finite automata. Operations studied include the prefix, suffix, infix and outfix operations, as well as left and right quotient with languages from different families. It is often expected that language families defined from deterministic machines will not be closed under deletion operations. However, here, it is shown that one-way deterministic reversal-bounded multicounter languages are closed under right quotient with languages from many different language families; even those defined by nondeterministic machines such as the context-free languages. Also, it is shown that when starting with one-way deterministic machines with one counter that makes only one reversal, taking the left quotient with languages from many different language families --- again including those defined by nondeterministic machines such as the context-free languages --- yields only one-way deterministic reversal-bounded multicounter languages (by increasing the number of counters). These results are surprising given the nondeterministic nature of the deletion operation. However, if there are two more reversals on the counter, or a second 1-reversal-bounded counter, taking the left quotient (or even just the suffix operation) yields languages that can neither be accepted by deterministic reversal-bounded multicounter machines, nor by 2-way nondeterministic machines with one reversal-bounded counter. 
\end{abstract}

\begin{keyword}
Automata and Logic \sep Counter Machines \sep Deletion Operations \sep Reversal-Bounds \sep Determinism \sep Finite Automata
\end{keyword}

\end{frontmatter}

\section{Introduction}
\label{sec:intro}

This paper involves the study of various types of deletion operations applied to languages accepted by deterministic 
classes of machines. Deletion operations, such as left and right quotients, and word operations such as prefix, suffix, infix,
and outfix, are more commonly studied applied to languages accepted by classes of nondeterministic machines. Indeed, many language families accepted
by nondeterministic acceptors form {\em full trios} (closure under homomorphism, inverse homomorphism, and intersection with 
regular languages), and every full trio is closed under left and right quotient with regular languages, prefix, suffix,
infix, and outfix \cite{G75}. For families of languages accepted by deterministic machines however, the situation is more
tricky due to the nondeterministic behaviour of the deletion. 
Indeed, deterministic pushdown automata are not even closed under left quotient with a set of individual letters. Here, most deterministic machine models studied will involve restrictions of one-way deterministic
reversal-bounded multicounter machines $(\DCM)$. These are machines that operate like deterministic finite automata with an additional fixed number of counters, where there is a bound on the number of times each counter switches between increasing and decreasing
\cite{Baker1974,Ibarra1978}. 
The family $\DCM(k,l)$ consists of languages accepted by
machines with $k$ counters that are $l$-reversal-bounded.
$\DCM$ languages have many decidable properties, such as emptiness, infiniteness, equivalence, inclusion, universe, and disjointness \cite{Ibarra1978}. Furthermore, $\DCM(1,l)$ forms an important 
restriction of deterministic pushdown automata.

These machines have been studied in a variety of different applications, such as to membrane computing \cite{counterMembrane}, verification of infinite-state systems \cite{verification,stringTransducers,modelChecking,verificationDiophantine}, and Diophantine equations \cite{verificationDiophantine}.

Recently, in \cite{EIMInsertion2015}, a related study was conducted for insertion operations; specifically operations defined by ideals obtained from the prefix, suffix, infix, and outfix relations, as well as left and right concatenation with languages from different language families. It was found that languages accepted by one-way deterministic reversal-bounded counter machines with one reversal-bounded counter are closed under right concatenation with $\Sigma^*$, but having two 1-reversal-bounded counters and right concatenating $\Sigma^*$ yields languages outside of both $\DCM$ and $2\DCM(1)$ (languages accepted by two-way deterministic machines with one counter that is reversal-bounded). It also follows from this analysis that the right input end-marker is necessary for even one-way deterministic reversal-bounded counter machines, when there are at least two counters. Furthermore, concatenating $\Sigma^*$ to the left of some one-way deterministic 1-reversal-bounded one counter languages yields languages that are neither in $\DCM$ nor $2\DCM(1)$. 
Other recent results on reversal-bounded multicounter languages include a technique to show languages are outside of $\DCM$ \cite{Chiniforooshan2012}. 
%
Closure properties of nondeterministic counter machines
under other types of deletion operations were studied in \cite{parInsDel}.

In this paper we investigate closure properties of types of deterministic machines. In Section \ref{sec:prelims}, preliminary background and notation
are introduced. In Section \ref{sec:closure}, erasing operations where $\DCM$ is closed are studied. It is shown that $\DCM$ is closed under
right quotient
with context-free languages, and that the left quotient of $\DCM(1,1)$
by a context-free language is in $\DCM$. Both results are generalizable to quotients
with a variety of different families of languages containing only semilinear languages. In Section \ref{sec:nonclosure}, non-closure of $\DCM$
under erasing operations are studied. It is shown that the set of suffixes,
infixes, or outfixes of a $\DCM(1,3)$ or $\DCM(2,1)$ language can
be outside of both $\DCM$ and $2\DCM(1)$.
In Section \ref{sec:DPDA}, $\DPCM$s (deterministic pushdown automata augmented by reversal-bounded
counters), and $\NPCM$s (the nondeterministic variant) are studied.
It is shown that $\DPCM$ is not closed under prefix or suffix,
and the right or left quotient of the language accepted by a $1$-reversal-bounded deterministic pushdown automaton
by a $\DCM(1,1)$ language can be outside $\DPCM$. In Section \ref{sec:reg},
the effective closure of regular languages with other families is briefly
discussed, and in Section \ref{sec:bounded}, bounded languages are discussed.

\section{Preliminaries}
\label{sec:prelims}

The set of non-negative integers is denoted by $\natzero$, and the set of positive integers by $\natnum$. For $c \in \natzero$,
let $\pi(c)$ be $0$ if $c=0$, and $1$ otherwise. 

We assume knowledge of standard formal language theoretic concepts such as finite automata, determinism, nondeterminism, semilinearity, recursive, and recursively enumerable languages \cite{Baker1974,HU}.
Next, we will give some notation used in the paper.
The empty word is denoted by $\lambda$.
If $\Sigma$ is a finite alphabet, then
$\Sigma^*$ is the set of all words over $\Sigma$
and $\Sigma^+ = \Sigma^* \setminus \set{\lambda}$. 
For a word $w \in \Sigma^*$, if $w = a_1 \cdots a_n$
where $a_i \in \Sigma$, $1\leq i \leq n$, the length of $w$ is denoted by $\abs{w}=n$,
and the reversal of $w$ is denoted by $w^R = a_n \cdots a_1$,
which is extended to reversals of languages in the natural way. 
In addition, if $a \in \Sigma$, $|w|_a$ is the number of $a$'s in $w$.
A language over $\Sigma$ is any subset of $\Sigma^*$.
Given a language $L\subseteq \Sigma^*$,
the complement of $L$, $\Sigma^* \setminus L$ is denoted by $\overline{L}$.
Given two languages $L_1,L_2$, the left quotient of $L_2$ by $L_1$, $L_1^{-1}L_2 = \{ y \mid xy \in L_2, x \in L_1\}$, and the right quotient of $L_1$ by $L_2$ is $L_1 L_2^{-1} = \{x \mid xy \in L_1, y \in L_2\}$.
A {\em full trio} is a language family closed under homomorphism,
inverse homomorphism, and intersection with regular languages \cite{HU}.

Let $n \in \mathbb{N}$. Then $Q \subseteq \mathbb{N}_0^n$ 
is a {\em linear} set if there is a vector $\vec{c} \in \mathbb{N}_0^n$
(the constant vector), and a set of vectors 
$V = \{\vec{v_1}, \ldots, \vec{v_r} \}, r \geq 0$, each 
$\vec{v_i} \in \mathbb{N}_0^n$ such that 
$Q = \{c + t_1 \vec{v_1} + \cdots + t_r\vec{v_r} \mid t_1, \ldots, t_r \in \mathbb{N}_0\}$. A finite union of linear sets is called a 
{\em semilinear set}.

A language $L$ is
{\em word-bounded} or simply {\em bounded}
if $L \subseteq w_1^* \cdots w_k^*$ for some
$k \ge 1$ and 
(not-necessarily distinct)
words $w_1, \ldots, w_k$.
Further, $L$ is {\em letter-bounded} if each $w_i$ is a letter.
Also, $L$ is {\em bounded-semilinear} if $L \subseteq
w_1^* \cdots w_k^*$ and $Q = \{(i_1, \ldots, i_k) ~|~
w_1^{i_1} \cdots w_k^{i_k} \in L \}$ is a semilinear set \cite{boundedSemilin}.


We now present notation for common word and language operations
used throughout the paper. 

\begin{definition}
\label{def:opGeneralize}
For a language $L \subseteq \Sigma^*$, 
the prefix, suffix, infix, and outfix operations are defined by:
\begin{itemize}
\item $\pref(L) = \set{w \sst wx \in L, x \in \Sigma^* }$,

\item $\suff(L) = \set{w \sst xw \in L, x \in \Sigma^* }$,

\item $\infx(L) = \set{w \sst xwy \in L, x,y \in \Sigma^* }$,

\item $\outf(L) = \set{xy \sst xwy \in L, w \in \Sigma^* }$.
\end{itemize}


%
\end{definition}

Note that $\pref(L) = L ( \Sigma^*)^{-1}$ and $\suff(L) = (\Sigma^*)^{-1}L$.

The outfix operation has been generalized to the notion of embedding \cite{JKT}:
\begin{definition}
The $m$-embedding of a language $L \subseteq \Sigma^*$ is the following set:
$\emb(L, m) = \{w_0 \cdots w_m \sst w_0 x_1 \cdots w_{m-1} x_m w_m \in L$,
$w_i \in \Sigma^*, 0 \leq i \leq m, x_j \in \Sigma^*, 1 \leq j \leq m\}$.
\end{definition}
Note that $\outf(L) = \emb(L, 1)$.

A {\em nondeterministic multicounter machine} is a finite automaton augmented by a fixed number of counters. The counters can be increased, decreased, tested for zero, or tested to see if the value is positive. A multicounter machine is {\em reversal-bounded} if there exists a fixed $l$ such that, in every accepting computation, the count on each counter alternates between increasing and decreasing at most $l$ times.

Formally, a {\em one-way $k$-counter machine} is a tuple $M = (k,Q,\Sigma, \lhd, \delta, q_0, F)$, where
$Q, \Sigma, \lhd, q_0,F$ are respectively the finite set of states, the input alphabet, the right input end-marker, the initial state in $Q$, and the set of final states that is a subset of $Q$. The transition function $\delta$ (defined as in \cite{Ibarra1978} except with only a right end-marker since we only use one-way inputs) is a relation from $Q \times (\Sigma \cup \{\lhd\}) \times \{0,1\}^k$ into $Q \times \{{\rm S},{\rm R}\} \times  \{-1, 0, +1\}^k$, such that if $\delta(q,a,c_1, \ldots, c_k)$ contains $(p,d,d_1, \ldots, d_k)$ and $c_i =0$ for some $i$, then $d_i \geq 0$ to prevent negative values in any counter. 
The direction of the input tape head movement is given by the symbols ${\rm S}$ and ${\rm R}$ for either {\em stay} or {\em right} respectively. The machine $M$
is {\em deterministic} if $\delta$ is a partial function.
A {\em configuration} of $M$ is a $k+2$-tuple $(q, w\lhd , c_1, \ldots, c_k)$ for describing the situation where
$M$ is in state $q$, with $w\in \Sigma^* $ still to read as input, and $c_1, \ldots, c_k\in \natzero$ are the
contents of the $k$ counters. The derivation relation $\vdash_M$ is defined between configurations, where $(q, aw, c_1, \ldots , c_k) \vdash_M (p, w'$ $, c_1 + d_1, \ldots, c_k+d_k)$, if
$(p, d, d_1, \ldots, d_k) \in \delta(q, a, \pi(c_1), \ldots, \pi(c_k))$ where $d \in \{{\rm S}, {\rm R}\}$ and 
$w' =aw$ if $d={\rm S}$, and $w' = w$ if $d={\rm R}$. Extended derivations are given by $\vdash^*_M$, the reflexive, transitive closure of $\vdash_M$.
A word $w\in \Sigma^*$ is accepted by $M$ if 
$(q_0, w\lhd, 0, \ldots, 0) \vdash_M^* (q, \lhd, c_1, \ldots, c_k)$, for some $q \in F$, and 
$c_1, \ldots, c_k \in \natzero$. The language accepted by $M$, denoted by $L(M)$, 
is the set of all words accepted by $M$. 
The machine $M$ is $l$-reversal bounded if, in every accepting computation, the count on each counter alternates between increasing and decreasing at most $l$ times.

We denote by $\NCM(k,l)$ the family of languages accepted by one-way 
nondeterministic $l$-reversal-bounded $k$-counter machines.
We denote by $\DCM(k,l)$ the family of languages accepted by
one-way deterministic $l$-reversal-bounded $k$-counter machines.
The union of the families of languages are denoted by
 $\NCM = \bigcup_{k,l \geq 0} \NCM(k,l)$ and
 $\DCM = \bigcup_{k,l \geq 0} \DCM(k,l)$. 
 Further, $\DCA$ is the family of languages accepted by
 one-way deterministic one counter machines (no reversal-bound).
 We will also sometimes refer to a multicounter machine as being in $\NCM(k,l)$ ($\DCM(k,l)$), if it has $k$ $l$-reversal bounded counters (and is deterministic).
Note that a counter that makes $l$ reversals
can be simulated by $\lceil \frac{l+1}{2} \rceil$ 1-reversal-bounded
counters. Hence, for example, $\DCM(1,3) \subseteq \DCM(2,1)$.

We denote by $\REG$ the family of regular languages, by $\NPDA$ the family
of context-free languages, by $\DPDA$ the family of deterministic
pushdown languages, by $\DPDA(l)$ the family of $l$-reversal-bounded
deterministic pushdown automata (with an upper bound of $l$ on the number
of changes between non-increasing and non-decreasing the size of the pushdown),
by $\NPCM$ the family of languages accepted by nondeterministic pushdown automata augmented by a fixed number of reversal-bounded counters \cite{Ibarra1978}, and by $\DPCM$ the deterministic variant. We also denote by $\TwoDCM$ the family of languages accepted by two-way input, deterministic  finite automata (both a left and right input tape end-marker are required) augmented by reversal-bounded counters, and by $2\DCM(1)$, $2\DCM$ with one reversal-bounded counter \cite{IbarraJiang}. A machine of this form is said to be {\em finite-crossing} if there is a fixed $c$ such that the number of times the boundary between any two adjacent input cells is crossed is at most $c$ \cite{Gurari1981220}. A machine is {\em finite-turn} if the input head makes at most $k$ turns on the input, for some $k$. Also, $2\NCM$ is the family of languages accepted by two-way nondeterministic machines with a fixed number of reversal-bounded counters, while $2\DPCM$ is the family of two-way deterministic pushdown machines augmented by a fixed number of reversal-bounded counters.

We give four examples below to illustrate the workings of the
reversal-bounded counter machines.

\begin{example}
Let $L = \{a^nb^n \mid n \ge 1\}$.  This language can be
accepted by a $\DCM(1,1)$ machine which reads $a^n$ and stores
$n$ in the counter and then reads the $b$'s while
decrementing the counter.  It accepts if the counter
becomes zero at the end of the input.
\end{example}

\begin{example}
Let $L_k = \{ x_1 \#  \cdots \# x_k \mid x_i \in \{a,b\}^+, x_j \ne x_k$
for $j \ne k \}$.  This can be  accepted by a  $\NCM(k(k+1)/2, 1)$, where we identify counter $i$ by the name $C_i$, for $ 1 \leq i \leq k(k+1)/2$.
The machine $M_k$ operates by reading the input and verifying
that for $1 \le i < j \le k$, $x_i$ and $x_j$ are different in
at least one position, or are different lengths, which is easy to
check using the counters.  To verify that they differ in at least one
position, while scanning $x_i$,
$M_k$ stores in counter $C_i$ a nondeterministically guessed 
position $p_i$ of $x_i$ and records in the state the 
symbol $a_{p_i}$ in that location.
Then, when scanning $x_j$, $M_k$
stores in counter $C_j$ a guessed location $p_j$ of $x_j$ and
records in the state the symbol $a_{p_j}$ in that location.  At the end
of the input, on stay transitions, $M_k$ checks that
$a_{p_i} \ne a_{p_j}$ and $p_i = p_j$ (by decrementing counters
$C_i$ and $C_j$ simultaneously and verifying that they become
zero at the same time).
\end{example}

\begin{example}
Let $L = \{x x^R \mid x \in \{a,b\}^+, |x|_a  > |x|_b \}$, which is not a
context-free language, can be accepted by an $\NPCM(2,1)$ $M$ (with counters labelled $C_1$ and $C_2$)
which operates as follows:  It scans the
input and uses the pushdown to check that the input is of the form
$x x^R$, by guessing the middle of the string, pushing $x$, and popping 
$x$ while reading $x^R$. In parallel, $M$ uses two counters $C_1$ and $C_2$ to store the numbers of
$a$'s and $b$'s it encounters. Then, at the end of the input,
on stay transitions, $M$ decrements $C_1$ and $C_2$ simultaneously
and verifies that the contents of $C_1$ is larger than that of $C_2$.
Clearly the counters are 1-reversal-bounded.

Note that the language $L' = \{x \#x^R \mid x \in \{a,b\}^+, |x|_a > |x|_b \}$, where $\#$ is a new symbol, is in $\DPCM(2,1)$.
\end{example}

\begin{example}
$L = \{a^mb^n \mid m, n \ge 1,  m \mbox{~is divisible by~} n\}$
can be accepted by a $2\DCM(1)$, which reads $a^m$
and stores $m$ in the counter.  Then it checks that
$m$ is divisible by $n$ by making left-to-right and
right-to-left sweeps on $b^n$ while decrementing
the counter.
\end{example}

We note that each of $\NCM(k,l), \NPCM(k,l), \NCM, \NPCM, \NPDA, \REG$ (for each $k,l$) form full trios (discussed in Section \ref{sec:intro}) 
\cite{Ibarra1978,G75}, and are therefore immediately closed under left and right quotient with regular languages, prefix, suffix, 
infix, and outfix. 

The next result proved in \cite{boundedSemilin} gives examples of weak and strong machines that are equivalent over word-bounded languages.
\begin{theorem} \cite{boundedSemilin} \label{WWW}
	The following are equivalent for every
	word-bounded language $L$:
	\begin{enumerate}
        \item $L$ can be accepted by an $\NCM$.
		\item $L$ can be accepted by an $\NPCM$.
        \item $L$ can be accepted by a finite-crossing $2\NCM$.
	\item $L$ can be accepted by a $\DCM$.
	\item $L$ can be accepted by a finite-turn $2\DCM(1)$.
	\item $L$ can be accepted by a finite-crossing $2\DPCM$
	\item $L$ is bounded-semilinear.
	\end{enumerate}
\end{theorem}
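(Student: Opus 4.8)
The plan is to prove the seven conditions equivalent through a web of implications that all pass through condition~(7), so that from~(7) one can reach every other item and every other item reaches~(7). The implications fall into three kinds. The first kind are the trivial containments between the machine models, which give $(4)\Rightarrow(1)\Rightarrow(2)$, $(1)\Rightarrow(3)$, $(4)\Rightarrow(6)$ and $(5)\Rightarrow(3)$: a deterministic device is a special nondeterministic one, a one-way head is a (trivially finite-crossing) two-way head that never moves left, a counter is a pushdown used as a counter, a finite-turn head is finite-crossing, and one counter is a special case of several. None of these uses the boundedness of $L$. The second kind is the semilinearity direction $(2),(3),(6)\Rightarrow(7)$, showing each model accepts only bounded-semilinear languages when its language is bounded. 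The third kind is the constructive direction $(7)\Rightarrow(4)$ and $(7)\Rightarrow(5)$, which manufactures the most restricted deterministic machines from a semilinear description of $Q$. Chasing the arrows then closes the cycle and yields the full equivalence.

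For the semilinearity direction I would reduce uniformly to the letter-bounded case. Fix $L\subseteq w_1^*\cdots w_k^*$, let $a_1,\dots,a_k$ be fresh letters, and let $h$ be the homomorphism with $h(a_i)=w_i$. Then $L'=h^{-1}(L)\cap a_1^*\cdots a_k^*$ satisfies $a_1^{i_1}\cdots a_k^{i_k}\in L'$ precisely when $(i_1,\dots,i_k)\in Q$, so the Parikh image of $L'$ is, up to a renaming of coordinates, exactly $Q$. Using the full-trio closure of $\NPCM$ and the analogous crossing-preserving closure of the finite-crossing two-way families under intersection with $\REG$ and inverse homomorphism, $L'$ stays in the same family as $L$, and the known semilinearity of the Parikh images of each of these families forces $Q$ to be semilinear; hence $L$ is bounded-semilinear. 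This disposes of $(2),(3),(6)\Rightarrow(7)$ at once.

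The genuinely hard direction, and the main obstacle, is $(7)\Rightarrow(4)$: building a one-way deterministic machine from a semilinear $Q$. I would first settle the letter-bounded special case, namely that $\{a_1^{i_1}\cdots a_k^{i_k}\mid(i_1,\dots,i_k)\in Q\}$ lies in $\DCM$; here the head reads the blocks left to right, copying each exponent $i_j$ into its own counter (so no reversals occur during input), and at the end-marker must decide \emph{deterministically} whether $(i_1,\dots,i_k)\in Q$ using only reversal-bounded counters. This verification step is the crux: writing $Q$ as a finite union of linear sets, I would show that membership in each linear set can be tested by a deterministic reversal-bounded counter computation on the stored values, and that the finitely many tests can be merged into a single deterministic machine while the reversal bound stays fixed; reconciling determinism with the union and with a fixed reversal bound simultaneously is the delicate point. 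The general word-bounded case then reduces to the letter-bounded one once an input can be factored deterministically over $w_1^*\cdots w_k^*$ --- the parsing difficulty --- which I would resolve by first normalizing the $w_i$ (for instance to a form in which consecutive blocks cannot overlap) so that a deterministic one-way head recovers a unique exponent vector. Finally $(7)\Rightarrow(5)$ follows by the same scheme, trading additional counters for additional finite-turn sweeps of the two-way head, as in the two-way divisibility computation illustrated earlier.
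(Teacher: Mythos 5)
You should first note that the paper does not prove this theorem: it is imported verbatim from \cite{boundedSemilin} (Ibarra and Seki), so there is no in-paper argument to compare against, and your proposal must stand on its own. Its architecture is sound --- the implication graph you draw is strongly connected, and the reduction of the general word-bounded case to the letter-bounded case via $L'=h^{-1}(L)\cap a_1^*\cdots a_k^*$, whose Parikh image is exactly $Q$, is a clean way to handle $(2)\Rightarrow(7)$, since $\NPCM$ is a full trio with effectively semilinear Parikh images. But two of your steps rest on claims that are not available ``off the shelf'' and are in fact the substance of the cited theorem. First, for $(6)\Rightarrow(7)$ you invoke ``the known semilinearity of the Parikh images'' of finite-crossing $2\DPCM$. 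That is not a standard fact: the usual route to semilinearity for finite-crossing two-way counter machines is to collapse them to one-way machines via crossing sequences, and this collapse breaks down once an unbounded pushdown is present, because the crossing data would have to record stack contents. Semilinearity of bounded finite-crossing $2\DPCM$ languages is precisely one of the nontrivial directions of the Ibarra--Seki result, not a lemma you may assume on the way to proving it.

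Second, $(7)\Rightarrow(4)$ is where the real work lives, and your sketch stops at the two points where it would have to begin. (i) Deterministically deciding whether a stored tuple $(i_1,\ldots,i_k)$ lies in a semilinear $Q$: a linear set $\{\vec{c}+t_1\vec{v_1}+\cdots+t_r\vec{v_r}\}$ has existentially quantified coefficients, and a finite union of such sets introduces a further disjunction; both are naturally nondeterministic, and eliminating that nondeterminism with a fixed number of reversal-bounded counters is the core construction of \cite{boundedSemilin}, which you describe as ``the delicate point'' but do not carry out. Note that the present paper leans on exactly this conversion (e.g.\ in Proposition \ref{fullQuotientClosure}, where a bounded $\NPCM$ language is turned into a $\DCM$ one by citing Theorem \ref{WWW}), so it cannot be waved through. (ii) The parsing step for genuinely word-bounded $L$ is also not resolved by ``normalizing the $w_i$'': the words $w_1,\ldots,w_k$ are part of the data defining $Q$, so replacing them changes $Q$, and a word of $w_1^*\cdots w_k^*$ may admit many factorizations (take $w_1=ab$, $w_2=ba$), only some of which have exponent vectors in $Q$; a deterministic one-way head cannot existentially choose among them. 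A correct proof must show that acceptance can be decided from some canonical deterministic parse even though $Q$ is defined via all parses. As it stands, your proposal is a reasonable roadmap that correctly isolates the hard directions, but both of them remain unproved.
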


\noindent
We also need the following result in \cite{IbarraJiang}:
\begin{theorem} \cite{IbarraJiang} \label{unary}
Let $L \subseteq a^*$ be accepted by a
$2\NCM$ (not necessarily finite-crossing).
Then $L$ is regular, hence, semilinear.
\end{theorem}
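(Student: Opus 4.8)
The plan is to reduce the claim to an ultimate-periodicity statement about lengths. A unary language is regular iff its length set $N = \{ n : a^n \in L \}$ is ultimately periodic, and for subsets of $\natzero$ ultimate periodicity coincides with semilinearity; so it suffices to show that $N$ is semilinear. I would first normalize the machine $M$: split each $l$-reversal counter into $\lceil \frac{l+1}{2} \rceil$ $1$-reversal counters (the splitting noted in the preliminaries is local to counter behaviour and applies equally to a two-way machine), so that $M$ has a fixed number $k$ of counters, each of which only increases and then decreases; and I would bound the number of consecutive stay-moves so that every halting computation has length polynomially bounded in $n$. Since a $2\NCM$ uses both end-markers, the head position on $a^n$ can be treated as an integer in $\{0,\dots,n+1\}$.

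The core of the argument is a crossing-sequence analysis adapted to reversal-bounded counters. Because the input is unary, the cell boundaries strictly between the end-markers are interchangeable, so a computation is determined up to copying by the crossing sequences it induces at the boundaries. At each boundary I would record a crossing sequence whose entries are \emph{not} raw counter values (these are unbounded) but the finite data consisting of the state together with, for each counter, its current reversal-phase and its sign in $\{0,+\}$. There are only finitely many such entry symbols, and — following the classical treatment of two-way automata over a unary alphabet — a shortest accepting computation cannot repeat a full crossing-sequence type at two interior boundaries without admitting a length-shift, so only finitely many crossing-sequence types are relevant and the number of interior boundaries carrying each type may be chosen subject to local-consistency constraints.

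I would then turn ``$M$ accepts $a^n$'' into a Presburger condition on the vector of type-multiplicities. An accepting computation exists iff one can assign crossing-sequence types to the $n+1$ interior boundaries so that (i) adjacent boundaries match, in the sense that the portion of the computation between them is realizable within a single cell; (ii) the end-marker boundaries carry the initial and final behaviour; and (iii) the counter bookkeeping is consistent, where within each monotone phase a counter changes by a nonnegative linear amount per traversal, so that the net change of each counter is a nonnegative integer combination of the per-type traversal counts, and the $1$-reversal shape together with nonnegativity become linear (in)equalities over the type-multiplicities. All of these are Presburger conditions on a vector whose coordinates sum to $n+1$; projecting the resulting semilinear solution set onto $n$ shows that $N$ is semilinear, hence ultimately periodic, hence $L$ is regular. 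Equivalently, one can phrase the same bookkeeping as a one-way $\NCM$ over $a^*$ that guesses the type-multiplicity vector and checks the constraints on its counters, then invoke the semilinearity of $\NCM$ and the regularity of unary $\NCM$ languages.

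The hard part will be controlling the counters despite the absence of any finite-crossing bound. The number of boundary crossings may grow with $n$, so the crossing sequences are long and the counter values at crossings are unbounded; the argument goes through only because reversal-boundedness forces each counter through a bounded number of monotone phases, letting me replace unbounded values by their finite phase/sign data while accounting for the actual magnitudes through linear, Presburger-expressible aggregates. Making the ``only finitely many crossing-sequence types matter'' step precise in the presence of counters — that is, showing that repeating a type lets one shift the input length by a fixed period while shifting counter values by a matched linear amount without violating the reversal or nonnegativity constraints — is exactly where the real work lies, and it is precisely the step that fails over non-unary alphabets (cf. the divisibility language over a binary alphabet accepted by a $2\DCM(1)$ above).
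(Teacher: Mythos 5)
First, a point of comparison: the paper does not prove this statement at all --- Theorem~\ref{unary} is imported from \cite{IbarraJiang} and used as a black box, so there is no internal proof to measure yours against. What you have written is therefore a plan for reconstructing the cited result, and it has to be judged on its own.

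As a plan it identifies the right difficulty but does not overcome it. The step on which everything hinges --- ``only finitely many crossing-sequence types are relevant'' --- is asserted by analogy with the classical unary two-way automaton argument, but that argument works by excising repetitions \emph{inside} a single crossing sequence (a state recurring in the same direction can be cut out), and this surgery is exactly what reversal-bounded counters forbid: cutting a loop changes the counter totals, can destroy the prescribed reversal pattern, and can change the step at which a counter first reaches zero, which in turn changes which zero-tested transitions are enabled later in the computation. Since the machine is not finite-crossing, crossing sequences over your finite entry alphabet (state, phase, sign) still have unbounded length, so without a shortening lemma there are infinitely many ``types'' and the multiplicity vector you want to constrain Presburger-style does not live in a fixed finite dimension. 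A second, related gap: even granting finitely many types, encoding acceptance as linear constraints on type multiplicities discards the \emph{order} of the boundaries along the computation, but with counters the order matters --- pointwise nonnegativity and the moment each counter's sign flips from $+$ to $0$ are properties of prefix sums, not of aggregate counts, and the ``local-consistency constraints'' between adjacent boundaries do not obviously capture them. (Your preliminary normalization that every halting computation has length polynomial in $n$ is also unjustified: a shortest accepting computation avoids repeated configurations, but configurations include unbounded counter values, so no polynomial bound follows without further argument.) These are precisely the places where the proof in \cite{IbarraJiang} does its real work, so as it stands the proposal is a reasonable outline of what must be shown rather than a proof.
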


\section{Closure of $\DCM$ Under Erasing Operations}
\label{sec:closure}

First, we discuss the left quotient of $\DCM$ with finite sets.
\begin{proposition} $\DCM$ is closed under left quotient with finite languages.
\end{proposition}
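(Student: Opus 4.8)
The plan is to reduce the statement to two facts: that $\DCM$ is closed under left quotient by a single word, and that $\DCM$ is closed under finite union. Writing $F = \{x_1, \ldots, x_m\}$, I observe that $F^{-1}L = \bigcup_{i=1}^m x_i^{-1}L$, since a word $y$ lies in $F^{-1}L$ exactly when $x_i y \in L$ for some $i$. Hence it suffices to handle a single word and then take the union.

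For a fixed word $x$ and $L = L(M)$ with $M \in \DCM(k,l)$, I would absorb the prefix $x$ into the finite control. Because $M$ is one-way and deterministic, its computation while the input head is still inside $x$ is uniquely determined and does not depend on the suffix $y$, so I can simulate $M$ on $x$ from the initial configuration. There are two cases. If $M$ fails to consume all of $x$ --- either because it blocks or because it runs forever on stay transitions without advancing --- then no word with prefix $x$ can reach the end-marker, so $x^{-1}L = \emptyset \in \DCM$. Otherwise $M$ consumes $x$ in a fixed finite number of steps, arriving at a configuration $(q^x, c^x_1, \ldots, c^x_k)$ in which each counter content $c^x_i$ is a fixed nonnegative constant (bounded by the number of steps, which depends only on $M$ and $x$). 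I would then build a machine $M_x$ that, on input $y$, first moves to state $q^x$ and raises its counters to the constants $c^x_1, \ldots, c^x_k$ by a fixed block of stay moves, and then simulates $M$ on $y$ (followed by the end-marker), accepting exactly when $M$ would. The initialization adds at most one increasing phase per counter, so $M_x$ is $(l+1)$-reversal-bounded; thus $x^{-1}L = L(M_x) \in \DCM$.

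Finally, I would combine the languages $x_1^{-1}L, \ldots, x_m^{-1}L$ using closure of $\DCM$ under union \cite{Ibarra1978}, which holds because two deterministic reversal-bounded machines can be run in parallel on disjoint sets of counters. This union step is exactly where $\DCM$ succeeds while $\DPDA$ fails: the single-word quotients are themselves deterministic (for $\DPDA$ one can likewise hardcode the fixed stack content left after reading $x$), but $\DPDA$ is not closed under union, which is why deterministic pushdown languages fail to be closed under left quotient even with a finite set of letters. I expect the main obstacle to be two routine-but-careful points: arguing that nonterminating behavior on the fixed prefix yields the empty quotient, so that the reached counter values are genuine constants representable by a fixed initialization; and invoking the parallel counter construction behind union closure, whose only delicate aspect is synchronizing the stay moves and end-marker handling of the component machines.
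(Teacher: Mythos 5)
Your proof is correct and follows exactly the paper's (very terse) argument: reduce $F^{-1}L$ to the union $\bigcup_i x_i^{-1}L$, observe that the single-word quotient stays in $\DCM$ by hardcoding the deterministic computation on $x_i$ into the finite control and counter initialization, and conclude by closure of $\DCM$ under union \cite{Ibarra1978}. The extra details you supply (the blocking/looping case giving the empty quotient, the reversal count of the initialization, and the contrast with $\DPDA$'s failure of union closure) are all accurate elaborations of what the paper leaves implicit.
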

\begin{proof}
It is clear that $\DCM$ is closed under left quotient with a single word. Then the result follows from closure of $\DCM$ under union \cite{Ibarra1978}.
\qed
\end{proof}
This is in contrast to $\DPDA$, which is not even closed under left quotient with sets of multiple letters. Indeed, the language 
$\{\#a^n b^n \mid n > 0\} \cup \{\$ a^n b^{2n} \mid n> 0\}$ is a $\DPDA$ language, but taking the left quotient with $\{\$,\#\}$ produces a language which is not a $\DPDA$ language \cite{GinsburgDPDAs}.

Next, we show the closure of $\DCM$ under right quotient with
languages accepted by any 
nondeterministic reversal-bounded multicounter machine,
even when augmented with a pushdown store.

\begin{proposition}
\label{fullQuotientClosure}
Let $L_1 \in \DCM$ and let $L_2 \in \NPCM$.
Then $\rquot{L_1}{L_2} \in \DCM$. 
\end{proposition}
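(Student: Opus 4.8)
The plan is to exploit the fact that, because the machine $M_1 \in \DCM(k,l)$ accepting $L_1$ is deterministic, membership of a word $x$ in $\rquot{L_1}{L_2}$ depends only on the single configuration that $M_1$ reaches after having consumed exactly $x$. Call this the \emph{interface configuration}; by determinism it is a well-defined pair $(q,\vec n)$ with $q$ a state and $\vec n \in \natzero^k$ the counter contents, and $x \in \rquot{L_1}{L_2}$ holds if and only if there is some $y \in L_2$ such that $M_1$, started from $(q, y\lhd, \vec n)$, reaches an accepting configuration at its end-marker. Write $A(q,\vec n)$ for this predicate and $G_q = \set{\vec n \mid A(q,\vec n)}$. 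The machine $M$ witnessing the claim will simply simulate $M_1$ while reading $x$ (this part is already deterministic and reversal-bounded), and then, upon reaching $\lhd$ in state $q$, decide whether its current counter contents lie in $G_q$.

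The first main step is to show that each $G_q$ is semilinear. For fixed $q$, I would build an $\NPCM$ machine $N_q$ recognizing the bounded language $\set{a_1^{n_1} \cdots a_k^{n_k} \mid A(q,\vec n)} \subseteq a_1^* \cdots a_k^*$. On input $a_1^{n_1}\cdots a_k^{n_k}$, $N_q$ first loads $n_i$ into the $i$-th of a block of $k$ counters (one increasing phase), then, using stay moves at its end-marker, nondeterministically guesses a word $y$ symbol by symbol and simulates $M_1$ (from state $q$ on the loaded counters) in parallel with a machine for $L_2$, accepting iff both simulations accept. Since $L_2 \in \NPCM$, the composite needs only a pushdown together with reversal-bounded counters, so $N_q$ is indeed an $\NPCM$; loading prepends an increasing phase and hence adds at most one reversal to each $M_1$-counter, preserving reversal-boundedness. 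Because $L(N_q)$ is word-bounded, Theorem~\ref{WWW} applies: $L(N_q)$ is bounded-semilinear, which is exactly the assertion that $G_q$ is semilinear, and moreover Theorem~\ref{WWW} yields a \emph{deterministic} machine $D_q \in \DCM$ accepting $\set{a_1^{n_1}\cdots a_k^{n_k} \mid \vec n \in G_q}$.

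It remains to assemble $M$. While reading $x$, $M$ runs $M_1$ verbatim; it detects that it has consumed all of $x$ precisely when the next symbol is $\lhd$, at which moment it holds the interface configuration $(q,\vec n)$ with $\vec n$ in its counters. On $\lhd$, $M$ then simulates $D_q$, supplying $D_q$'s input $a_1^{n_1}\cdots a_k^{n_k}$ by decrementing counter $i$ down to zero to feed the block $a_i^{n_i}$, in order $i = 1, \ldots, k$, followed by $D_q$'s end-marker, and accepts iff $D_q$ accepts. Since $D_q$ is one-way it scans this input once, so each $M_1$-counter is decremented monotonically at most once more, adding at most one reversal, and $D_q$ contributes only finitely many further reversal-bounded counters; taking the finite union over all states $q$ keeps both the number of counters and the reversal bound finite. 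Every step is deterministic, so $M \in \DCM$ and $L(M) = \rquot{L_1}{L_2}$.

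The step I expect to be the main obstacle is the semilinearity of $G_q$, i.e.\ passing from the inherently nondeterministic condition ``there exists a suffix $y \in L_2$'' to a structure a deterministic machine can test at run time. The device that makes it go through is to perform all of the guessing \emph{offline}, inside the analysis of $N_q$, so that Theorem~\ref{WWW} can convert the nondeterministic, pushdown-aided existence test into a deterministic reversal-bounded membership test $D_q$ on the counter values; at run time $M$ need only replay $D_q$ against its own counters, which requires no nondeterminism.
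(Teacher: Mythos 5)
Your proposal is correct and follows essentially the same route as the paper's proof: for each state $q$ you encode the set of accepting counter vectors as a bounded $\NPCM$ language, invoke Theorem~\ref{WWW} to convert it to a deterministic reversal-bounded acceptor, and have the final machine simulate $M_1$ and then feed its counter contents to that acceptor by monotone decrements. The only cosmetic difference is that the paper dispatches deterministically on $q$ to the appropriate $M_c'(q)$ (padding all of them to a common number $k'$ of extra counters) rather than phrasing this as a ``finite union,'' but the construction is the same.
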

\begin{proof}
Consider a $\DCM$ machine $M_1=(k_1,Q_1, \Sigma, \lhd, \delta_1, s_0, F_1)$ and 
$\NPCM$ machine $M_2$ over $\Sigma$ with $k_2$ counters
where $L(M_1) = L_1$ and $L(M_2) = L_2$.
A $\DCM$ machine $M'$ will be constructed accepting $\rquot{L_1}{L_2}$.

Let $\Gamma = \{a_1, \ldots, a_{k_1}\}$ be new symbols.
First, intermediate $\NPCM$ machines will be built, one for each state $q$ of $M_1$. The intuition will be described first.
The machine built for state $q$ 
will accept all counter values (encoded over
$\Gamma$) of the form $a_1^{p_1} \cdots a_{k_1}^{p_{k_1}}$, whereby
there exists some $x \in \Sigma^*$, such that: \begin{itemize}
\item $M_1$ accepts $x$ starting from
state $q$ and counter values $(p_1, \ldots, p_{k_1})$, and 
\item $M_2$ accepts $x$.\end{itemize}
The key use of these machines is that they are all bounded languages, and
therefore Theorem \ref{WWW} applies, and they can all
be effectively converted to a $\DCM$ machine. From these, our final deterministic machine $M'$ can be built
by simulating $M_1$ until it hits the end-marker $\lhd$ in some state $q$. It then
deterministically simulates the {\it values that remain in the counters} with the
intermediate $\DCM$ machine constructed for state $q$, accepting if this intermediate
machine does. Indeed, it will therefore accept if and only if, from these counter values and state, $M_1$ can continue to lead to acceptance on some word $x$,
and $x$ is also in $L_2$. Hence, the deterministic simulation of
the intermediate machines is the key.
 
Formally, for each $q \in Q_1$, let $M_c(q)$ be an intermediate $k_1+k_2$ counter (plus a pushdown) $\NPCM$ machine over $\Gamma$ constructed as follows:
on input $a_1^{p_1} \cdots a_{k_1}^{p_{k_1}}$, $M_c(q)$ increments the first $k_1$ counters to $(p_1, \ldots, p_{k_1})$. Then $M_c(q)$ nondeterministically guesses a word $x\in \Sigma^*$ (by using only stay transitions of $M_c(q)$ but simulating lettered transitions on each letter of $x$ nondeterministically one at a time) and simulates $M_1$ on $x\lhd$ starting from state $q$ and from the counter values of $(p_1, \ldots, p_{k_1})$ using the first $k_1$ counters, while in parallel, simulating $M_2$ on $x$ using the next $k_2$ counters and the pushdown. This is akin to the product automaton construction described in \cite{Ibarra1978} showing $\NPCM$ is closed under intersection with $\NCM$. Then $M_c(q)$ accepts if both $M_1$ and $M_2$ accept.

\begin{claim}
Let $ L_c(q) = \{ a_1^{p_1} \cdots a_{k_1}^{p_{k_1}} \sst \exists x \in L_2 \mbox{~such that ~} (q, x\lhd, p_1, \ldots, p_{k_1}) \vdash_{M_1}^* (q_f, \lhd, p'_1, \ldots , p'_{k_1}) , p'_i \geq 0, 1 \leq i \leq k_1, q_f \in F_1 \}$.
Then $L(M_c(q)) = L_c(q)$.
\end{claim}
\begin{proof}
Consider $w = a_1^{p_1} \cdots a_{k_1}^{p_{k_1}} \in L_c(q)$.
Then there exists  $x$ where $x \in L_2$
and $(q, x\lhd, p_1, \ldots, p_{k_1}) \vdash_{M_1}^* (q^1_f, \lhd, p'_1, \ldots , p'_{k_1})$,
where $q^1_f \in F_1$.
There must then be some final state $q^2_f \in F_2$ reached
when reading $x\lhd$ in $M_2$.
Then, $M_c(q)$, on input $w$ places $(p_1, \ldots, p_{k_1}, 0, \ldots, 0)$ on the counters and then can nondeterministically guess $x$ letter-by-letter and
simulate $x$ in $M_1$ from state $q$ on the first $k_1$ counters and 
simulate $x$ in $M_2$ from its initial configuration on the remaining counters and pushdown. Then $M_c(q)$ 
ends up in state $(q^1_f, q_f^2)$, which is final.
Hence, $w \in L(M_c(q))$.

Consider $w = a_1^{p_1} \cdots a_{k_1}^{p_{k_1}} \in L(M_c(q))$.
After adding each $p_i$ to counter $i$, $M_c(q)$ guesses $x$ and simulates 
$M_1$ on the first $k_1$ counters from $q$ and simulates $M_2$ on the remaining counters and the pushdown from an initial configuration. It follows that $x \in L_2$,
and $(q, x\lhd, p_1, \ldots, p_{k_1}) \vdash_{M_1}^* (q_f^1, \lhd, p'_1, \ldots , p'_{k_1}), p_i' \geq 0, 1 \leq i \leq k_1, q_f^1 \in F_1$. Hence,
$w \in L_c(q)$.
\qed \end{proof}

Since for each $q \in Q_1$, $M_c(q)$ is in $\NPCM$, it accepts a semilinear language \cite{Ibarra1978},
and since the accepted language is bounded,
it is bounded-semilinear and can therefore be accepted by a $\DCM$-machine by
Theorem \ref{WWW}.
Let $M_c'(q)$ be this $\DCM$ machine, and let $k'$ be the maximum
number of counters of any $\DCM$ machine $M_c'(q),q \in Q_1$.

Thus, a final $\DCM$ machine $M'$ with $k_1+k'$ counters is built as follows. In it, $M'$ has $k_1$ counters used to simulate $M_1$, and also $k'$ additional counters, used to simulate some $M_c'(q)$, for some $q\in Q_1$. Then, $M'$ reads its input $x\lhd$, where $x\in \Sigma^*$,while simulating $M_1$ on the first $k_1$ counters, either failing, or 
reaching some configuration $(q, \lhd, p_1, \ldots, p_{k_1})$, for some $q\in Q_1$, upon first hitting the end-marker $\lhd$.
If it does not fail, we then simulate the $\DCM$-machine $M_c'(q)$ on input $a_1^{p_1} \cdots a_{k_1}^{p_{k_1}}$, but this simulation is done deterministically by subtracting $1$ from the first $k_1$ counters, in order, until each are zero instead of reading input characters, and accepts if $a_1^{p_1} \cdots a_{k_1}^{p_{k_1}} \in L(M_c'(q))= L_c(q)$. Then $M'$ is deterministic, and accepts
\begin{eqnarray*}
&& \{ x \mid \begin{array}[t]{l} \mbox{either~} (s_0, x\lhd, 0, \ldots, 0) \vdash_{M_1}^* (q', a\lhd, p_1', \ldots, p_{k_1}') \vdash_{M_1} (q, \lhd, p_1, \ldots, p_{k_1}), \\ a\in \Sigma, \mbox{~or~} (s_0, x\lhd, 0, \ldots, 0) = (q, \lhd, p_1, \ldots, p_{k_1}), \mbox{~s.t.~} a_1^{p_1} \cdots a_{k_1}^{p_{k_1}} \in L_c(q)\} \end{array} \\
& = & \{x \mid \begin{array}[t]{l} \mbox{either~} (s_0, x\lhd, 0, \ldots, 0) \vdash_{M_1}^* (q', a\lhd, p_1', \ldots, p_{k_1}') \vdash_{M_1} (q, \lhd, p_1, \ldots, p_{k_1}),  \\ a \in \Sigma, \mbox{~or~} (s_0, x\lhd, 0, \ldots, 0) = (q, \lhd, p_1, \ldots, p_{k_1}), \mbox{~where~}  \exists y \in L_2 \mbox{~s.t.~} \\
 (q,y\lhd, p_1, \ldots, p_{k_1}) \vdash_{M_1}^* (q_f, \lhd, p_1'', \ldots, p_{k_1}''), q_f \in F_1\}  \end{array} \\
& = & \{x \mid xy \in L_1, y \in L_2\} \\
& = & L_1 L_2^{-1}.
\end{eqnarray*}
\qed \end{proof}

This immediately shows closure for the prefix operation.

\begin{corollary}
\label{closedprefix}
If $L \in \DCM$, then $\pref(L) \in \DCM$.
\end{corollary}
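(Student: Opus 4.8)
The plan is to recognize the prefix operation as a special instance of the right quotient already handled in Proposition~\ref{fullQuotientClosure}, so that no new machine construction is needed. The key observation is the identity recorded just after Definition~\ref{def:opGeneralize}, namely $\pref(L) = L(\Sigma^*)^{-1}$. This is essentially definitional: a word $w$ lies in $\pref(L)$ exactly when there exists some $x \in \Sigma^*$ with $wx \in L$, which is precisely the membership condition for the right quotient of $L$ by $\Sigma^*$. So the first step is simply to rewrite $\pref(L)$ in this quotient form.

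With that reformulation in hand, I would apply Proposition~\ref{fullQuotientClosure} with $L_1 = L$ and $L_2 = \Sigma^*$. The hypothesis $L_1 \in \DCM$ is given. The only remaining obligation is to verify $L_2 = \Sigma^* \in \NPCM$, which is immediate: $\Sigma^*$ is regular, hence accepted by a finite automaton, i.e.\ by an $\NPCM$ machine that uses no counters and an empty pushdown. Proposition~\ref{fullQuotientClosure} then yields $\pref(L) = L(\Sigma^*)^{-1} \in \DCM$ directly.

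Because every step is a direct substitution into an already-established result, there is no real obstacle here; the entire content is the pair of observations that prefix coincides with right quotient by $\Sigma^*$ and that $\Sigma^*$ falls within the broad source family $\NPCM$ permitted by the proposition. The only point meriting any care is confirming that $\Sigma^*$ genuinely belongs to the allowed family, but since $\NPCM$ contains all of $\REG$ and $\Sigma^*$ is regular, this is trivial. Consequently the corollary follows as a one-line consequence of the preceding proposition.
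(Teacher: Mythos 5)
Your proposal is correct and matches the paper's intent exactly: the paper presents this corollary as an immediate consequence of Proposition~\ref{fullQuotientClosure}, using the identity $\pref(L) = L(\Sigma^*)^{-1}$ noted after Definition~\ref{def:opGeneralize} together with the fact that $\Sigma^* \in \REG \subseteq \NPCM$. Nothing is missing.
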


We can modify this construction to show a strong closure result for
one-counter languages that does not increase the number of counters.
\begin{proposition}
\label{rightquotientwithNPCM}
Let $l \in \natnum$. If $L_1 \in \DCM(1,l)$ and $L_2 \in \NPCM$, then $\rquot{L_1}{L_2} \in \DCM(1,l)$. 
\end{proposition}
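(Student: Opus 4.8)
The plan is to reuse the machinery from the proof of Proposition~\ref{fullQuotientClosure}, specialised to $k_1 = 1$, and to exploit the fact that with a single counter the intermediate languages become unary. For each state $q$ of the $\DCM(1,l)$ machine $M_1$ accepting $L_1$, I would build the same intermediate $\NPCM$ machine $M_c(q)$ as before, which now accepts the unary language $L_c(q) \subseteq a_1^{*}$ consisting of exactly those counter values $p$ from which, starting in state $q$, $M_1$ can still reach acceptance on some word $x \in L_2$. Since $M_c(q) \in \NPCM$ its language is semilinear, and a semilinear subset of $\natzero$ is a finite union of arithmetic progressions, i.e. an ultimately periodic set; equivalently, each $L_c(q)$ is regular. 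The problem thus reduces to recognising, with a single $l$-reversal-bounded counter, the set of inputs $x$ such that $M_1$ reading $x$ reaches a configuration $(q,p)$ with $p \in L_c(q)$.

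The key idea for keeping both the number of counters (one) and the reversal bound ($l$) fixed is that the \emph{periodic} part of these sets can be handled entirely in the finite control. I would fix a single period $\rho$ that is a common multiple of all the periods of the sets $L_c(q)$, $q \in Q_1$, chosen large enough that $\rho$ also exceeds every threshold; then for each $q$, membership of $p$ in $L_c(q)$ is determined by the residue $p \bmod \rho$ together with the single bit indicating whether $p < \rho$ or $p \ge \rho$. While the new machine $M'$ faithfully simulates $M_1$ on $x$ (mirroring its counter so that all of $M_1$'s zero-tests, and hence its $\le l$ reversals, are reproduced exactly), it simultaneously maintains $p \bmod \rho$ in a finite register, updating it in lock-step with the counter. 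This costs no counter operations, hence no reversals, and when the input ends in state $q$ with counter value $p$ the residue is known; it only remains to decide whether $p \ge \rho$, after which acceptance is a finite table lookup and the equalities $L(M') = \rquot{L_1}{L_2}$ follow exactly as in Proposition~\ref{fullQuotientClosure}.

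The main obstacle is precisely this final comparison. The magnitude of $p$ relative to the constant $\rho$ cannot be read from the finite control, so it must be extracted from the counter, and the natural way to do so is a bounded decreasing sweep (e.g. decrement $p \bmod \rho$ times and test the counter for zero). The delicate issue is to show that this sweep does not raise the reversal count above $l$. If $M_1$ finishes $x$ with its counter in a decreasing phase, the sweep merely continues that phase for free; and if $M_1$ has not yet used its full reversal budget, there is room for one more reversal. The genuinely hard case, and what I expect to be the technical heart of the argument, is when the input ends with the counter in an increasing phase after all $l$ reversals have already been used, since a fresh decreasing sweep would then create an $(l+1)$-st reversal. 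Resolving this requires exploiting more of the structure of $M_1$ — in particular that, once its counter is positive, $M_1$ consults it only through an always-positive zero-test and so acts as a finite automaton on its final increasing run — to decide $p \ge \rho$ while remaining within $\DCM(1,l)$. Making this last case rigorous, so that the threshold comparison is absorbed without a new reversal, is the step I expect to be the crux of the proof.
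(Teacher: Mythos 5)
Your opening reduction is exactly the paper's: since $M_1$ has a single counter, each $L_c(q)$ is a unary $\NPCM$ language, hence semilinear, hence regular, so the intermediate acceptors can be taken to be DFAs and no counters need to be added. Up to that point the proposal and the paper's proof coincide. But the proposal is not a complete proof: you explicitly leave open the one case that carries the content of the statement, namely deciding the threshold comparison when the simulation ends in an increasing phase with all $l$ reversals already spent. The paper's own finish is more direct and dispenses with your residue-tracking apparatus entirely: at the end-marker it transitions to the DFA for $L_c(q)$ and feeds it the counter contents exactly as in Proposition~\ref{fullQuotientClosure}, i.e.\ by one terminal decreasing sweep to zero, asserting that this adds no counters and no reversals. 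So the sweep you were trying to engineer around is precisely what the intended construction performs.

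The case you flag as the crux does dissolve, but the decisive observation concerns the witness word $y$ in the definition of $L_c(q)$, not $M_1$'s run on $x$. Let $M'$ also track in its finite control the number $r$ of reversals its simulation has made and the direction of the current phase (rejecting outright if $r>l$). If the input ends with counter value $p>0$ in an increasing phase and $r=l$, then for any $y\in L_2$ witnessing $a^p\in L_c(q)$, the (deterministic) run of $M_1$ on $xy\lhd$ is accepting and hence makes no further reversal after $x$; its $y$-part therefore never decrements, so the counter stays positive and every zero-test answers ``positive''. Whether such a $y$ exists thus depends only on $q$ and on the fact that $p>0$ --- a precomputable finite-control decision, and no sweep is needed. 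In every other situation the terminal sweep either prolongs a decreasing phase at no cost or spends a reversal that is still available (since $r<l$ in that branch), so the budget of $l$ is respected. With this case analysis your construction closes; as written, it stops exactly where the proposition stops being routine.
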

\begin{proof}
The construction is similar to the one in Proposition \ref{fullQuotientClosure}.
However, we note that since the input machine for $L_1$ has only one counter,
$L_c(q)$ is unary (regardless of the number of counters needed for $L_2$).
Thus $L_c(q)$ is unary and semilinear, and
Parikh's theorem states 
that all semilinear languages are letter-equivalent
to regular languages \cite{harrison1978}, and all unary semilinear languages are regular.
Thus $L_c(q)$ is regular, and can be accepted by a DFA.

We can then construct $M'$ accepting $\rquot{L_1}{ L_2}$ as in Proposition \ref{fullQuotientClosure}
without requiring any additional counters or counter reversals, by transitioning to the DFA
accepting $L_c(q)$ when we reach the end of input at state $q$.
\qed \end{proof}

\begin{corollary}
\label{closedprefixDCM1}
Let $l \in \natnum$. If $L \in \DCM(1,l)$, then $\pref(L) \in \DCM(1,l)$.
\end{corollary}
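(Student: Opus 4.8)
The plan is to recognize that for one-counter, $l$-reversal-bounded deterministic counter languages, the prefix operation is nothing more than a right quotient, which has already been handled in full generality by Proposition \ref{rightquotientwithNPCM}. The starting point is the identity recorded immediately after Definition \ref{def:opGeneralize}, namely $\pref(L) = \rquot{L}{\Sigma^*}$. So the entire task reduces to exhibiting $\Sigma^*$ as an admissible right-quotient argument and then invoking the proposition.

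First, I would verify that $\Sigma^* \in \NPCM$. Since $\Sigma^*$ is regular and the regular languages are contained in $\NPCM$ (a regular language is accepted by a machine that uses neither its pushdown nor any of its reversal-bounded counters), we indeed have $\Sigma^* \in \NPCM$. This is precisely the hypothesis that Proposition \ref{rightquotientwithNPCM} demands of its second argument.

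Next, I would apply Proposition \ref{rightquotientwithNPCM} with $L_1 = L \in \DCM(1,l)$ and $L_2 = \Sigma^* \in \NPCM$. The proposition guarantees that $\rquot{L_1}{L_2} \in \DCM(1,l)$, which by the identity above is exactly the assertion that $\pref(L) = \rquot{L}{\Sigma^*} \in \DCM(1,l)$, completing the argument.

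I expect no genuine obstacle here, since all of the delicate work has already been discharged in Proposition \ref{rightquotientwithNPCM}: the crucial point there was that a single counter forces the intermediate language $L_c(q)$ to be unary, hence regular by Parikh's theorem, so the right quotient can be absorbed into the finite-state control without introducing additional counters or counter reversals. The present corollary merely observes that the prefix operation is the special case obtained by quotienting with $\Sigma^*$, exactly mirroring how Corollary \ref{closedprefix} is deduced from Proposition \ref{fullQuotientClosure}.
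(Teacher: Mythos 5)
Your proof is correct and follows exactly the route the paper intends: the corollary is stated as an immediate consequence of Proposition \ref{rightquotientwithNPCM} via the identity $\pref(L) = \rquot{L}{\Sigma^*}$ and the fact that $\Sigma^* \in \NPCM$. There is nothing to add.
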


In fact, the constructions of Propositions \ref{fullQuotientClosure} and \ref{rightquotientwithNPCM} can be generalized from $\NPCM$ to any class of automata that can be defined using Definition \ref{counteraugmentable}. A condition to describe these classes of automata is described in more detail in \cite{Harju2002278}. But we only define the condition below in a way specific to our use in this paper. Only the first two conditions are required for Corollary \ref{generalizedSemilinear}, while the third is required for Corollary \ref{evenMoreGeneralSemilinear}.
\begin{definition}
\label{counteraugmentable}
A family of languages $\mathscr{F}$ is said to be {\em reversal-bounded counter augmentable} if
\begin{itemize}
\item every language in $\mathscr{F}$ is effectively semilinear,
\item given $\DCM$ machine $M_1$ with $k$ counters, state set $Q$ and final state set $F$, and $L_2 \in \mathscr{F}$, we
can effectively construct, for each $q\in Q$, 
the following language in $\mathscr{F}$,
$$\{ a_1^{p_1} \cdots a_k^{p_k} \sst \begin{array}[t]{l} \exists x \in L_2 \mbox{~such that ~} (q, x\lhd, p_1, \ldots, p_k) \vdash_{M_1}^* (q_f, \lhd, p'_1, \ldots p'_k), \\ p'_i \geq 0, 1 \leq i \leq k, q_f \in F \}, \end{array}$$

\item given $\DCM$ machine $M_1$ with $k$ counters, state set $Q$, initial state $q_0$, and $L_2 \in \mathscr{F}$, we
can effectively construct, for each $q\in Q$,
the following language in $\mathscr{F}$,
$$\{ a_1^{p_1} \cdots a_k^{p_k} \sst \exists x \in L_2 \mbox{~such that ~} (q_0, x, 0, \ldots,0) \vdash_{M_1}^* (q, \lambda, p_1, \ldots p_k)\}.$$
\end{itemize}
\end{definition}

\begin{corollary}
\label{generalizedSemilinear}
Let $L_1 \in \DCM$ and $L_2 \in \mathscr{F}$, a family of languages that is reversal-bounded counter augmentable.
Then $\rquot{L_1 }{ L_2} \in \DCM$. Furthermore, if $L_1 \in \DCM(1,l)$ for some $l \in \natnum$,
then $\rquot{L_1 }{ L_2} \in \DCM(1,l)$.
\end{corollary}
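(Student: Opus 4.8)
The plan is to observe that Corollary \ref{generalizedSemilinear} is nothing more than Propositions \ref{fullQuotientClosure} and \ref{rightquotientwithNPCM} with the single property of $\NPCM$ that was actually used --- namely that the intermediate languages $L_c(q)$ can be built inside the family and are semilinear --- abstracted into the defining conditions of Definition \ref{counteraugmentable}. So first I would re-read the proof of Proposition \ref{fullQuotientClosure} and isolate exactly where $\NPCM$-specific facts were invoked. There are only two such places: (i) the construction of the machine $M_c(q)$ accepting $L_c(q)$, which relied on $\NPCM$ being closed under the relevant product/simulation; and (ii) the assertion that $L_c(q)$ is semilinear, hence bounded-semilinear, hence in $\DCM$ by Theorem \ref{WWW}. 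The key point is that the first two bullets of Definition \ref{counteraugmentable} give me precisely these two facts for an arbitrary reversal-bounded counter augmentable family $\mathscr{F}$.

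Concretely, I would argue as follows for the $\DCM$ case. Let $M_1$ be a $\DCM$ machine with $k$ counters, state set $Q$ and final states $F$ accepting $L_1$, and let $L_2\in\mathscr{F}$. By the second bullet of Definition \ref{counteraugmentable}, for each $q\in Q$ I can effectively construct a language in $\mathscr{F}$ equal to $L_c(q)=\{a_1^{p_1}\cdots a_k^{p_k}\mid \exists x\in L_2,\ (q,x\lhd,p_1,\ldots,p_k)\vdash_{M_1}^*(q_f,\lhd,p'_1,\ldots,p'_k),\ p'_i\geq 0,\ q_f\in F\}$. By the first bullet, this language is effectively semilinear, and since it is a subset of the bounded set $a_1^*\cdots a_k^*$ it is bounded-semilinear, so by Theorem \ref{WWW} it is accepted by some $\DCM$ machine $M_c'(q)$. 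From here the remainder of the construction of $M'$ is identical to that in Proposition \ref{fullQuotientClosure}: simulate $M_1$ on the first $k$ counters until the end-marker is reached in some state $q$, then deterministically feed the residual counter contents $a_1^{p_1}\cdots a_k^{p_k}$ into $M_c'(q)$ by decrementing the first $k$ counters in order, accepting iff $M_c'(q)$ accepts. The correctness chain of equalities establishing $L(M')=L_1L_2^{-1}$ carries over verbatim, since it only used the defining property of $L_c(q)$, which still holds.

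For the sharper one-counter claim, when $L_1\in\DCM(1,l)$ the machine $M_1$ has a single counter, so each $L_c(q)$ is a unary language; being semilinear and unary it is regular by Parikh's theorem, exactly as in Proposition \ref{rightquotientwithNPCM}. Thus I can replace $M_c'(q)$ by a DFA and transition into it upon reaching the end-marker, adding no counters and no reversals, giving $L_1L_2^{-1}\in\DCM(1,l)$.

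I do not expect any genuine obstacle here, since the work was done in the two earlier propositions; the only care needed is to verify that the proofs of Propositions \ref{fullQuotientClosure} and \ref{rightquotientwithNPCM} truly used nothing about $\NPCM$ beyond the two abstracted conditions. The one subtle point worth checking is that the intermediate machine $M_c(q)$ in the original proof simultaneously simulated $M_1$ (on the first $k$ counters) \emph{and} $M_2$; in the abstract setting this joint simulation is not something I build by hand but is instead postulated wholesale as the second bullet of the definition, so I must make sure that bullet really packages both the guessing of $x\in L_2$ and the $M_1$-simulation together --- which, by inspection of Definition \ref{counteraugmentable}, it does.
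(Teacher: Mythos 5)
Your proposal is correct and matches the paper's intent exactly: the paper gives no separate proof of Corollary \ref{generalizedSemilinear}, justifying it precisely by observing that the constructions of Propositions \ref{fullQuotientClosure} and \ref{rightquotientwithNPCM} used only the two facts you isolated (constructibility of $L_c(q)$ within the family, and its effective semilinearity), which are abstracted as the first two bullets of Definition \ref{counteraugmentable}. Your handling of the $\DCM(1,l)$ case via unary semilinear languages being regular is likewise the same argument as in Proposition \ref{rightquotientwithNPCM}.
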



There are many reversal-bounded counter augmentable families that $L_2$ could be from in this corollary, such as:
\begin{itemize}
\item $\MPCA$'s: one-way machines with $k$ pushdowns where values may only be popped from the first non-empty stack,
augmented by a fixed number of reversal-bounded counters \cite{Harju2002278}. 
\item $\TCA$'s: nondeterministic Turing machines with a one-way read-only
input and a two-way read-write tape,
where the number of times the read-write head crosses any tape cell is finitely bounded,
again augmented by a fixed number of reversal-bounded counters \cite{Harju2002278}.
\item $\QCA$'s: $\NFA$'s augmented with a queue, where the number of alternations between the non-deletion
phase and the non-insertion phase is bounded by a constant \cite{Harju2002278},
augmented by a fixed number of reversal-bounded counters.
\item $\EPDA$'s: embedded pushdown automata, modelled around a stack of stacks, 
introduced in \cite{Vijayashanker:1987:STA:913947} augmented by a fixed
number of reversal-bounded counters. 
These accept the languages of tree-adjoining grammars,
a semilinear subset of the context-sensitive languages.
As was stated in \cite{Harju2002278}, we can augment this model with a fixed number of reversal-bounded counters
and still get an effectively semilinear family.
\end{itemize}

Finally, the construction of Proposition \ref{fullQuotientClosure} can
be used to show that deterministic one counter languages
(non-reversal-bounded) are closed under right quotient with $\NCM$.
\begin{proposition}
Let $L_1  \in \DCA$, and let
$L_2 \in \NCM$. Then $L_1 L_2^{-1} \in \DCA$.
\end{proposition}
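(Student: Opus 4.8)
The plan is to reuse the construction of Proposition \ref{fullQuotientClosure}, exploiting the single-counter restriction in the same spirit as Proposition \ref{rightquotientwithNPCM}. Let $M_1 = (1, Q_1, \Sigma, \lhd, \delta_1, s_0, F_1)$ be a $\DCA$ machine with $L(M_1) = L_1$, and let $M_2$ be an $\NCM$ machine accepting $L_2$ with $k_2$ reversal-bounded counters. For each state $q \in Q_1$ I would define the unary language
$$L_c(q) = \{a_1^{p} \mid \exists x \in L_2 \text{ such that } (q, x\lhd, p) \vdash_{M_1}^* (q_f, \lhd, p'),\ p' \geq 0,\ q_f \in F_1\},$$
and build an intermediate machine $M_c(q)$ that, on input $a_1^{p}$, loads $p$ into a counter, then nondeterministically guesses a word $x$ letter-by-letter on stay transitions while simulating $M_1$ from state $q$ on that counter and $M_2$ on its $k_2$ counters in parallel, accepting iff both accept. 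Exactly as in the claim of Proposition \ref{fullQuotientClosure}, one shows $L(M_c(q)) = L_c(q)$.

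The key observation, and the point that differs from Proposition \ref{rightquotientwithNPCM}, is how to classify $M_c(q)$. Because $M_1$ has a single but \emph{unrestricted} counter, $M_c(q)$ uses exactly one non-reversal-bounded counter (simulating $M_1$) together with $k_2$ reversal-bounded counters (simulating $M_2$). A single unrestricted counter is a special case of a pushdown store, so $M_c(q)$ is an $\NPCM$ machine; by \cite{Ibarra1978} its language is semilinear. But $L_c(q) \subseteq a_1^*$ is unary, and every unary semilinear language is regular by Parikh's theorem \cite{harrison1978}. Hence each $L_c(q)$ is regular and is accepted by some DFA $A_q$ over $\{a_1\}$.

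Finally I would assemble a deterministic one-counter machine $M'$ for $\rquot{L_1}{L_2}$ that uses \emph{only} the single counter already present in $M_1$. On input $x\lhd$, $M'$ simulates $M_1$ deterministically until it first reaches $\lhd$ in some state $q$ with counter value $p$ (or fails). It must then decide whether $a_1^{p} \in L_c(q)$; since this language is unary regular, $M'$ switches, on stay transitions, into a simulation of $A_q$, feeding $A_q$ one symbol $a_1$ per decrement of the counter until the counter reaches zero, and accepts iff $A_q$ is then in a final state. As $M_1$ is deterministic and the end-marker phase is deterministic, $M'$ is deterministic; and as the membership-test phase recycles the one counter already in use—legitimate precisely because $\DCA$ imposes no reversal bound—the machine $M'$ has exactly one counter, so $M' \in \DCA$. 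The verification that $L(M') = \rquot{L_1}{L_2}$ follows verbatim from the configuration computation at the end of the proof of Proposition \ref{fullQuotientClosure}. I expect the main obstacle to be exactly the classification step: recognizing that a lone unrestricted counter can be absorbed into the pushdown of an $\NPCM$ so that semilinearity still applies, and that unarity then upgrades semilinearity to regularity, which is what both prevents the counter count from growing and permits the single counter to be reused for the final membership test.
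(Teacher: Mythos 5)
Your proposal is correct and follows essentially the same route as the paper's own proof: the paper likewise observes that $L_c(q)$ is unary, absorbs the unrestricted counter of $M_1$ into the pushdown of an $\NPCM$ so that semilinearity (and hence, by unarity, regularity) of $L_c(q)$ follows, and then has $M'$ transition into the DFA for $L_c(q)$ at the end-marker without adding counters. Your extra detail on feeding one $a_1$ to the DFA per decrement of the reused counter is exactly the intended mechanism.
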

\begin{proof}
Again, the construction is similar to Proposition \ref{fullQuotientClosure}.
However, since the input machine for $L_1$ has only one counter,
$L_c(q)$ is unary (regardless of the number of counters
needed for $L_2$). Then $L_c(q)$ is unary and is indeed an $\NPCM$
language, as $M_c(q)$ simulates $M_1$, this time using the
unrestricted pushdown to simulate the potentially non-reversal-bounded
counter of $M_1$, while simulating $M_2$ on the reversal-bounded counters.
Thus, because $\NPCM$ accept only semilinear languages \cite{Ibarra1978},
$L_c(q)$ is in fact a regular language and can be accepted by a DFA.
$M'$ can then be constructed to accept $L_1 L_2^{-1}$ without requiring any additional counters or counter reversals by transitioning to the DFA
accepting $L_c(q)$ when we reach the end of input at state $q$.
\qed \end{proof}

Next, for the 
case of one-counter machines that make only one counter reversal, it will be shown that a $\DCM$-machine that can accept their suffix and infix languages can always be constructed. However, in some cases, these resulting machines often require more than one counter.
Thus, unlike prefix, $\DCM(1,1)$ is not closed under suffix, left quotient, or infix. But, the result is in $\DCM$.

As the proof is quite lengthy, we 
will give some intuition for the result first. First, $\DCM$ is closed under union \cite{Ibarra1978} (following from closure under intersection and complement) and so the second statement of Proposition \ref{leftquotientwithNPCM} follows from the first. For the first statement, an intermediate $\NPCM$ machine is constructed from $L_1$ and $L$ that accepts a language $L^c$ (here, $c$ is a superscript label rather than an exponent). This language contains words of the form
$qa^i$ where there exists some word $w$ such that both $w \in L_1$, and also 
from the initial configuration of $M$ (accepting $L$), it can read $w$ and reach state $q$ with $i$ on the counter. Then, it is shown that this language is actually a regular language, using the fact that all semilinear unary languages are regular. Then, $\DCM(1,1)$ machines are created for every state $q$ of $M$. These accept all words $w$ such that 
$qa^i \in L^c$, and in $M$, from state $q$ and counter $i$ with $w$ to read as input, $M$ can reach a final state while emptying the counter.
The fact that $L^c$ is regular allows these machines to be created.

\begin{proposition}
\label{leftquotientwithNPCM}
Let $L \in \DCM(1,1), L_1 \in \NPCM$. Then 
$L_1^{-1} L$ is the finite union of languages in $\DCM(1,1)$. Furthermore, it is in $\DCM$.
\end{proposition}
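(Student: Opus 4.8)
The plan is to reduce everything to a finite union indexed by the states of a machine for $L$, and to show each language in the union lies in $\DCM(1,1)$. Fix a $\DCM(1,1)$ machine $M = (1, Q, \Sigma, \lhd, \delta, q_0, F)$ with $L(M) = L$, where, as is standard, I assume each state records whether the counter is in its increasing or decreasing phase, and an $\NPCM$ machine $M_1$ with $L(M_1) = L_1$. Since a word $y$ lies in $\lquot{L_1}{L}$ exactly when some $x \in L_1$ drives $M$ from its initial configuration to a boundary configuration $(q, \lambda, i)$ from which $M$, reading $y\lhd$, reaches a final state, I would write
$$\lquot{L_1}{L} = \bigcup_{q \in Q} L_q, \qquad L_q = \{ y \mid \exists i \in I_q,\ (q, y\lhd, i) \vdash_M^* (q_f, \lhd, \cdot),\ q_f \in F\},$$
where $I_q = \{ i \mid \exists x \in L_1,\ (q_0, x, 0) \vdash_M^* (q, \lambda, i)\}$. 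Because $\DCM$ is closed under union \cite{Ibarra1978}, the ``furthermore'' clause follows once each $L_q$ is shown to be in $\DCM(1,1)$, so the whole proof rests on analysing the $L_q$.

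The first step is to pin down the sets $I_q$. For this I would build, much as in Proposition \ref{fullQuotientClosure}, a single $\NPCM$ machine over $Q \cup \{a\}$ accepting $L^{c} = \{ q a^{i} \mid i \in I_q \}$: on input $q a^i$ it stores $q$, then guesses $x$ letter by letter on stay transitions while simulating $M_1$ on $x$ (with its pushdown and counters) and $M$ on $x$ from $q_0$ (on one further counter), and finally matches the simulated counter of $M$ against the block $a^i$. Since $\NPCM$ languages are semilinear \cite{Ibarra1978} and $L^{c}$ is bounded, each unary section $\{a^i \mid q a^i \in L^{c}\}$ is unary semilinear, hence regular \cite{harrison1978}; thus every $I_q$ is an ultimately periodic set, and membership ``$i \in I_q$'' can be tested by a finite automaton tracking $i \bmod p$ together with $\min(i, N)$ for suitable constants $p, N$.

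The heart of the argument is to realise each $L_q$ by a $\DCM(1,1)$ machine, and here I would split on the phase recorded in $q$. If $q$ is a decreasing-phase state, then while reading $y$ the counter of $M$ only decreases, so $M$ behaves like a fixed DFA $A_{+}$ while the counter is positive and like a fixed DFA $A_{0}$ once it reaches zero, the switch occurring precisely after the $i$-th decrement. Existence of a suitable $i \in I_q$ is then a ``some split point'' condition, which I would capture by a subset construction: a single DFA runs $A_{+}$, tracks the decrement count modulo the period and up to the threshold of $I_q$, and maintains the set of $A_{0}$-states obtained by switching at each past point whose decrement count lies in $I_q$; hence the decreasing-phase $L_q$ is regular. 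If $q$ is an increasing-phase state, the counter first rises and then (after its one reversal) falls. The key observation is that, as long as the net change since the boundary is non-negative, the actual counter value $i + (\text{net change})$ stays positive for every $i \ge 1$, so $M$'s transitions there are independent of $i$; I would therefore let the $\DCM(1,1)$ machine simulate $M$ directly, storing the \emph{net change} on its single counter (one reversal, mirroring $M$). Exactly when this counter returns to zero, the value of $M$'s counter equals the unknown $i$, and from that instant the computation is again a decreasing-phase situation, which I would hand off to the same subset construction as above (with the counter now idle). The finitely many remaining cases --- $i = 0$, and runs that accept before the net change returns to zero (valid for all $i \ge 1$ at once, so checkable against $I_q$ by a fixed test) --- are folded in as additional $\DCM(1,1)$ components.

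The main obstacle is precisely this increasing-phase determinisation: handling the unknown initial value $i$ without nondeterminism, under the constraints of one counter and one reversal. The two ideas that make it go through are (i) that everything before $M$'s counter falls back to $i$ is independent of $i$ and can be tracked as a net change on the single counter, and (ii) that the residual, $i$-dependent behaviour is finite-state and the existential quantifier over $i \in I_q$ can be eliminated by a subset construction, which is legitimate only because $L^{c}$, and hence each $I_q$, is regular. Granting this, each $L_q$ is a finite union of $\DCM(1,1)$ languages, so $\lquot{L_1}{L}$ is too, and closure of $\DCM$ under union \cite{Ibarra1978} gives $\lquot{L_1}{L} \in \DCM$.
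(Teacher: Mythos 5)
Your proposal is correct and follows essentially the same route as the paper's proof: the decomposition of $\lquot{L_1}{L}$ by boundary state and counter phase, the regular language $L^{c}$ of reachable state/counter-value pairs obtained via semilinearity of unary $\NPCM$ languages, the determinization of the decreasing phase over all possible zero-crossing points (the paper does this by an NFA that guesses the crossing while running the DFA for $L^{c}$ and then determinizes, which is exactly your subset construction), and the net-change trick on the single counter for the increasing phase with a handoff to the decreasing-phase machine. The only differences are presentational: the paper discharges your ad hoc special cases ($i=0$, acceptance before the counter empties, stay-transitions that decrement) by first normalizing $M$ into increasing and decreasing state sets with the counter forced to empty before acceptance.
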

\begin{proof}
For the first statement, let $M_1$ be an $\NPCM$ machine accepting
$L_1$, and 
let $M = (1,Q,\Sigma, \lhd, \delta, q_0,F)$ be a 1-reversal bounded, 1-counter machine accepting $L$. 

Next, we will argue that it is possible to assume, without loss of generality, that $M$ has the following form:
\begin{enumerate}
\item $Q = Q\down \cup Q\up$,
\item for all $q \in Q\down$, all transitions defined on $q$ either decrease the counter or keep it the same,
\item for all $q \in Q\up$, all transitions defined on $q$ either increase the counter or keep it the same,
\item the sequence of states $p_0 p_1 \cdots p_n, n \geq 0, p_0 = q_0, n \geq 0$ traversed in every computation from the 
initial configuration satisfies $p_0 \cdots p_i \in Q\up^*, p_{i+1} \cdots p_n \in Q\down^*, 0 \leq i \leq n$, 
with the transition from $p_{i+1}$ to $p_{i+2}$ being the (first, if it exists) decreasing transition,
\item for all states $q \in Q\down$ all stay transitions
defined on $q$ (except on $\delta(q,\lhd,0)$) change the counter,
\item $\delta(q,d,1)$ is defined for all $q \in Q, d\in \Sigma$, 
\item the counter always empties before accepting.
\end{enumerate}
Indeed, it is possible to transform a $\DCM(1,1)$ machine of the form of $M$ to another $\DCM(1,1)$ machine $\bar{M} = (1,\bar{Q}, \Sigma, \lhd, \bar{\delta}, q_0, \bar{F})$ satisfying the first four
conditions, as follows. First, 
let $Q\up = Q$, and $Q\down =  \{q' \mid q \in Q\}$ (primed versions of the state set), $\bar{Q} = Q\up \cup Q\down$, 
and $\bar{F} = F \cup \{q_f' \mid q_f \in F\}$. 
Then, for all transitions that either decrease the counter
or keep it the same, $(p,T, j) \in \delta(q,c,i), i \in \{0,1\}, j \in \{0, -1\}$, 
instead create the transition $(p',T, j) \in \bar{\delta}(q',c,i)$.
Further, for all transitions of $\delta$ that either increase the counter or keep it the same, keep this transition in $\bar{\delta}$.
Then, the first three conditions are satisfied.
Then, for all those decreasing transitions $(p,T, -1) \in \delta(q,c,1)$,
add in $(q', {\rm S}, 0) \in \bar{\delta}(q,c,1)$. Therefore, condition four is satisfied.
It is clear that $L(\bar{M}) = L(M)$, as any sequence of transitions with at most one counter reversal (all accepting
computations have at most one reversal) can traverse the same transitions (using states in $Q\up$) 
until the first decrease transition, at which point
only the new stay transitions from $Q\up$ to $Q\down$ are defined, and then the computation continues with transitions on $Q\down$.
This machine can be further transformed into one accepting the same language and additionally satisfying conditions 5, 6, and 7 as
any sequence of stay transitions that does not change the counter can be ``skipped over'' to either a right transition or a decrease transition,
a ``dead state'' can be added to satisfy condition 6, and the states can enforce 7.

Therefore, assume without loss of generality that $M$ satisfies these conditions. This will simplify the rest of the construction.

Next, we create an $\NPCM$ machine $M'$ that accepts 
$$L^c = \{qa^i \mid \exists w \in L_1, (q_0, w, 0) \vdash_M^* (q,\lambda, i)\},$$
where $a$ is a new symbol not in $\Sigma$. Indeed, $M'$ operates by nondeterministically guessing a word $w\in \Sigma^*$, letter-by-letter, and simulating in parallel (using stay transitions), the $\NPCM$ machine $M_1$ using the pushdown and a set of counters, as well as simulating $M$ on $w$ on an additional counter. Then, after reading the last
letter of the guessed $w$, $M'$ verifies that the simulated machine $M$ is in state $q$ (reading the state $q$ as part of the input), and verifies that the contents of the simulated counter of $M$ is $i$, matching the input. Then, it verifies that $w$ is in $L_1$ by continuing the
simulation of $M_1$ on the end-marker. Furthermore, for each $q \in Q$, the set $q^{-1}L^c$ is a unary $\NPCM$ language (as discussed in Section
\ref{sec:prelims}, $\NPCM$ is closed under left quotient with regular languages). Indeed, every $\NPCM$ language is semilinear \cite{Ibarra1978}, and it is also known that every unary semilinear language  is regular \cite{harrison1978}, and effectively constructable. Thus, $L^c = \bigcup_{q\in Q}(q(q^{-1}L^c))$ is regular as well. Let $M^c = (Q^c, Q \cup\{a\}, \delta^c, s_0^c, F^c)$ be a DFA accepting $L^c$. Assume without loss of generality that $M^c$ is a complete DFA.

For the remainder of the proof, the layout will proceed by
creating three sets of $\DCM(1,1)$ machines and languages as follows:
\begin{enumerate}
\item $M_0^q$, for all $q\in Q$, and $L_0^q = L(M_0^q)$. We will construct it such that
\begin{equation}
L_0^q= \{w \mid (q,w\lhd,0) \vdash_M^* (q_f, \lhd, 0), q_f \in F, qa^0 = q \in L^c\}.
\label{bigequation1}
\end{equation}

\item $M^q\up$, for all $q\in Q\up$, and $L^q\up = L(M^q\up)$. We will construct it such that
\begin{equation}
L^q\up= \{w \mid \exists i >0, (q,w\lhd,i) \vdash_M^* (q_f, \lhd, 0), q_f \in F, qa^i \in L^c\}.
\label{bigequation2}
\end{equation}

\item $M^q\down$, for all $q\in Q\down$, and $L^q\down = L(M^q\down)$. We will construct it such that
\begin{equation}
L^q\down= \{w \mid \exists i > 0, (q,w\lhd,i) \vdash_M^* (q_f, \lhd, 0), q_f \in F, qa^i \in L^c\}.
\label{bigequation3}
\end{equation}
\end{enumerate}
It is clear that $$L_1^{-1}L(M) = \bigcup_{q\in Q}L_0^q \cup  \bigcup_{q\in Q\up}L^q\up \cup  \bigcup_{q\in Q\down}L^q\down,$$ 
and thus it suffices to build the $\DCM(1,1)$ machines and show that Equations (\ref{bigequation1}), (\ref{bigequation2}), and (\ref{bigequation3}) hold.
We will do this with each type next.

First, for (\ref{bigequation1}), construct $M_0^q$ for $q\in Q$ as follows: $M_0^q$ operates just like $M$ starting at state $q$ if $q \in L^c$, and if $q \notin L^c$, then it accepts $\emptyset$. Hence, (\ref{bigequation1}) is true.

Next, we will show (\ref{bigequation3}) is true. 
It will be shown that $L^q\down$ is a regular language. Then the construction and proof of correctness of (\ref{bigequation3}) will be used within the proof and construction of (\ref{bigequation2}). A slight generalization of (\ref{bigequation3}) will be used in order to accommodate its use for (\ref{bigequation2}). Despite the languages being regular, $\NCM(1,1)$ machines will be built to accept them, but without
using the counter (for consistency of notation, and to use nondeterminism). It is immediate that these are regular, can be converted to NFAs, then to DFAs, then to $\DCM(1,1)$ machines that do not use the counter.
Intuitively, each $\NCM(1,1)$ machine (for each $q \in Q\down$) will simulate $M$ starting at state $q$, but then only non-increasing transitions can be used, as only transitions on $Q\down$ can be reached from $q$. However, instead of decreasing from a counter,
the $\NCM(1,1)$ machine instead simulates the DFA $M^c$ in parallel, and reads a single $a$ for every decrease of the simulated computation of $M$.
If the simulated computation of $M^c$ is in a final state, then the counter could be zero in the simulated computation of $M$ and reach that configuration. But the simulated computation of $M$ may only accept from configurations with larger counter values (depending on the remaining sequence of transitions). Thus, the new machine uses nondeterminism to try every possible configuration where zero could occur on the counter, trying each to see if the rest of the input accepts (by directly simulating $M$).

We will give the construction here of the intermediate $\NCM(1,1)$ machines that do not use the counter, then the proof of correctness of the construction. All the machines $\overline{M^{q,q'}\down}\in \NCM(1,1)$, for each $q \in Q\down, q' \in Q$ will have the same set of input alphabets, states, transitions, and final states, with only the initial state differing.

Formally, let $q \in Q\down, q' \in Q, q_0^c = \hat{\delta}^c(s_0^c,q')$. Then $\overline{M^{q,q'}\down} = (1,P\down, \Sigma,\lhd,\delta\down, s^{q,q'}\down,F\down)$, where $P\down = (Q\times Q^c) \cup Q\down, s^{q,q'}\down = (q,q_0^c), F\down = F$.

The transitions of $\delta\down$ are created (none using the counter) by the following algorithm:
\begin{enumerate}

\item\label{itm:subtypeA} For all transitions $(p,{\rm S},-1) \in \delta(r,d,1), p,r \in Q\down, d \in \Sigma \cup \{\lhd\}$, and all $r^c \in Q^c$, create
$$((p,\delta^c(r^c,a)),{\rm S}, 0) \in \delta\down((r,r^c),d,0),$$
and if $\delta^c(r^c,a) \in F^c$, create
$$(p,{\rm S},0) \in \delta\down((r,r^c),d,0).$$
\item\label{itm:subtypeB} For all transitions $(p,{\rm R},0) \in \delta(r,d,1), p,r \in Q\down, d \in \Sigma$, and all $r^c \in Q^c$, create
$$((p,r^c),{\rm R},0) \in \delta\down((r,r^c),d,0).$$
\item\label{itm:subtypeC} For all transitions $(p,{\rm R}, -1) \in \delta(r,d,1), p,r \in Q\down, d \in \Sigma$, and all $r^c \in Q^c$, create
$$((p,\delta^c(r^c,a)),{\rm R},0) \in \delta\down((r,r^c),d,0),$$
and if $\delta^c(r^c,a) \in F^c$, create
$$(p,{\rm R},0) \in \delta\down((r,r^c),d,0).$$

\item For all transitions $(p,{\rm R},0) \in \delta(r,d,0), p,r \in Q\down, d \in \Sigma$, create $$(p,R,0) \in \delta\down(r,d,0).$$ 

\item For all transitions $(p,{\rm S},0) \in \delta(r,\lhd,0), p,r \in Q\down$, create $$(p,{\rm S},0) \in \delta\down(r,\lhd,0).$$
\end{enumerate}
The states of the machine consist of ordered pairs in $Q \times Q^c$ in addition to states of $Q\down$ to allow the simulation of a
$M$ and $M^c$ in parallel, until the computation of $M^c$ can hit a final state, at which point it can (optionally) switch to only simulating $M$
on an empty counter.
Transitions created in step 1 simulate all decreasing transitions that stay on the input, by reading an $a$ from the simulation computation 
of $M^c$ to simulate the decrease; and if the simulated computation of $M^c$ can reach a final state, the simulated computation
of $M^c$ can optionally end.
Transitions created in step 2 simulate all transitions of $M$ defined on a positive counter that move right on the input but
do not change the counter (thus not changing the state of $Q^c$). Transitions created in step 3 simulate all transitions of
$M$ that move right on the input and decrease the counter by $M^c$ reading $a$ and optionally ending. Transitions created in step 4 and 5
simulate those of $M$ defined on an empty counter verbatim.

Intuitively, the next claim demonstrates that it is possible to simulate $M$ starting at $q$ and counter value $i$ whereby $q' a^i \in L^c$
as follows: first it simulates the computation of $M$ starting at $q$ using the first component, and each decrease in counter reads $a$ from the simulated computation of $M^c$. This continues until the counter reaches zero after $i$ decreasing transitions, at which point the simulated computation of $M^c$ is
in a final state. Then, the simulation of $M$ can continue verbatim. The formal proof is presented next.
\begin{claim}
\label{claim:subsetlang}
For all $q\in Q\down, q' \in Q$, $$\{w \mid \exists i > 0, (q,w \lhd,i) \vdash_M^* (q_f, \lhd, 0), q_f \in F, q'a^i \in L^c\} \subseteq L(\overline{M^{q,q'}\down}).$$
\end{claim}
\begin{proof}
Let $q \in Q\down, q' \in Q$. 
Let $w$ be such that there exists $i> 0, q_f\in F, q'a^i \in L^c$, and $(q,w \lhd,i) \vdash_M^* (q_f, \lhd,0)$. Let $p_j,w_j,x_j, 0 \leq j \leq m$ be such that $p_0=q, w=w_0, x_0=i, q_f = p_m,  w_m =\lambda,x_m=0$ and $(p_l,w_l \lhd,x_l)\vdash_M (p_{l+1},w_{l+1} \lhd ,x_{l+1}), 0 \leq l <m$, via transition $t_{l+1}$. Then
$$(p_0,w_0 \lhd ,x_0)\vdash_M^* (p_{\gamma},w_{\gamma} \lhd,x_{\gamma}) \vdash_M^* (p_m, w_m \lhd,x_m),$$
where $\gamma$ is the smallest number such that $x_{\gamma} <i$ (it exists since $i>0$), and $\mu$ the smallest number greater than or equal to $\gamma$ such that $x_{\mu} = 0$.

The transitions $t_1, \ldots, t_{\gamma-1}$ are of the form, for $0 \leq l <\gamma -1$,
$(p_{l+1}, T_{l+1},y_{l+1}) \in \delta(p_l,d_l,1)$, where $i$ is on the counter on all $x_0, \ldots, x_{\gamma-1}$ (since $x_0 = i$, and $x_{\gamma}$ is the first counter value less than $\gamma$), and $y_0, \ldots, y_{\gamma-1}$ are all equal to $0$. These must all be right transitions since they do not change the counter and so they create transitions in step \ref{itm:subtypeB} of the construction, of the form 
$$((p_{l+1},q_0^c),  R,0) \in \delta\down((p_l, q_0^c), d_l, 0),$$ for $0 \leq l < \gamma -1$.
Then, $$((p_0, q_0^c),w_0 \lhd, x_0 - i = 0) \vdash_{\overline{M^{q,q'}\down}}^* ((p_{\gamma-1}, q_0^c), w_{\gamma-1} \lhd, x_{\gamma-1}-i=0).$$

The transitions $t_{\gamma}, \ldots, t_{\mu}$ are of the form, for $\gamma -1 \leq l < \mu$, $(p_{l+1}, T_{l+1}, y_{l+1}) \in \delta(p_l, d_l, 1)$, and for $\gamma - 1 \leq l  < \mu -1$ ($t_{\mu}$ is the last decreasing transition), creates transitions in steps \ref{itm:subtypeA}, \ref{itm:subtypeB}, and \ref{itm:subtypeC} of the form $$((p_{l+1}, q_{l+1}^c),  T_{l+1},0) \in \delta\down((p_l, q_l^c), d_l, 0),$$
for some $q_l^c, q_{l+1}^c \in Q^c$.

Then,
$$((p_{\gamma-1},q_0^c),w_{\gamma -1} \lhd, 0) \vdash_{\overline{M^{q,q'}\down}} \cdots \vdash_{\overline{M^{q,q'}\down}} ((p_{\mu-1}, q_{\mu-1}^c),w_{\mu-1} \lhd,0),$$
where there are exactly $i-1$ decreasing transitions being
simulated in this sequence. From $q_{\mu -1}^c$, reading one more $a$,
$\delta^c(q_{\mu-1},a) \in F^c$ since $q'a^i \in F^c$, and thus $(p_{\mu},T_{\mu},y_{\mu}) \in \delta(p_{\mu-1},d_{\mu-1},1)$ creates 
$(p_{\mu},T_{\mu},y_{\mu}) \in \delta\down((p_{\mu-1},q_{\mu-1}^c),d_{\mu-1},0)$ in step \ref{itm:subtypeA} or \ref{itm:subtypeC}.

Then there remains transitions $t_{\mu+1}, \ldots, t_{m}$, for $\mu \leq l < m$ of the form $(p_{l+1},  T_{l+1},0) \in \delta(p_l, d_l,0)$.
These transitions are all in $\delta\down$ and thus
$$(p_{\mu}, w_{\mu}\lhd ,0) \vdash^*_{\overline{M^{q,q'}\down}} (p_m = q_f, \lhd,0),$$
and hence $w\in L(\overline{M^{q,q'}\down})$.

\qed \end{proof}

The converse can be seen by examining an arbitrary computation of $\overline{M^{q,q'}\down}$ which must have two components in the states, corresponding
to a simulation of $M$ in the first component and $M^c$ in the second component, until some
configuration where it switches to one component, continuing the simulation of $M$. The number of transitions used that simulate the
reading of an $a$ from the second component must be some $i$, where $q' a^i \in L^c$, and therefore a computation of $M$ can proceed
as in the simulation starting with $i$ in the counter, and reach a final state. The formal proof is presented next.
\begin{claim}
\label{claim:supersetlang}
For all $q\in Q\down, q' \in Q$, $$L(\overline{M^{q,q'}\down}) \subseteq \{w \mid \exists i > 0, (q,w \lhd,i) \vdash_M^* (q_f, \lhd, 0), q_f \in F, q'a^i \in L^c\}.$$
\end{claim}
\begin{proof}
Let $w \in L(\overline{M^{q,q'}\down}), q \in Q\down, q' \in Q$. Let $\mu$ ($\mu$ is the last position
of the derivation with an ordered pair as state), $p_l, w_l, 0 \leq l \leq m$, and $q_j^c, 0 \leq j \leq \mu < m$ be such that $p_0 = q, w_0=w, w_m = \lambda, q_m \in F,$ and 
$$((p_l,q_l^c), w_l \lhd, 0) \vdash_{\overline{M^{q,q'}\down}} ((p_{l+1}, q_{l+1}^c), w_{l+1} \lhd, 0),$$ 
for $0 \leq l < \mu$, via transition $t_{l+1}$ of the form $((p_{l+1},q_{l+1}^c),T_{l+1},0) \in \delta\down((p_l,q_l^c),d_l,0)$, and 
$$((p_{\mu},q_{\mu}^c),w_{\mu} \lhd ,0) \vdash_{\overline{M^{q,q'}\down}} (p_{\mu+1},w_{\mu+1} \lhd, 0),$$ 
via transition $t_{\mu+1}$ of the form $(p_{\mu+1},T_{\mu+1},0) \in \delta\down((p_{\mu},q_{\mu}^c),d_{\mu},0)$ and $$(p_l,w_l \lhd,0) \vdash_{\overline{M^{q,q'}\down}} (p_{l+1},w_{l+1} \lhd, 0),$$
for $\mu+1 \leq l < m$ via transitions $t_{l+1}$ of the form $(p_{l+1}, T_{l+1},0 ) \in \delta\down(p_l,d_l,0)$.
Let $i$ be the number of times transitions created in step \ref{itm:subtypeA} or \ref{itm:subtypeC} are applied. Then by the transition $t_{\mu +1}$, this implies $q'a^i \in F^c$. Then, this implies that there are transitions 
$(p_{l+1}, T_{l+1},y_{l+1}) \in \delta(p_l,d_l,1)$, for all $l, 0 \leq l \leq \mu$, with $i$ decreasing transitions and
$(p_{l+1}, T_{l+1},0) \in \delta(p_l,d_l,0)$, for all $l, \mu+ 1 \leq l < m$, by the construction. Hence, the claim follows.
\qed \end{proof}

We let $M^{q,q'}= (1,Q^{q,q'},\Sigma,\lhd, \delta^{q,q'}\down,s^{q,q'}\down, F^{q,q'}\down)$ be a $\DCM(1,1)$ machine (that is hence deterministic) accepting $L(\overline{M^{q,q'}})$ that never uses the counter, which can be created since it is regular. Assume all the sets of states of different machines $Q^{q,q'}\down$ are disjoint.

Then, to prove Equation (\ref{bigequation3}), only machines $M^{q}\down = M^{q,q}\down, q \in Q\down$ accepting the languages 
$L^{q,q}\down, q \in Q\down$ need to be considered, and they are all indeed regular.

The construction for $M^q\up$ will be given next, and it will use the transitions from the machines $M^{r,q}\down$ within it.
Intuitively, $M^q\up$ will simulate computations of $M$ that start from configuration $(q,u\lhd, i)$ up to a maximum counter value of
$\alpha$ and back to counter value $i$ again. However, these computations are simulated by 
starting at a counter value of $0$ instead of $i$ (from $(q,u\lhd,0)$) to a maximum of $\alpha-i$ (instead of $\alpha$),
back to $0$ again (instead of $i$), ending at a configuration of the form $(r,u'\lhd,0)$. Thus, the simulated computation takes place with $i$ subtracted from each counter value
of each configuration.
Then, $M^q\up$ uses the machine $M^{r,q}\down$ to test if the rest of the input can be accepted starting at $r$ with any counter value that can reach $q$ by using words in $L^c$ that start with $q$.

Formally, for $q \in Q\up$,
$M^q\up = (1,P\up, \Sigma,\lhd,\delta\up, s^q\up,F\up)$, where $P\up = Q\cup \bigcup_{r \in Q} Q^{r,q}\down, s^q\up = q, F\up = \bigcup_{r \in Q\down}F^{r,q}$, where $Q$ is disjoint from other states.

The transitions of $\delta\up$ are created by the following algorithm:
\begin{enumerate}
\item\label{itm:fromoriginal} For all transitions $(p,T,y) \in \delta(r,d,1), p,r \in Q, d\in \Sigma \cup \{\lhd\}, T \in \{{\rm S}, {\rm R}\}, y \in \{-1,0,1\}$, create
$$(p,T,y) \in \delta\up(r,d,e),$$
for both $e=1$, and $e=0$ if $r \in Q\up$,
\item\label{itm:switchtobigstate} Create $(s^{r,q}\down,{\rm S},0) \in \delta\up(r,d,0)$, for all $ d\in \Sigma \cup \{\lhd\}$, and for all $ r \in Q\down$,
\item\label{itm:fromdown} Add all transitions from $M^{s,q}\down, s \in Q\down$.

\end{enumerate}

Indeed, $M^q\up$ is deterministic as those transitions created in step \ref{itm:fromoriginal} are in $M$, and $M^{s,p}\down$ is deterministic, for all $s,p$.

\begin{claim}
For all $q\in Q\up$, $$\{w \mid \exists i >0, (q,w \lhd,i) \vdash_M^* (q_f, \lhd, 0), q_f \in F, qa^i \in L^c\} \subseteq L^q\up.$$
\end{claim}
\begin{proof}
Let $q \in Q\up$. 
Let $w$ be such that there exists $i>0, q_f\in F, qa^i \in L^c$, and $(q,w\lhd,i) \vdash_M^* (q_f, \lhd,0)$. Let $p_j,w_j,x_j, 0 \leq j \leq m$ be such that $p_0=q, w=w_0, x_0=i, q_f = p_m, \lambda = w_m,x_m=0$ and $(p_l,w_l\lhd,x_l)\vdash_M (p_{l+1},w_{l+1}\lhd,x_{l+1}), 0 \leq l <m$, via transition $t_{l+1}$. Assume that there exists $\alpha >1$ such that $x_{\alpha} > i$, and let $\alpha$ be the smallest such number. Then, there exists
$$(p_0,w_0\lhd,x_0)\vdash_M^* (p_{\alpha},w_{\alpha}\lhd,x_{\alpha}) \vdash_M^* (p_{\beta},w_{\beta}\lhd,x_{\beta})\vdash_M^* (p_m, w_m\lhd,x_m),$$
where $\beta$ is the smallest number bigger than $\alpha$ such that $x_{\beta} = i$. In this case, in step \ref{itm:fromoriginal} of the algorithm, transitions $t_1, \ldots, t_{\alpha}$ of the form $(p_l,T_l,y_l) \in \delta(p_{l-1}, d_{l-1},1), 0 < l \leq \alpha$, create transitions of the form $(p_l,T_l,y_l) \in \delta\up(p_{l-1}, d_{l-1},0)$, and thus 
$$ (p_0,w_0\lhd,x_0-i=0) \vdash_{M^q\up}^* (p_{\alpha-1},w_{\alpha-1}\lhd, x_{\alpha-1} - i = 0)\vdash_{M^q\up} (p_{\alpha},w_{\alpha}\lhd,x_{\alpha}-i),$$ where $x_{\alpha} - i > 0$.

In step \ref{itm:fromoriginal} of the algorithm, transitions $t_{\alpha+1}, \ldots, t_{\beta}$ of the form $(p_l,T_l,y_l ) \in \delta(p_{l-1},d_{l-1},1), \alpha < l \leq \beta$ create transitions of the form $$(p_l,T_l, y_l) \in \delta\up(p_{l-1}, d_{l-1}, 1).$$ Thus,
$(p_{\alpha},w_{\alpha}\lhd, x_{\alpha}-i) \vdash_{M^q\up}^* (p_{\beta}, w_{\beta}\lhd, x_{\beta}-i = 0)$, since $x_{\alpha}-i, \ldots, x_{\beta-1}-i$ are all greater than $0$.
Then, using a transition of type \ref{itm:switchtobigstate}, $(p_{\beta}, w_{\beta}\lhd,0)\vdash_{M^q\up} (s^{p_{\beta},q}\down,w_{\beta}\lhd,0)$. Then since
$(p_{\beta},w_{\beta}\lhd, x_{\beta}) \vdash_M^* (p_m,\lhd,0), p_m \in F$, and $p_{\beta} \in Q\down, qa^i \in L^c$, then $w_{\beta} \in L^{p_{\beta},q}\down$, by Claim \ref{claim:subsetlang}. Hence,
$$(s^{p_{\beta},q}\down,w_{\beta}\lhd,0) \vdash^*_{M^{p_{\beta},q}\down} (q_f',\lhd,0),$$
$q_f' \in F$, and therefore, this occurs in $M^q\up$ as well.

Lastly, the case where there does not exist an $\alpha >i$ such that $x_{\alpha}>i$ (thus $i$ is the highest value in counter) is similar, by applying transitions of type \ref{itm:fromoriginal} until the transitions before the first decrease (the first time a state from $Q\down$
is reached), then a transitions of type \ref{itm:switchtobigstate}, followed by a sequence of type \ref{itm:fromdown} transitions as above.

\qed \end{proof}

The reverse containment can be shown by examining any accepting sequence of configurations, which has some initial simulation of $M$, 
followed by a computation of a machine $M\down^{q',q}$. The initial simulation can occur in $M$ with $i >0 $ added to each counter
value, and the correctness of the remaining portion of $M\down^{q',q}$ follows from Claim \ref{claim:supersetlang}.

\begin{claim}
For all $q\in Q\up$, $$L^q\up \subseteq \{w \mid \exists i >0, (q,w\lhd,i) \vdash_M^* (q_f, \lhd, 0), q_f \in F, qa^i \in L^c\}.$$
\end{claim}
\begin{proof}
Let $w \in L(M^q\up)$. 
Then $$(q,w\lhd,0)\vdash_{M^q\up}^* (q', w'\lhd,0) \vdash_{M^q\up} ((q',\delta^c(s_0^c,q)),w'\lhd,0) \vdash_{M^q\up}^* (q_f',\lhd,0),$$ where $q_f' \in F^{q',q}$.
Let $\beta, p_l, w_l, x_l, 0 \leq l \leq \beta$ be such that $p_0=q, w_0=w, x_0=0, q' = p_{\beta}, w' = w_{\beta}, x_{\beta}=0$
such that $(p_l,w_l\lhd,x_l) \vdash_{M^q\up} (p_{l+1},w_{l+1}\lhd, x_{l+1}), 0 \leq l < \beta$.

Then $w' \in L^{q',q}\down$, and therefore by Claim \ref{claim:supersetlang}, there exists $i > 0$ such that $(q',w'\lhd,i) \vdash_M^* (q_f, \lhd, 0), q_f \in F, qa^i \in L^c$. By the construction in step \ref{itm:fromoriginal},
$$(p_0,w_0\lhd, x_0+i) \vdash_M \cdots \vdash_M (p_{\beta},w_{\beta}\lhd,x_{\beta}+i),$$
and since $x_0 = x_{\beta}=0$ and $w' = w_{\beta}$ and $q' = p_{\beta}$, then
$(q,w\lhd,i) \vdash_M^* (q_f,\lhd,0)$ and $qa^i \in L^c$ and the claim follows.

\qed \end{proof}
Hence, Equation \ref{bigequation2} holds.

It is also known that $\DCM$ is closed under union (by increasing the number of counters) \cite{Ibarra1978}. Therefore, the finite union is in $\DCM$.

\qed \end{proof}

From this, we obtain the following general result.
\begin{proposition}
Let $L \in \DCM(1,1), L_1, L_2 \in \NPCM$. Then both
$(L_1^{-1}L)L_2^{-1}$ and $L_1^{-1}(L L_2^{-1})$ are a finite union of languages in $\DCM(1,1)$. Furthermore, both languages are in $\DCM$.
\end{proposition}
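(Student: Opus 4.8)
The plan is to derive both statements as immediate consequences of the two main closure results already established: Proposition~\ref{rightquotientwithNPCM}, which says that the right quotient of a $\DCM(1,l)$ language by an $\NPCM$ language stays in $\DCM(1,l)$, and Proposition~\ref{leftquotientwithNPCM}, which says that the left quotient of a $\DCM(1,1)$ language by an $\NPCM$ language is a finite union of $\DCM(1,1)$ languages (and lies in $\DCM$). The only auxiliary fact I need is that right quotient distributes over finite union, namely $\bigl(\bigcup_i K_i\bigr) L_2^{-1} = \bigcup_i \bigl(K_i L_2^{-1}\bigr)$, which is immediate from the definition, since $xy$ belongs to $\bigcup_i K_i$ exactly when $xy \in K_i$ for some $i$.

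For $(L_1^{-1}L)L_2^{-1}$, I would first apply Proposition~\ref{leftquotientwithNPCM} to $L \in \DCM(1,1)$ and $L_1 \in \NPCM$, writing $L_1^{-1}L = \bigcup_{i=1}^{n} K_i$ with each $K_i \in \DCM(1,1)$. Distributing the right quotient over this finite union gives $(L_1^{-1}L)L_2^{-1} = \bigcup_{i=1}^{n} \bigl(K_i L_2^{-1}\bigr)$. Since each $K_i \in \DCM(1,1)$ and $L_2 \in \NPCM$, Proposition~\ref{rightquotientwithNPCM} (with $l=1$) yields $K_i L_2^{-1} \in \DCM(1,1)$, so the expression is a finite union of $\DCM(1,1)$ languages. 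For $L_1^{-1}(L L_2^{-1})$, I would apply the same two results in the opposite order: Proposition~\ref{rightquotientwithNPCM} with $l=1$ first gives $L L_2^{-1} \in \DCM(1,1)$, and then Proposition~\ref{leftquotientwithNPCM} applied to the $\DCM(1,1)$ language $L L_2^{-1}$ and $L_1 \in \NPCM$ directly yields that $L_1^{-1}(L L_2^{-1})$ is a finite union of $\DCM(1,1)$ languages. In both cases, closure of $\DCM$ under union \cite{Ibarra1978} (by increasing the number of counters) places the finite union in $\DCM$, giving the second sentence.

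There is essentially no hard step here; the argument is just a compositional bookkeeping of the two established closure results, together with distributivity of right quotient over union. The single point demanding care is that each intermediate result keeps us inside $\DCM(1,1)$ rather than merely $\DCM$: Proposition~\ref{rightquotientwithNPCM} preserves the one $1$-reversal-bounded counter, and Proposition~\ref{leftquotientwithNPCM} is stated for $\DCM(1,1)$ inputs, so the hypotheses of the subsequent operation are met in each composition. Had only the weaker $\DCM$-level right-quotient closure (Proposition~\ref{fullQuotientClosure}) been available, the second composition would fail, because Proposition~\ref{leftquotientwithNPCM} insists that its argument be accepted with a single $1$-reversal-bounded counter.
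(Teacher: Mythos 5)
Your proposal is correct and follows essentially the same route as the paper's own proof: decompose $L_1^{-1}L$ via Proposition~\ref{leftquotientwithNPCM}, distribute the right quotient over the finite union, and apply Proposition~\ref{rightquotientwithNPCM} to each piece (and the two results in the reverse order for $L_1^{-1}(LL_2^{-1})$), finishing with closure of $\DCM$ under union. The paper merely spells out the distributivity identity by an explicit element-chasing argument where you call it immediate, which is a cosmetic difference only.
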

\begin{proof}
It will first be shown that $(L_1^{-1}L)L_2^{-1}$ is the finite union of languages in $\DCM(1,1)$.
Indeed, $L_1^{-1}L$ is the finite union of languages in
$\DCM(1,1), 1 \leq i \leq k$ by Proposition \ref{leftquotientwithNPCM}, and so $L_1^{-1}L = \bigcup_{i=1}^k X_i$ for $X_i \in \DCM(1,1)$. Further, for each $i$, $X_i L_2^{-1}$ is the finite union of $\DCM(1,1)$ languages by Proposition \ref{rightquotientwithNPCM}.

It remains to show that $\bigcup_{i=1}^k X_i L_2^{-1} =  (L_1^{-1}L)L_2^{-1}$. 
If $w\in \bigcup_{i=1}^k X_i L_2^{-1}$, then $w \in X_i L_2^{-1}$ for some $i$, $1 \leq i \leq k$, then $wy \in X_i, y \in  L_2$. Then $wy \in L_1^{-1}L$, and $w \in (L_1^{-1} L) L_2^{-1}$.
Conversely, if $w \in (L_1^{-1}L)L_2^{-1}$, then $wy \in L_1^{-1}L$ for some $y \in L_2$, and so $wy \in X_i$ for some $i$, $1 \leq i \leq k$, and thus $w \in X_i L_2^{-1}$.

For $L_1^{-1}(L L_2^{-1})$,
it is true that $L L_2^{-1} \in \DCM(1,1)$ by Proposition \ref{rightquotientwithNPCM}. Then $L_1^{-1}(L L_2^{-1})$ is the finite union of $\DCM(1,1)$ by Proposition \ref{leftquotientwithNPCM}.

It is also known that $\DCM$ is closed under union (by increasing the number of counters) \cite{Ibarra1978}. Therefore, both finite unions are in $\DCM$.
\qed \end{proof}
And, as with Corollary \ref{generalizedSemilinear}, this can be generalized to any language families that are reversal-bounded counter augmentable.

\begin{corollary}
\label{evenMoreGeneralSemilinear}
Let $L \in \DCM(1,1), L_1 \in \mathscr{F}_1, L_2 \in \mathscr{F}_2$, where $\mathscr{F}_1$ and $\mathscr{F}_2$ are any families of languages that are reversal-bounded counter augmentable.
Then 
$(L_1^{-1}L)L_2^{-1}$ and $L_1^{-1}(L L_2^{-1})$ are both a finite union of languages in $\DCM(1,1)$. Furthermore, both languages are in $\DCM$.
\end{corollary}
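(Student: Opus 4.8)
The plan is to reduce Corollary \ref{evenMoreGeneralSemilinear} to the already-established Proposition immediately preceding it (the one asserting that $(L_1^{-1}L)L_2^{-1}$ and $L_1^{-1}(L L_2^{-1})$ are finite unions of $\DCM(1,1)$ languages when $L_1,L_2\in\NPCM$), by replacing every appeal to $\NPCM$ with the corresponding abstract property guaranteed by reversal-bounded counter augmentability. The point is that Propositions \ref{rightquotientwithNPCM} and \ref{leftquotientwithNPCM} never really used the full strength of $\NPCM$; they used only (i) semilinearity of the relevant intermediate languages $L_c(q)$ and $L^c$, and (ii) the ability to form, from a $\DCM$ machine $M_1$ and the quotienting language $L_2$, the two ``reachability'' languages displayed in Definition \ref{counteraugmentable}. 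Both of these are exactly what Definition \ref{counteraugmentable} provides for any reversal-bounded counter augmentable family.

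Concretely, I would first invoke Corollary \ref{generalizedSemilinear} to handle the right-quotient halves: since $\mathscr{F}_2$ is reversal-bounded counter augmentable, $L L_2^{-1}\in\DCM(1,1)$ (using that $L\in\DCM(1,1)$ has one counter, so the intermediate language is unary and regular, exactly as in Proposition \ref{rightquotientwithNPCM}), and likewise each $X_i L_2^{-1}$ is a finite union of $\DCM(1,1)$ languages. Next I would argue that the left-quotient construction of Proposition \ref{leftquotientwithNPCM} goes through verbatim with $\mathscr{F}_1$ in place of $\NPCM$: the only place $\NPCM$ entered was in constructing $M'$ accepting $L^c$ and concluding that each $q^{-1}L^c$ is unary and semilinear, hence regular. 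The third bullet of Definition \ref{counteraugmentable} is precisely the statement that the forward-reachability language $\{a_1^{p_1}\cdots a_k^{p_k}\mid \exists x\in L_2,\ (q_0,x,0,\ldots,0)\vdash_{M_1}^*(q,\lambda,p_1,\ldots,p_k)\}$ lies in $\mathscr{F}_1$, and since every member of $\mathscr{F}_1$ is effectively semilinear, its unary projections are regular by Parikh's theorem, so $L^c$ is again regular. With $L^c$ regular, the entire remaining machinery of Proposition \ref{leftquotientwithNPCM} (the $\DCM(1,1)$ machines $M_0^q$, $M^q\up$, $M^q\down$ and Claims \ref{claim:subsetlang}, \ref{claim:supersetlang}) is unchanged, since it depends only on $L^c$ being regular and on $L\in\DCM(1,1)$, not on how $L^c$ was obtained.

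Having established the two generalized building blocks, I would reassemble them exactly as in the preceding Proposition. For $(L_1^{-1}L)L_2^{-1}$: write $L_1^{-1}L=\bigcup_{i=1}^k X_i$ with each $X_i\in\DCM(1,1)$ (generalized Proposition \ref{leftquotientwithNPCM}), apply the generalized Proposition \ref{rightquotientwithNPCM} to each $X_i$, and use the set identity $\bigcup_i X_i L_2^{-1}=(L_1^{-1}L)L_2^{-1}$ verified in the preceding proof. For $L_1^{-1}(L L_2^{-1})$: first get $L L_2^{-1}\in\DCM(1,1)$, then take its left quotient by $L_1$ to obtain a finite union of $\DCM(1,1)$ languages. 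In both cases closure of $\DCM$ under union \cite{Ibarra1978} yields membership in $\DCM$.

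The main obstacle is a bookkeeping one rather than a conceptual one: I must verify that the two reachability languages supplied by Definition \ref{counteraugmentable} are exactly the languages the original constructions needed, and in particular check the direction conventions. The left-quotient construction needs a \emph{forward} reachability language (from the initial state $q_0$ reading a word of $L_2$ and arriving at state $q$ with counter contents $(p_1,\ldots,p_k)$), which is the third bullet of Definition \ref{counteraugmentable}; the right-quotient construction needs the \emph{backward/acceptance} reachability language (from state $q$ and counter contents $(p_1,\ldots,p_k)$, reading some $x\in L_2$ to reach a final state), which is the second bullet. I would take care to match each construction to the correct bullet, and to note that effective semilinearity plus unariness (forced because $L\in\DCM(1,1)$ has a single counter) is what degrades these $\mathscr{F}$-languages to regular ones, which is the crucial fact keeping the counter count at one; this is why the hypothesis $L\in\DCM(1,1)$, rather than merely $L\in\DCM$, is essential.
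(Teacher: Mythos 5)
Your proposal is correct and follows essentially the same route the paper intends: the paper gives no explicit proof of this corollary, treating it as the direct generalization of the preceding proposition obtained by replacing each use of $\NPCM$ with the corresponding bullet of Definition \ref{counteraugmentable} (the second bullet for the right-quotient languages $L_c(q)$, the third for the forward-reachability language $L^c$ in the left-quotient construction), exactly as you spell out. Your observation that effective semilinearity plus unariness of these intermediate languages is all that is needed to recover regularity, and hence to keep the rest of the machinery of Propositions \ref{rightquotientwithNPCM} and \ref{leftquotientwithNPCM} unchanged, is precisely the point.
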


As a special case, when using the fixed regular language $\Sigma^*$ for the right and left quotient, we obtain:
\begin{corollary}
\label{suffinfDCM}
Let $L \in \DCM(1,1)$. Then 
$\suff(L)$ and $\infx(L)$ are both $\DCM$ languages. 
\end{corollary}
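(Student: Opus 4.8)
The plan is to derive both claims as immediate special cases of the general quotient results already established, by recognizing that the suffix and infix operations are quotients with the fixed regular language $\Sigma^*$. Recall from the preliminaries that $\suff(L) = (\Sigma^*)^{-1} L$, the left quotient of $L$ by $\Sigma^*$, and that $\infx(L) = (\Sigma^*)^{-1}(L(\Sigma^*)^{-1})$, since taking infixes amounts to first stripping an arbitrary suffix and then an arbitrary prefix (or vice versa). Both $\Sigma^*$ on the left and $\Sigma^*$ on the right are regular languages, and since $\REG$ is a full trio (indeed $\REG \subseteq \NPCM$), the language $\Sigma^*$ lies in the family to which the preceding propositions apply.

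First I would handle $\suff(L)$. Since $L \in \DCM(1,1)$ and $\Sigma^* \in \NPCM$, Proposition \ref{leftquotientwithNPCM} applied with $L_1 = \Sigma^*$ gives that $(\Sigma^*)^{-1} L = \suff(L)$ is a finite union of $\DCM(1,1)$ languages, and furthermore lies in $\DCM$. This is a one-line invocation once the identity $\suff(L) = (\Sigma^*)^{-1} L$ is noted.

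Next I would handle $\infx(L)$. Writing $\infx(L) = (\Sigma^*)^{-1}\bigl(L (\Sigma^*)^{-1}\bigr)$, I would take $L_1 = L_2 = \Sigma^*$, both members of $\NPCM$, and apply the combined result immediately preceding this corollary (the proposition stating that $L_1^{-1}(L L_2^{-1})$ is a finite union of $\DCM(1,1)$ languages and lies in $\DCM$). That proposition was in turn assembled from Proposition \ref{rightquotientwithNPCM} (handling the inner right quotient, keeping us in $\DCM(1,1)$) and Proposition \ref{leftquotientwithNPCM} (handling the outer left quotient). This directly yields $\infx(L) \in \DCM$.

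The only genuine content here is verifying the two set-theoretic identities $\suff(L) = (\Sigma^*)^{-1} L$ and $\infx(L) = (\Sigma^*)^{-1}(L (\Sigma^*)^{-1})$; the first is stated explicitly in the preliminaries, and the second is a routine unfolding of the definitions in Definition \ref{def:opGeneralize}. I do not anticipate a real obstacle, since all the machinery is already in place; the corollary is essentially an observation that the fixed regular quotients specialize the general theorems. The main thing to be careful about is simply citing the correct preceding result for each operation and noting that $\Sigma^* \in \NPCM$ so that the hypotheses are met.
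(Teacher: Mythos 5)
Your proposal is correct and matches the paper's own derivation: the corollary is obtained there exactly as the special case $L_1 = L_2 = \Sigma^*$ of the preceding quotient results, using $\suff(L) = (\Sigma^*)^{-1}L$ via Proposition \ref{leftquotientwithNPCM} and $\infx(L) = (\Sigma^*)^{-1}(L(\Sigma^*)^{-1})$ via the combined proposition. Nothing is missing.
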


It is however sometimes necessary that the number of counters increase to accept $\suff(L)$ and $\infx(L)$, when $L \in \DCM(1,1)$ as seen from the next Proposition indicating that the suffix, infix, and outfix of a $\DCM(1,1)$ language can be outside of $\DCM(1,1)$.
\begin{proposition}
\label{suff11}
There exists $L \in \DCM(1,1)$ where all of $\suff(L), \infx(L), \outf(L)$ are not in $\DCM(1,1)$.
\end{proposition}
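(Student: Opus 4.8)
The plan is to exhibit a single concrete language $L \in \DCM(1,1)$ and then show that none of $\suff(L)$, $\infx(L)$, $\outf(L)$ can be recognized by a one-counter, one-reversal deterministic machine. A natural candidate is a language built over two block structures so that the suffix operation forces a machine to choose nondeterministically which block it has entered, while the single reversal-bounded counter is insufficient to track the two independent numeric constraints this creates. I would try something like $L = \{c\, a^n b^n \mid n \geq 1\} \cup \{d\, a^n b^{2n} \mid n \geq 1\}$, or a variant where a marker lets a $\DCM(1,1)$ machine accept $L$ easily (read the leading symbol, store $a$'s, then match against $b$'s with the appropriate ratio), but where stripping prefixes collapses the disambiguating marker and mixes the two counting regimes.

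First I would verify the easy direction: that $L \in \DCM(1,1)$. This is a routine check — the leading symbol $c$ or $d$ selects one of two deterministic modes, each of which reads $a^n$ into the single counter (one increasing phase) and then decrements while reading $b$'s (one decreasing phase), so the counter makes exactly one reversal. Second, I would compute $\suff(L)$ explicitly and identify the offending sublanguage obtained by intersecting with a suitable regular language $a^* b^*$; since $\DCM(1,1)$ is closed under intersection with regular languages, it suffices to show this intersection is not in $\DCM(1,1)$. The key observation should be that a suffix can begin partway through an $a$-block, so $\suff(L) \cap a^* b^*$ contains words $a^i b^n$ and $a^i b^{2n}$ with varying offsets, and recognizing membership requires simultaneously remembering how far into a block we are and which ratio applies.

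The main obstacle, and the heart of the argument, is the lower bound: proving that $\suff(L)$ (and then $\infx(L)$, $\outf(L)$) genuinely falls outside $\DCM(1,1)$. Since a $\DCM(1,1)$ machine has only one counter making one reversal, I expect to use a pumping-style or counting argument tailored to such machines — intuitively, after the single increasing phase the counter holds one number, and one number cannot simultaneously certify two incompatible linear relations (such as both $b$-count $= a$-count and $b$-count $= 2\cdot a$-count) for the disjunctive suffix language. I would formalize this by a crossing-sequence or Parikh-image argument on the at-most-one-reversal counter, possibly invoking a known non-closure technique such as the one referenced earlier in the paper (\cite{Chiniforooshan2012}) for showing languages lie outside $\DCM$. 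Once non-membership in $\DCM(1,1)$ is established for $\suff(L)$, the results for $\infx(L)$ and $\outf(L)$ should follow either by the same witness (since $\infx$ and $\outf$ contain the relevant suffix fragments after intersecting with an appropriate regular language) or by a minor adaptation of the same counting obstruction. The delicate point will be ensuring the chosen $L$ makes all three operations fail simultaneously with essentially one lower-bound argument, rather than needing three separate constructions.
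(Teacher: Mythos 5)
There is a genuine gap, and it begins with the witness language itself. For your candidate $L = \{c\,a^nb^n \mid n \ge 1\} \cup \{d\,a^nb^{2n} \mid n \ge 1\}$, the suffix language is in fact \emph{in} $\DCM(1,1)$, so no lower-bound argument can succeed. Concretely, the suffixes lying in $a^+b^+$ are $\{a^ib^n \mid 1 \le i \le n\} \cup \{a^ib^{2n} \mid 1 \le i \le n\}$; since $i \le n = m/2 \le m$ in the second set, it is absorbed by the first, and the union is just $\{a^ib^m \mid 1 \le i \le m\}$ --- a language a deterministic one-counter, one-reversal machine accepts easily. The remaining suffixes are classified by their first symbol ($c$, $d$, $a$, $b$, or $\lambda$), each class being $\DCM(1,1)$, so the whole of $\suff(L)$ is $\DCM(1,1)$. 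The failure mode is exactly that the suffix operation may delete \emph{part} of the $a$-block, degrading both exact counting constraints to the single inequality $i \le m$; your two ``incompatible linear relations'' never actually have to be certified simultaneously. Separately, the lower bound you sketch (a pumping or crossing-sequence argument against one-reversal one-counter machines) is left entirely undeveloped, and it is the hard part: no such tool is established in the paper, and the technique you cite is for showing languages outside all of $\DCM$, which is far stronger than what you need and not obviously applicable here.

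The paper avoids any bespoke lower bound by a closure-property reduction. It takes $L_1 = \{a^nb^nc^k\}$, $L_2 = \{a^nb^mc^m\}$, $L_3 = \{a^nb^mc^k\}$, whose intersection is the non-context-free $\{a^nb^nc^n\}$, and sets $L' = d\#_1\overline{L_1}\#_2 \cup e\#_1\overline{L_2}\#_2 \cup f\#_1\overline{L_3}\#_2$. The distinct lead markers $d,e,f$ make this marked union a $\DCM(1,1)$ language (using closure of $\DCM(1,1)$ under complement), while $\suff(L') \cap \#_1\Sigma^*\#_2 = \#_1(\overline{L_1}\cup\overline{L_2}\cup\overline{L_3})\#_2 = \#_1\overline{\{a^nb^nc^n\}}\#_2$: the suffix operation erases only the marker because everything after it is protected between $\#_1$ and $\#_2$. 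Closure of $\DCM(1,1)$ under intersection with regular sets, quotient by a single symbol, and complement then forces $\{a^nb^nc^n\} \in \DCM(1,1)$, contradicting non-context-freeness; the same intersection works verbatim for $\infx$ and $\outf$. If you want to salvage your approach, you must (i) protect the counted blocks from partial deletion with end-markers so that only the disambiguating symbol can be erased, and (ii) arrange for the unmarked union to be provably hard by some criterion you already possess --- which is precisely what the paper's construction does.
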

\begin{proof}
Assume otherwise. Let $L = \{a^n b^n c^n \mid n \geq 0\}, L_1 = \{  a^n b^n c^k  \mid n,k \geq 0\}, L_2=\{a^n b^m c^m \mid n,m \geq 0\}, L_3 = \{a^n b^m c^k  \mid n,m,k \geq 0\}$. Let $\Sigma = \{a,b,c\}$ and $\Gamma = \{d,e,f\}$.

It is well-known that $L$ is not a context-free language, and therefore is not a $\DCM(1,1)$ language. However, each of $L_1, L_2, L_3$ are $\DCM(1,1)$ languages, and therefore, so are $\overline{L_1}, \overline{L_2}, \overline{L_3}$ \cite{Ibarra1978} and so is $L' = d \#_1\overline{L_1} \#_2 \cup e \#_1\overline{L_2} \#_2 \cup f \#_1 \overline{L_3} \#_2$ (all complements with respect to $\Sigma^*$). 
The symbols $d,e,f$ are needed here, as each deterministically 
triggers the computation of a different $\DCM(1,1)$ machine so that the resulting machine can
be a $\DCM(1,1)$ machine (although $\DCM$ is closed under union, this closure can increase the number of counters; but this type of marked
union does not increase the number of counters).
It can also be seen that $\overline{L} = \overline{L_1} \cup \overline{L_2} \cup \overline{L_3}$.

But $\suff(L') \cap \#_1\Sigma^* \#_2 =  \infx(L') \cap \#_1\Sigma^* \#_2 = \outf(L') \cap \#_1\Sigma^* \#_2 = \#_1 \overline{L} \#_2$, and since $\DCM(1,1)$ is closed under intersection with regular languages, under left and right quotient by a symbol, and under complement, this implies $L$ is a $\DCM(1,1)$ language, a contradiction.
\qed \end{proof}

\section{Non-Closure Under Suffix, Infix, and Outfix for Multi-Counter and Multi-Reversal Machines}
\label{sec:nonclosure}

In \cite{EIMInsertion2015}, a technique was used to show that languages are not in $\DCM \cup 2\DCM(1)$. 
The technique uses undecidable properties to show non-closure. As $2\DCM(1)$ machines have a two-way input and a reversal-bounded counter, it is difficult to derive ``pumping'' lemmas for these languages. Furthermore, unlike $\DCM$ and $\NCM$ machines, $2\DCM(1)$ machines can accept non-semilinear languages.  For example,
$L_1= \{a^i b^k ~|~ i, k \ge 2, i$ divides $k \}$ can
be accepted by a 2\DCM(1) whose counter makes only one reversal.
However, $L_2 = \{a^i b^j c^k ~|~ i,j,k \ge 2,
k = ij \}$ cannot be accepted by a 2\DCM(1) \cite{IbarraJiang}.
This technique from \cite{EIMInsertion2015} works as follows.
The proof uses the fact that there is a recursively enumerable
but not recursive
language $L_{\rm re} \subseteq \natzero$ that is accepted by a deterministic
2-counter machine \cite{Minsky}. 
Here, these machines do not have an input tape, and acceptance is defined
whereby $n \in \natzero$ is accepted (i.e., $n \in L_{\rm re}$) if and only if, 
when started with $n$ in the first counter (encoded in unary)
and zero in the second counter, 
$M$ eventually halts (hence, acceptance is by halting).

Examining the constructions  in \cite{Minsky}
of the 2-counter machine demonstrates
that the counters behave in a regular pattern. 
Initially one counter has some
value $d_1$ and the other counter is zero. Then, 
the machine's operation can be divided into phases, where each
phase starts with one of the counters equal to
some positive integer $d_i$ and the other counter equals 0.
During the phase, the positive counter decreases, while the other
counter increases. The phase ends with the first counter 
containing 0 and the other counter containing $d_{i+1}$.
In the next phase, the modes of the counters are interchanged.
Thus, a sequence of configurations where the phases are changing
 will be of the form:
$$(q_1, d_1, 0), (q_2, 0, d_2), (q_3, d_3, 0), (q_4, 0, d_4), (q_5, d_5, 0), (q_6, 0, d_6), \dots$$
where the $q_i$'s  are states, with $q_1 = q_s$ (the
initial state), and $d_1, d_2, d_3, \ldots$ are positive
integers. The second component of
the configuration refers to the value of the first counter, and the third
component refers to the value of the second.
Also, notice that in going from state $q_i$ in phase $i$ to 
state $q_{i+1}$ in phase $i+1$, the 2-counter machine goes
through intermediate states.

For each $i$, there are 5 cases for the value of $d_{i+1}$
in terms of $d_i$:
$d_{i+1} = d_i, 2d_i, 3d_i, d_i/2, d_i/3$ 
(the division operation only occurs if the number is divisible
by 2 or 3, respectively).
The case applied is determined by $q_i$.
Hence, a function $h$ can be defined such that
if $q_i$
is the state at the start of phase $i$,
$d_{i+1} = h(q_i)d_i$, where $h(q_i)$ is one of
$1, 2, 3, 1/2, 1/3$.

Let $T$ be a 2-counter machine accepting a recursively enumerable
language that is not recursive.  Assume
that $q_1=q_s$ is the initial state, which is never re-entered,
and  if $T$ halts, it does so in a unique state $q_h$.
Let $Q$ be the states of $T$, and $1$ be a new symbol.

In what follows, $\alpha$ is any sequence of
the form $\#I_1 \#I_2 \#\cdots\# I_{2m}\#$ (thus we assume that
the length is even), where for each $i$, $1 \leq i \leq 2m$,
$I_i = q1^k$ for some $q \in Q$ and $k \ge 1$, represents
a possible configuration of $T$ at
the beginning of phase $i$, where $q$ is the state and
$k$ is the value of the first counter (resp., the second) if $i$
is odd (resp., even).

Define $L_0$ to be the set of all strings $\alpha$ such that 
\begin{enumerate}
\item $\alpha = \#I_1 \#I_2\# \cdots \#I_{2m}\#$;
\item $m \ge 1$;
\item for $1 \le j \le 2m-1$, $I_j \Rightarrow I_{j+1}$, i.e.,
if $T$ begins in configuration $I_j$, then after one phase,
$T$ is in configuration $I_{j+1}$ (i.e., $I_{j+1}$
is a valid successor of $I_j$);
\end{enumerate}

Then, the following was shown in \cite{EIMInsertion2015}.
\begin{lemma} \label{lem1}
$L_0$ is not in $\DCM \cup 2\DCM(1)$.
\end{lemma}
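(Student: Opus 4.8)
The plan is to argue by contradiction using the computation-history technique that underlies the whole construction: if $L_0$ could be accepted by a machine in $\DCM \cup 2\DCM(1)$, then the non-recursive set $L_{\rm re}$ accepted by the $2$-counter machine $T$ would become recursive. I would begin by recording the properties already set up for $T$: it is deterministic, its initial state $q_s$ is never re-entered, and if it halts it does so in the unique state $q_h$. The key observation about $L_0$ is that it enforces only \emph{local consistency} of the phase sequence --- every $I_{j+1}$ must be the genuine one-phase successor of $I_j$ --- but it places no constraint on the first configuration and does not require $q_h$ to appear. This is exactly what makes $L_0$ a plausible candidate for a weak family, while still encoding the dynamics of $T$.

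Next I would pin down a single halting instance by intersecting with a regular language. For each $n \in \natzero$, let $R_n$ be the regular set of all strings whose first configuration is exactly $q_s 1^n$ and whose last configuration carries the state $q_h$ (fixing a parity convention, and modifying $T$ by a harmless dummy phase if necessary so that it always halts at an even phase, to match the even length $2m$ required in $L_0$). Because $T$ is deterministic, the phase successor of each configuration is unique, so starting from $q_s 1^n$ there is exactly one locally consistent sequence, namely the actual run of $T$ on $n$. Consequently $L_0 \cap R_n \neq \emptyset$ if and only if that run reaches $q_h$, i.e.\ if and only if $T$ halts on $n$, i.e.\ if and only if $n \in L_{\rm re}$.

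To finish, I would invoke closure and decidability. Both $\DCM$ and $2\DCM(1)$ are effectively closed under intersection with regular languages, so under the assumption $L_0 \in \DCM \cup 2\DCM(1)$ the language $L_0 \cap R_n$ lies in the same family and a machine for it can be constructed effectively from $n$. Emptiness is decidable for $\DCM$ \cite{Ibarra1978} and for $2\DCM(1)$ \cite{IbarraJiang}, so testing whether $L_0 \cap R_n = \emptyset$ would decide membership of $n$ in $L_{\rm re}$, contradicting the fact that $L_{\rm re}$ is not recursive. Since the argument only uses closure under intersection with regular languages and decidability of emptiness, it rules out $L_0 \in \DCM$ and $L_0 \in 2\DCM(1)$ simultaneously.

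The main obstacle I expect is the $2\DCM(1)$ half of the argument. Unlike $\DCM$, these two-way one-counter machines can accept non-semilinear languages, so no Parikh-style or semilinearity shortcut is available, and the whole reduction rests on the (non-trivial) decidability of the emptiness problem for $2\DCM(1)$. The other delicate point is purely bookkeeping: designing $R_n$ so that the unique validated phase sequence both begins at $q_s 1^n$ and terminates in $q_h$ at an even phase, which is why the small preprocessing of $T$ is convenient.
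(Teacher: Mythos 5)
Your proposal is correct and follows essentially the same route as the source the paper cites for this lemma (the paper itself defers the proof to \cite{EIMInsertion2015}): a reduction from the halting problem for the deterministic $2$-counter machine $T$, using the fact that both $\DCM$ and $2\DCM(1)$ are effectively closed under intersection with regular sets and have decidable emptiness, with the regular set pinning the first configuration to $q_s1^n$ and the last to a halting state so that determinism of $T$ makes the locally consistent sequence unique. The parity/bookkeeping issues you flag are exactly the ones the standard construction handles by padding $T$'s computation, so there is no gap.
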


We will use this language exactly to show that taking either the suffix, infix, or outfix of a language in $\DCM(1,3), \DCM(2,1)$, or $2\DCM(1)$ can produce languages that are in neither $\DCM$ nor $2\DCM(1)$.

\begin{proposition}
\label{NonclosureSuffix}
There exists a language  $L \in \DCM(1,3)$ (respectively $L \in \DCM(2,1)$, and $L \in 2\DCM(1)$) such that $\suff(L) \not \in \DCM \cup 2\DCM(1)$,
$\inf(L) \not \in \DCM \cup 2\DCM(1)$, and $\outf(L) \not \in \DCM
\cup 2\DCM(1)$.
\end{proposition}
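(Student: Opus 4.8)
The plan is to reduce to Lemma \ref{lem1} through intersection with a regular language, exploiting that the nondeterminism introduced by a deletion operation is exactly strong enough to \emph{guess the location of a single defect} in a candidate computation history. Concretely, since both $\DCM$ and $2\DCM(1)$ are closed under intersection with regular languages, it suffices, for each source class, to exhibit a witness $L$ in that class together with a regular language $R$ for which $\suff(L)\cap R$ (and likewise for $\infx$ and $\outf$) equals $R\setminus L_0$, the set of correctly-shaped phase sequences $\#I_1\#\cdots\#I_{2m}\#$ in which \emph{at least one} successor relation $I_j\Rightarrow I_{j+1}$ fails. I will fold the purely finite-state part of ``$I_j\Rightarrow I_{j+1}$'' (correctness of the state $q_{j+1}$ given $q_j$) into $R$, so that the only defects left to detect are the multiplicative counter relations $d_{j+1}=h(q_j)d_j$.

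For the one-way classes, here is the witness for $\suff$. Let $L$ consist of all strings $\$\,1^{j}\,\$\,\#I_1\#\cdots\#I_{2m}\#$ in which the pointer $1^{j}$ (with $1\le j\le 2m-1$) designates a pair whose counter relation is \emph{violated}, i.e.\ $d_{j+1}\neq h(q_j)d_j$. A machine for $L$ first loads $j$ onto a counter, then counts off $\#$'s to walk to block $j$ (the counter's first up-then-down), stores $d_j$, and finally rereads $I_{j+1}$ to test $d_{j+1}\neq h(q_j)d_j$ (a second up-then-down), accepting on a mismatch. This is one counter making three reversals, so $L\in\DCM(1,3)$; using one counter to locate block $j$ and a second to perform the comparison instead yields $L\in\DCM(2,1)$. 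Taking $R$ to be the regular set of strings $\#I_1\#\cdots\#I_{2m}\#$ with all state successions correct and containing no $\$$, deleting a prefix can only strip $\$1^{j}\$$, and the guessed $j$ ranges over all pair indices; hence $\suff(L)\cap R=\{\alpha\in R\mid \exists j,\ d_{j+1}\neq h(q_j)d_j\}=R\setminus L_0$.

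The $\infx$ and $\outf$ versions use the same idea with the pointer placed so that the operation exposes it. For $\infx$ I append a trailing end marker and let the operation delete both the pointer prefix and that marker; for $\outf$ I instead put a separator $\%$ \emph{between} blocks $t$ and $t+1$ and let $L=\{\#I_1\#\cdots\#I_t\,\%\,\#I_{t+1}\#\cdots\#I_{2m}\#\mid d_{t+1}\neq h(q_t)d_t\}$, so that excising the middle factor $\%$ reconstitutes $\alpha$ and the split index $t$ plays the role of the guessed defect; here $L$ needs only a single up-then-down, so it lies in $\DCM(1,1)$, hence in both $\DCM(1,3)$ and $\DCM(2,1)$. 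For the $2\DCM(1)$ source I reuse the pointer witness but let the two-way head match $1^{j}$ against the $\#$'s by zig-zagging, so the lone reversal-bounded counter is spent only on the single comparison, giving $L\in 2\DCM(1)$. In every case $O(L)\cap R=R\setminus L_0$ for the appropriate operation $O$ and regular $R$. Finally, since $L_0\subseteq R$ and $\DCM$ is closed under complement and under intersection with regular languages, $O(L)\in\DCM$ would force $L_0=R\setminus(R\setminus L_0)\in\DCM$, contradicting Lemma \ref{lem1}; the companion statement $R\setminus L_0\notin 2\DCM(1)$ follows from the very undecidability reduction that proves Lemma \ref{lem1}, which is insensitive to complementation.

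The main obstacle is twofold. The first, conceptual point is that for the one-way sources the image of \emph{any} deletion operation is necessarily an $\NCM$ (hence semilinear) language, so one cannot hope to realize the genuinely hard $L_0$ itself; the construction must instead target the defect language $R\setminus L_0$, and the whole argument hinges on the observation that guessing \emph{one} faulty pair and checking it is exactly what a three-reversal (or two-counter) budget affords, which is precisely why the bounds $\DCM(1,3)$ and $\DCM(2,1)$ appear. The second, technical point is soundness of the marker discipline: I must choose $R$ and the separators so that the existentially chosen deleted factor can \emph{only} be the intended pointer or separator and never a fragment of $\alpha$ itself, since an unintended excision could otherwise leave a shorter, spuriously valid-looking string inside $R$. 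Making the two halves range over disjoint alphabets around the separator, and forcing the surviving string to carry the full $\#I_1\#\cdots\#I_{2m}\#$ shape, is what rules this out; verifying this exactness, together with confirming $R\setminus L_0\notin 2\DCM(1)$ via the undecidability technique, is where the real work lies.
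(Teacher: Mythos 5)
Your overall strategy is the same as the paper's: a witness language whose words carry a pointer to a single defective pair of consecutive configurations, followed by intersection with a regular set and complementation (both $\DCM$ and $2\DCM(1)$ are closed under these) to recover $L_0$ and contradict Lemma \ref{lem1}. But two of your constructions do not satisfy the identities you claim. First, the marker discipline, which you yourself flag as ``where the real work lies,'' fails as written: with $R$ the set of well-formed strings $\#I_1\#\cdots\#I_{2m}\#$, a suffix of $\$1^{j}\$\alpha$ lying in $R$ need not be all of $\alpha$ --- it can be $\#I_{k}\#\cdots\#I_{2m}\#$ for any odd $k$, since $\#$ recurs throughout $\alpha$. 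Worse, \emph{every} well-formed $\beta\in R$ is a suffix of some word of your $L$ (prepend two blocks violating the counter relation and point $j$ at them), so $\suff(L)\cap R=R$ rather than $R\setminus L_0$, and the complement argument collapses. Disjoint alphabets ``around the separator'' do not help, because the spurious deletion eats into $\alpha$ itself; what is needed is a letter occurring exactly once, at the left boundary of $\alpha$, which $R$ forces to survive. The paper does precisely this with $L=\{a^ib\#I_1\#\cdots\#I_{2m}\#b\mid\dots\}$ and $R=\{b\#I_1\#\cdots\#I_{2m}\#b\}$: the unique interior $b$ pins the surviving suffix (and likewise the infix and the outfix residue) to exactly $b\alpha b$.

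Second, your outfix witness $\{\#I_1\#\cdots\#I_t\,\%\,\#I_{t+1}\#\cdots\#I_{2m}\#\mid d_{t+1}\neq h(q_t)d_t\}$ is not in $\DCM(1,1)$: a one-way deterministic machine learns which block is $I_t$ only upon reading $\%$, i.e.\ after $I_t$ has been consumed, so to have $d_t$ available it would have to load and discard $d_1,d_2,\dots$ in turn, costing a counter reversal per block; this language has essentially the same difficulty as $L_0$ itself, so the claimed ``single up-then-down'' is not available. The paper needs no second witness: the same front-pointer language works for outfix because intersecting with $b\#\cdots\#b$ forces the deleted factor to be exactly $a^i$. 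A smaller point: folding ``correctness of the state successions'' into the regular set $R$ is delicate, since in Minsky's construction the state entered at the end of a phase can depend on the counter value (divisibility by $2$ or $3$), not on $q_j$ alone; the paper sidesteps this by letting the witness machine check ``$I_{i+1}$ is not a valid successor of $I_i$'' in full. These gaps are all repairable, and your reduction skeleton (including the use of closure under complement for both $\DCM$ and $2\DCM(1)$) is the right one, but the constructions as stated do not prove the proposition.
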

\begin{proof}
Let $L_0$ be the language defined above,
which is not in $\DCM \cup 2\DCM(1)$.  
Let $a, b$ be  new symbols.  Clearly, $bL_0b$
is also not in $\DCM \cup 2\DCM(1)$. A configuration of $T$ is any string of the form $q 1^k$, $q \in Q, k \geq 1$ (whether it appears in any computation or not). Then let
$$L = \{a^i b \# I_1 \# I_2 \#  \cdots \# I_{2m} \# b \mid \begin{array}[t]{l} I_1,  \ldots, I_{2m} \mbox{~are configurations of 2-counter machine~} T,  i \le 2m-1, \\  I_{i+1}
\mbox{~is not a valid successor of~} I_i \}.\end{array}$$
Clearly $L$ is in $\DCM(1,3)$, in $\DCM(2,1)$
(as $\DCM(1,3)$ is a subset of $\DCM(2,1)$ as mentioned in Section \ref{sec:prelims}), and in $2\DCM(1)$ (as $\DCM(1,3)$ is a subset of $2\DCM(1)$).

Let $L_1$ be $\suff(L)$.   Suppose  $L_1$ is in $\DCM$ (resp., $2\DCM(1)$).
Then $L_2 = \overline{L_1}$ is also in $\DCM$ (resp., $2\DCM(1)$) since both are closed under complement \cite{Ibarra1978,IbarraJiang}.

Let $R = \{b \# I_1 \# I_2 \cdots \# I_{2m} \# b  ~|~  I_1,
\ldots, I_{2m}$ are configurations of $T \}$.
Then since $R$ is regular, $L_3 = L_2 \cap R$
is in $\DCM$ (resp, $2\DCM(1)$) as both are closed under intersection with regular languages \cite{Ibarra1978,IbarraJiang}. We get a contradiction, since
$L_3 = bL_0b$.

Non-closure under infix and outfix can be shown similarly (for outfix, the intersection with $R$ enforces that only erasing of all of the $a$'s is considered).
\qed \end{proof}

This implies non-closure under left-quotient with regular languages, and this result also extends to the embedding operation, a generalization of outfix.

\begin{corollary}\label{leftquotR}
There exists $L \in \DCM(1,3)$ (respectively $L \in \DCM(2,1)$, and $L \in 2\DCM(1)$), and $R \in \REG$ such that 
$ \lquot{R}{L} \not \in \DCM \cup 2\DCM(1)$.
\end{corollary}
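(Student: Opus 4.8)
The plan is to derive this corollary immediately from Proposition \ref{NonclosureSuffix} by invoking the elementary identity, recorded in Section \ref{sec:prelims}, that relates the suffix operation to left quotient by $\Sigma^*$. Recall that for any $L \subseteq \Sigma^*$ one has $\suff(L) = \lquot{(\Sigma^*)}{L}$; that is, taking suffixes is exactly taking the left quotient by the regular language $\Sigma^*$. Since non-closure under suffix has already been established against the full union $\DCM \cup 2\DCM(1)$, and $\Sigma^*$ is regular, this will transfer to non-closure under left quotient with a regular language essentially for free.

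Concretely, I would reuse the very same witness language $L$ built in the proof of Proposition \ref{NonclosureSuffix}, which was shown to belong simultaneously to $\DCM(1,3)$, and hence (by the inclusions $\DCM(1,3) \subseteq \DCM(2,1)$ and $\DCM(1,3) \subseteq 2\DCM(1)$ noted in Section \ref{sec:prelims}) to $\DCM(2,1)$ and to $2\DCM(1)$. I would then take $R = \Sigma^*$, where $\Sigma$ is the alphabet of $L$; this $R$ is regular, so $R \in \REG$. Applying the identity gives $\lquot{R}{L} = \lquot{(\Sigma^*)}{L} = \suff(L)$, and Proposition \ref{NonclosureSuffix} states precisely that $\suff(L) \not\in \DCM \cup 2\DCM(1)$. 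Hence $\lquot{R}{L} \not\in \DCM \cup 2\DCM(1)$, which is the desired conclusion. Because the single $L$ already lies in all three families at once, the three parenthetical cases of the statement are all witnessed by this same $L$ and $R$, and follow verbatim from the corresponding three cases of Proposition \ref{NonclosureSuffix}.

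There is no real obstacle here beyond recognizing the suffix–quotient identity: no new machine construction is required, and I would not attempt to reprove any non-closure from scratch. The only point worth a moment's attention is that the conclusion must hold against the entire union $\DCM \cup 2\DCM(1)$ rather than against a single family; but this strength is exactly what Proposition \ref{NonclosureSuffix} already supplies, so it carries over with no additional work.
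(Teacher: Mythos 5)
Your proposal is correct and is exactly the paper's intended argument: the corollary is stated immediately after Proposition \ref{NonclosureSuffix} with the remark that it ``implies non-closure under left-quotient with regular languages,'' relying on the identity $\suff(L) = \lquot{(\Sigma^*)}{L}$ recorded in Section \ref{sec:prelims} and the same witness $L$. No difference in approach.
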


\begin{corollary}
Let $m>0$. Then there exists $L \in \DCM(1,3)$ (respectively $L \in \DCM(2,1)$, and $L \in 2\DCM(1)$)
such that $\emb(L, m) \not \in \DCM \cup 2\DCM(1)$.
\end{corollary}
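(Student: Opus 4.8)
The plan is to reuse the complementation-and-intersection argument of Proposition \ref{NonclosureSuffix}, but to modify the witness language so that the $m$ available deletions are forced into $m$ prescribed ``garbage'' zones, leaving no deletion free to corrupt the encoded configuration sequence. Since $\outf(L)=\emb(L,1)$, the case $m=1$ is exactly Proposition \ref{NonclosureSuffix}; the real difficulty for $m\ge 2$ is that the extra holes of $\emb(\cdot,m)$ could be used to ``repair'' a string by deleting symbols inside the configurations $I_j$ — for instance deleting a single $1$ turns the valid config $q1^k$ into the still-valid $q1^{k-1}$ — which would admit far too many strings into the embedding and destroy the desired equality with a margin-padded copy of $bL_0b$.

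First I would fix $m>0$, introduce a surviving marker symbol $c$ and a filler symbol $\$$, and define the $\DCM(1,3)$ language
$$L' = \{\, c\, a^{i}\, c\, (\$\, c)^{m-1}\, b\, \#I_1\#\cdots\#I_{2n}\#\, b \mid n\ge 1,\ 1\le i\le 2n-1,\ I_{i+1}\text{ is not a valid successor of }I_i\,\},$$
where the $I_j$ are configurations of the fixed $2$-counter machine $T$ as in Proposition \ref{NonclosureSuffix}. This is still accepted by a $\DCM(1,3)$ machine: the fixed prefix pattern $c\,a^{i}\,c\,(\$\,c)^{m-1}$ is parsed in the finite control while the single counter stores $i$, and the counter is then used exactly as for the language $L$ of Proposition \ref{NonclosureSuffix} to locate $I_i,I_{i+1}$ and to verify that $I_{i+1}$ is not a valid successor of $I_i$. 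Hence $L'\in\DCM(1,3)$, and therefore also $L'\in\DCM(2,1)$ and $L'\in 2\DCM(1)$ by the inclusions noted in Section \ref{sec:prelims}, so a single $L'$ will witness all three cases. The key feature is that $L'$ contains exactly $m+1$ occurrences of $c$, splitting its prefix into $m$ \emph{nonempty} zones: the block $a^{i}$ (nonempty because $i\ge 1$) and $m-1$ singleton blocks $\$$.

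Next, assuming $\emb(L',m)\in\DCM$ (resp.\ $2\DCM(1)$), I would intersect its complement with the regular language $R=\{\, c^{m+1}b\, \#I_1\#\cdots\#I_{2n}\#\, b \mid n\ge 1,\ I_1,\ldots,I_{2n}\text{ configurations of }T\,\}$, which uses neither $a$ nor $\$$. Closure of $\DCM$ (resp.\ $2\DCM(1)$) under complement and under intersection with regular languages \cite{Ibarra1978,IbarraJiang} keeps $\overline{\emb(L',m)}\cap R$ in the same class. The crux is to show $\overline{\emb(L',m)}\cap R=c^{m+1}bL_0b$. The essential observation is that a string $z\in R$ has all $m+1$ of its markers $c$ inside the prefix region and none inside its configuration part, so any preimage in $L'$ must retain all $m+1$ markers; thus no deleted infix may contain a $c$, forcing each deleted infix to lie entirely within one $c$-delimited zone. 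Since all $m$ garbage zones of the preimage are nonempty, clearing them already consumes all $m$ holes, so no hole can reach the configuration block. Consequently the configurations of the preimage coincide symbol-for-symbol with $I_1,\ldots,I_{2n}$, and $z\in\emb(L',m)$ holds iff there is some $i$ with $1\le i\le 2n-1$ for which $I_{i+1}$ is not a valid successor of $I_i$. Complementing inside $R$ then leaves exactly the sequences in which every $I_{j+1}$ is a valid successor of $I_j$, namely $c^{m+1}bL_0b$.

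Finally, because $\DCM$ and $2\DCM(1)$ are closed under left and right quotient by the fixed words $c^{m+1}b$ and $b$, membership of $c^{m+1}bL_0b$ in either class would yield $L_0$ in that class, contradicting Lemma \ref{lem1}; the $\DCM(2,1)$ and $2\DCM(1)$ statements follow from the same $L'$. I expect the main obstacle to be the forcing argument of the previous paragraph — carefully establishing that the surviving $c$-markers make the $m$ deletions pairwise non-mergeable and confined to the garbage zones, so that the requirement $i\ge 1$ together with the singleton $\$$-zones leaves no spare hole with which to alter the configuration sequence.
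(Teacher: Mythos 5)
Your proposal is correct, and it is more careful than what the paper actually offers: the paper states this corollary with no proof, leaving the reader to "extend" the argument of Proposition \ref{NonclosureSuffix} from $\outf = \emb(\cdot,1)$ to $\emb(\cdot,m)$ using the same witness $L$ and the same regular filter $R$. You have correctly identified that this naive extension breaks for $m\ge 2$: with the paper's $L$, one can take a fully valid history $\#q1^k\#I_2\#\cdots\#$, pad the first configuration to $q1^{k+5}$ (making $I_2$ an invalid successor, hence the padded word lands in $L$), and then spend one hole on $a^i$ and a second hole on the padding $1^5$, so that $\emb(L,2)\cap R$ meets $bL_0b$ and the identity $\overline{\emb(L,2)}\cap R=bL_0b$ fails. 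Your repair --- inserting $m+1$ occurrences of a marker $c$ that must all survive (since the filter $R$ demands exactly $m+1$ of them and deletion cannot create markers), thereby confining each deleted infix to a single $c$-free zone, and making all $m$ garbage zones nonempty so that the pigeonhole principle forces exactly one nonempty hole per zone and none in the configuration block --- is sound; the modified language is still recognized with one counter and three reversals since the marker pattern is fixed and handled in the finite control, and the concluding complement/intersection/quotient steps are exactly those of Proposition \ref{NonclosureSuffix}. What your route buys is an actual proof of the stated corollary for all $m$; what it costs is a witness language that depends on $m$, which is permitted by the statement as quantified ("for each $m>0$ there exists $L$"). If one insisted on a single $L$ working for all $m$ simultaneously, a further idea would be needed, but the corollary does not ask for that.
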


The results of Proposition \ref{NonclosureSuffix} and Corollary \ref{leftquotR} are optimal for suffix and infix as these operations applied to $\DCM(1,1)$ are always in $\DCM$ by Corollary \ref{suffinfDCM} (and since $\DCM(1,2) = \DCM(1,1)$).
But whether the outfix and embedding operations applied to $\DCM(1,1)$ languages is always in $\DCM$ is an open question.

\section{Closure and Non-Closure for $\NPCM$, $\DPCM$, and $\DPDA$}
\label{sec:DPDA}

To start, we consider quotients of nondeterministic classes, then use these
results for contrast with deterministic classes.

\begin{proposition}
Let $\LL_1$ and $\LL_2$ be classes of languages where $\LL_1$ is
a full trio closed under intersection with languages in $\LL_2$, and
where $L \in \LL_2$ implies $\Sigma^* \# L, L\# \Sigma^* \in \LL_2$,
for an alphabet $\Sigma$ and new symbol $\#$.
Then $\LL_1$ is closed under left and right quotient with $\LL_2$.
\end{proposition}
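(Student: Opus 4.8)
The plan is to reduce the quotient operations to a composition of the three full-trio closure operations (homomorphism, inverse homomorphism, intersection with regular languages) together with the intersection-with-$\LL_2$ hypothesis, exploiting the $\#$-marker assumption to keep track of where the cut between the two factors of a word sits. The key idea is that a quotient asks us to find, for a word $w$, a witness $y$ (or $x$) such that the concatenation lies in $L_1$ and the witness lies in $L_2$; the marker $\#$ lets us encode this search as a regular/homomorphic manipulation of a single tape.

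\textbf{Right quotient.} First I would handle $L_1 L_2^{-1}$. Introduce a new symbol $\#$ and let $h$ be the homomorphism that erases $\#$ (i.e. $h(\#)=\lambda$ and $h(a)=a$ for $a \in \Sigma$). The strategy is: take $L_1$, use inverse homomorphism under $h$ to insert a single $\#$ at an arbitrary position, obtaining all words $x \# y$ with $xy \in L_1$; intersect with the regular language $\Sigma^* \# \Sigma^*$ to guarantee exactly one marker; then intersect with $\Sigma^* \# L_2$ (which is in $\LL_2$ by hypothesis, and the intersection stays in $\LL_1$ by the closure-under-intersection-with-$\LL_2$ assumption) to force the suffix after $\#$ to be a word of $L_2$. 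At this point the language consists of all $x \# y$ with $xy \in L_1$ and $y \in L_2$; applying the homomorphism $g$ that erases everything from $\#$ onward — or more cleanly, a homomorphism that keeps letters before $\#$ and maps $\#$ and everything to $\lambda$, realized by first intersecting and then projecting — yields exactly $\{x \mid xy \in L_1, y \in L_2\} = L_1 L_2^{-1}$. I would realize the final projection as a homomorphism into $\Sigma \cup \{\#\}$ sending $\# \mapsto \lambda$ after having already discarded the tail; the tidy way is to mark the cut so the tail can be deleted by a homomorphism. The left quotient $L_1^{-1} L_2$ (here written with $L_1$ as the divisor) is symmetric, using $L_2 \# \Sigma^* \in \LL_2$ to pin the prefix and then deleting the prefix by homomorphism.

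\textbf{The main obstacle} will be deleting only the correct portion of the word by a single homomorphism, since a homomorphism acts letter-by-letter and cannot know which occurrence of a $\Sigma$-letter belongs to the factor to be erased versus the factor to be kept. The clean fix is to work over two disjoint copies of the alphabet: use inverse homomorphism to guess a "barred" copy $\bar\Sigma$ for the factor destined for deletion, enforce via a regular language that the barred block is a contiguous suffix (resp. prefix) following/preceding $\#$, intersect with the $\LL_2$-membership constraint read through the bar-erasing map, and finally apply the homomorphism that erases all barred letters and $\#$ while fixing unbarred letters. Because a full trio is closed under all of inverse homomorphism, intersection with regular sets, and homomorphism, and the hypothesis supplies closure under intersection with the $\LL_2$-constraint, every step stays inside $\LL_1$, giving the result. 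I would present the right-quotient construction in full and remark that the left quotient follows by the mirror-image argument using $L_2 \# \Sigma^*$ in place of $\Sigma^* \# L_2$.
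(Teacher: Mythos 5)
Your construction is correct and matches the paper's proof essentially step for step: insert a single $\#$ by inverse homomorphism, intersect with $\Sigma^*\#\Sigma^*$ and then with $\Sigma^*\# L_2 \in \LL_2$, and finally erase the marked tail (with the mirror-image argument for left quotient). The only difference is cosmetic: the paper disposes of the last step by citing closure of full trios under gsm mappings, whereas you expand that gsm into the standard inverse-homomorphism/regular-intersection/homomorphism decomposition over a barred alphabet --- just take care to perform the intersection with $\Sigma^*\# L_2$ \emph{before} introducing the barred copy, since $\LL_2$ is not assumed closed under alphabet renaming.
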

\begin{proof}
For right quotient, let $L_1 \in \LL_1, L_2 \in \LL_2$.
If $L_1 \in \LL_1$, then using an inverse homomorphism (where the homomorphism is from $(\Sigma \cup \{\#\})^*$ to $\Sigma^*$ that erases $\#$ and fixes all other letters), and
intersection with the regular language $\Sigma^* \# \Sigma^*$, it follows that
$L_1'  = \{x \# y  \mid xy \in L_1\}$ is also in $\LL_1$.
Let $L_2' = \Sigma^* \# L_2 \in \LL_2$. Then $L = L_1' \cap L_2' \in \LL_1$.
Then, as every full trio is closed under gsm mappings, it follows
that $L_1 L_2^{-1} \in \LL_1$ by erasing everything starting at the $\#$
symbol.

Similarly with left quotient.
\qed
\end{proof}

\begin{corollary}\label{nondeterministicclosure}
$\NPCM$ ($\NCM$ respectively) is closed under left and right quotient with $\NCM$.
\end{corollary}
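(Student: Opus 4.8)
The plan is to obtain this corollary as a direct instantiation of the preceding proposition, setting $\LL_2 = \NCM$ and letting $\LL_1$ be $\NPCM$ (respectively $\NCM$); it then remains only to check the three hypotheses. First I would verify the requirement on $\LL_2$, namely that $L \in \NCM$ implies $\Sigma^* \# L, L\# \Sigma^* \in \NCM$. Since $\Sigma^* \#$ and $\# \Sigma^*$ are regular and $\NCM$ is closed under concatenation (simulate the first machine, nondeterministically guess the boundary, then simulate the second machine on a disjoint set of reversal-bounded counters, so that the combined machine remains reversal-bounded), both marked products stay in $\NCM$; here in fact one factor is regular, so it can simply be tracked in the finite control.

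Second, I would note that $\LL_1$ is a full trio, which is recorded in the excerpt for both $\NPCM$ and $\NCM$. Third, I would verify that $\LL_1$ is closed under intersection with $\NCM$. For $\LL_1 = \NPCM$ this is exactly the product construction of \cite{Ibarra1978} showing that $\NPCM$ is closed under intersection with $\NCM$, the same construction already invoked in the proof of Proposition \ref{fullQuotientClosure}. For $\LL_1 = \NCM$ it is the closure of $\NCM$ under intersection, obtained by running two machines in parallel on the combined set of counters (the reversal bounds of the two counter sets simply add). With all three hypotheses in hand, the proposition yields closure of $\LL_1$ under both left and right quotient with $\NCM$, which is the claim.

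I do not expect any substantial obstacle here, since the corollary is a verification that $\NPCM$ and $\NCM$ meet the abstract conditions of the proposition rather than a fresh construction. The only point needing a short argument is the marking condition on $\LL_2$, and even that reduces to the routine closure of $\NCM$ under concatenation with a regular language; the intersection-closure condition holds in both instantiations because $\NCM$ sits inside $\NPCM$ and both families absorb an $\NCM$ intersection by adding reversal-bounded counters.
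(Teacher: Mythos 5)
Your proposal is correct and matches the paper's proof, which likewise derives the corollary by instantiating the preceding proposition with $\LL_1 = \NPCM$ (resp.\ $\NCM$) and $\LL_2 = \NCM$, citing that $\NPCM$ is a full trio closed under intersection with $\NCM$ and that $\NCM$ is closed under concatenation (which gives the marking condition). The paper states this in one sentence; you have simply spelled out the same verification in more detail.
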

This follows since $\NPCM$ is a full trio closed under intersection with $\NCM$
\cite{Ibarra1978}, and $\NCM$ is closed under concatenation.

The question remains as to whether this is also true for deterministic machines instead. 
For machines with a stack, we have:
\begin{proposition} The right quotient of a $\DPDA(1)$ language
(i.e., deterministic linear context-free) with a
$\DCM(2,1)$ language is not necessarily an $\NPDA$ language.
\end{proposition}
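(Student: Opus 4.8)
The plan is to exhibit a specific deterministic linear context-free language $L_1$ and a specific $\DCM(2,1)$ language $L_2$ whose right quotient $L_1 L_2^{-1}$ is provably non-context-free. The natural strategy is to encode a product relation of the form $k = ij$, since the excerpt already flags (via Theorem \ref{unary} and the surrounding discussion of $L_2 = \{a^i b^j c^k \mid k = ij\}$) that multiplication-type languages escape the context-free world while remaining accessible to two-counter or product-checking machines. I would arrange the construction so that $L_1$ carries one "linear" pairing relation checkable by a single-reversal pushdown, while the quotient by $L_2$ deletes a suffix in a way that forces a second multiplicative relation to hold on what remains, producing a language like $\{a^i b^j c^{ij} \mid i,j \ge 1\}$ or a close variant.

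First I would choose $L_1$ so that a word has the shape (prefix) followed by (a block whose length records a product), with the prefix part verifiable by a deterministic linear CFL — e.g. using the pushdown to match one pair of blocks in a single reversal, with a separator symbol $\#$ marking where the future deletion will occur. Second I would design $L_2 \in \DCM(2,1)$ to be exactly the set of admissible suffixes to be stripped: $L_2$ uses its two $1$-reversal counters to check the arithmetic condition linking the stripped suffix to the retained prefix (this is where the $\DCM(2,1)$ power, strictly beyond $\DCM(1,1)$, is essential — one counter alone cannot simultaneously track the two quantities being multiplied). Third I would compute $L_1 L_2^{-1}$ explicitly and verify it equals the target multiplicative language. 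Finally I would invoke the pumping lemma for context-free languages (or Parikh's theorem / semilinearity, since context-free languages have semilinear Parikh images) to conclude the quotient is not in $\NPDA$, because a set enforcing $k = ij$ has a non-semilinear or non-pumpable structure in the relevant counts.

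\textbf{The main obstacle} I anticipate is getting the bookkeeping of the two relations to interlock correctly: $L_1$ must be genuinely deterministic \emph{and} linear (at most one reversal of the stack), which sharply limits how much matching it can do on its own, so the "hard" multiplicative constraint must be pushed almost entirely into the deletion performed by $L_2$. Balancing this — ensuring $L_1$ alone is a legitimate $\DPDA(1)$ language, that $L_2$ is a legitimate $\DCM(2,1)$ language, and that their quotient yields precisely a non-context-free set with no unintended extra words — is the delicate part. I would expect to verify the non-context-freeness of the resulting set by a direct pumping argument: choosing a word with $i$ and $j$ large, any context-free pump would alter the counts in a way inconsistent with the exact equation $k = ij$, giving the contradiction. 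Care is needed that the separator symbols and end-markers in the construction do not accidentally make the quotient semilinear or let a context-free grammar "cheat" by pumping in a harmless region.
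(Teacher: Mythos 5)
There is a genuine gap, and it is fatal to the whole strategy rather than a fixable detail. Your plan hinges on producing a quotient of the form $\{a^i b^j c^{ij} \mid i,j\ge 1\}$ and then invoking Parikh's theorem / non-semilinearity. But a $\DCM(2,1)$ machine is a \emph{one-way} machine with two $1$-reversal-bounded counters, and every such language is semilinear; it cannot ``check the arithmetic condition'' $k=ij$. The paper itself points out that $\{a^i b^j c^k \mid k = ij\}$ is not accepted even by the far more powerful two-way model $2\DCM(1)$ (the divisibility examples you are leaning on live in the two-way world, not in $\DCM$). Worse, no choice of $L_1\in\DPDA(1)$ and $L_2\in\DCM(2,1)$ can yield a non-semilinear quotient at all: writing $L_1L_2^{-1}$ as the image under a $\#$-erasing gsm of $\{x\#y \mid xy\in L_1\}\cap(\Sigma^*\# L_2)$ shows the quotient is always an $\NPCM$ language, hence semilinear. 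So your intended witness language is unreachable, and your intended non-context-freeness argument (non-semilinearity) can never apply. The non-context-freeness must instead be witnessed by a semilinear language.

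That is exactly what the paper does: it takes $L_1=\{d^l c^k b^j a^i \# a^i b^j c^k d^l \mid i,j,k,l>0\}$, accepted by a $\DPDA$ with one stack reversal (push the first half, match the second), and $L_2=\{a^i b^j c^i d^j \mid i,j>0\}$, the classic cross-serial-dependency language, which is non-context-free yet trivially in $\DCM(2,1)$ (one counter matches $a$'s with $c$'s, the other $b$'s with $d$'s). The quotient is essentially $\{d^j c^i b^j a^i\# \mid i,j>0\}$, i.e.\ $L_2^R$ up to the marker, which is semilinear but not context-free by a standard pumping argument. Note also that in this construction the ``hard'' constraint is carried entirely by $L_2$ and merely \emph{reflected} onto the retained prefix by the palindromic structure of $L_1$ --- the opposite division of labour from the one you propose, where you want the deletion to impose a constraint the deleting language cannot itself express. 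If you want to salvage your write-up, replace the multiplicative target with a cross-serial one and drop the Parikh argument in favour of pumping.
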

\begin{proof}
Take the $\DPDA(1)$ language 
$L_1 = \{d^l c^k b^j a^i \# a^i b^j c^k d^l \mid i, j, k, l > 0\}$.
Take the $\DCM(2,1)$ language 
$L_2 = \{ a^i b^j c^i d^j \mid i,j >0\}$. This is clearly a
non-context-free language that is in $\DCM(2,1)$.
However, $L_1 L_2^{-1} = L_2^R$, which is also not context-free.
\qed
\end{proof}

Next we see that, in contrast to $\DCM$ and $\DPDA$, $\DPCM$ 
is closed under neither prefix nor suffix. Indeed, both $\DCM$ 
and $\DPDA$ are closed under prefix (and right quotient with regular sets), 
but not left quotient with regular sets. Yet combining their stores into one type of machine yields languages that are closed under neither.

\begin{proposition} $\DPCM$ is not closed under prefix or suffix.
\label{DPCMprefixsuffix}
\end{proposition}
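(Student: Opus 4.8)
The plan is to prove both non-closures by the same mechanism: for each operation I would exhibit a language $L \in \DPCM$ and a regular language $R$ for which $\overline{\pref(L)} \cap R$ (respectively $\overline{\suff(L)} \cap R$) is \emph{not} semilinear. This suffices, because $\DPCM \subseteq \NPCM$, every $\NPCM$ language is semilinear \cite{Ibarra1978}, and $\DPCM$ is (effectively) closed under complementation and under intersection with regular languages \cite{Ibarra1978}: if $\pref(L)$ were in $\DPCM$ then so would $\overline{\pref(L)} \cap R$ be, forcing it to be semilinear, a contradiction. A key point to flag is \emph{why} the complement is unavoidable here. Since $\NPCM$ is a full trio it is closed under $\pref$ and $\suff$, so $\pref(L)$ and $\suff(L)$ are always in $\NPCM$ and hence always semilinear; the non-closure therefore cannot be read off from $\pref(L)$ itself, and must be extracted from its complement, which is exactly where determinism (via closure under complementation) enters.

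The non-semilinear witness I would use is the ``all blocks equal'' language $E = \{ a^n b a^n b \cdots b a^n \mid k \ge 1,\ n \ge 1\}$ with $k$ blocks. Its Parikh vectors, projected onto the $a$- and $b$-coordinates, are $\{(kn, k-1) \mid k,n \ge 1\}$, i.e. essentially the divisibility relation $\{(s,t) \mid (t+1) \mid s,\ s \ge t+1\}$; the fibre over $t = k-1$ is the set of multiples of $t+1$, so the period grows without bound and the set cannot be semilinear. The whole construction is then aimed at arranging a $\DPCM$ language whose $\pref$ (or $\suff$), after intersection with a regular language fixing the block format, consists exactly of those block strings in which \emph{some} adjacent pair of blocks differs, so that the complement is precisely a marked copy of $E$.

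For the prefix case I would take $L = \{ a^{n_1} b \cdots b a^{n_k}\, \#\, d^m \mid k \ge 2,\ 0 \le m \le k-2,\ n_i \ge 1,\ n_{k-m} \ne n_{k-m-1}\}$, accepted by a $\DPCM$ that pushes the block word onto its pushdown as it reads it, stops at $\#$, then on reading $d^m$ pops exactly $m$ blocks (one per $d$), and finally on the end-marker compares the two topmost remaining blocks with a single $1$-reversal-bounded counter, accepting iff they differ. This is deterministic and reversal-bounded, and it genuinely needs the pushdown to revisit earlier blocks -- which is consistent with $\DCM$ and $\DPDA$ each being closed under $\pref$. Taking $R = a^+ (b a^+)^* \#$, a string $w\# \in R$ lies in $\pref(L)$ iff it can be completed by some $d^m$, i.e. iff some adjacent pair among $n_1, \ldots, n_k$ differs (the $\#$ blocks any completion from appending further blocks); hence $\overline{\pref(L)} \cap R$ is the marked copy of $E$, which is not semilinear. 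For the suffix case I would mirror this by moving the selector to the front: $L' = \{ d^m\, \#\, a^{n_1} b \cdots b a^{n_k} \mid 1 \le m \le k-1,\ n_i \ge 1,\ n_m \ne n_{m+1}\}$, which is in fact already in $\DCM(2,1)$ (read $d^m$ into one counter, skip $m-1$ blocks, then compare blocks $m$ and $m+1$ with a second $1$-reversal counter). With $R' = \#\, a^+(b a^+)^*$, a string $\# a^{n_1} b \cdots b a^{n_k}$ is a suffix of a word of $L'$ iff deleting some prefix $d^m$ yields a word of $L'$, i.e. iff some adjacent pair differs, so $\overline{\suff(L')} \cap R'$ is again the marked copy of $E$.

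The hard part will be verifying the \emph{exactness} of these two intersections: I must check that, for $\pref$, the only completions of a block string into $L$ are the selector strings $d^m$ and that as $m$ ranges over $0,\ldots,k-2$ these realize precisely the comparisons of all adjacent pairs, so that $\pref(L) \cap R$ is exactly the ``not all equal'' strings and nothing more (the end-marker $\#$ and the specific choice of $R$ are what pin this down), and symmetrically for $\suff$. A secondary obstacle is the careful confirmation that the machine for $L$ is deterministic and reversal-bounded in every accepting computation, and the explicit invocation of closure of $\DPCM$ under complementation and under intersection with regular languages; these closure facts, together with semilinearity of $\NPCM$, are what convert the non-semilinear witness $E$ into the desired contradiction.
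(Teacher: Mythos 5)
Your route is genuinely different from the paper's. The paper does not argue about semilinearity at all: it takes as a black box the separation result of \cite{OscarNCMA2014journal} that some $\NCM(1,1)$ language $L$ is not in $\DPCM$, encodes the (reversed) accepting transition sequences of an $\NCM(1,1)$ machine for $L$ in front of a marker to get a $\DPCM$ language $L'$, and observes that $\suff(L')\cap \$\Sigma^{*}=\$L$; the only closure properties of $\DPCM$ it invokes are intersection with regular sets and left/right quotient by a single symbol, both of which are uncontroversial for deterministic machines. Your proposal instead builds an explicit ``selector'' language, and derives the contradiction from non-semilinearity of the complement of its prefix/suffix closure. Your witness machines are fine (the prefix witness does need the pushdown, the suffix witness is even in $\DCM(2,1)$), the exactness of the two intersections with $R$ and $R'$ does hold as you describe (the single occurrence of $\#$ pins down the alignment in both cases), and the ``all blocks equal'' language is indeed not semilinear by the fibre argument you sketch. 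In effect you are inlining a proof of the very separation ($\NCM(1,1)\not\subseteq\DPCM$) that the paper imports.

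The genuine gap is the step where you claim that $\DPCM$ is closed under complementation, citing \cite{Ibarra1978}. That reference establishes complementation for $\DCM$ (and for the two-way deterministic counter classes), not for deterministic pushdown automata \emph{augmented} with reversal-bounded counters, and your entire argument is load-bearing on this point: as you yourself note, $\pref(L)$ and $\suff(L')$ are always in $\NPCM$ and hence always semilinear, so the contradiction can only be extracted from $\overline{\pref(L)}\cap R$, which lies in $\NPCM$ only if $\DPCM$ is closed under complement. Complementation for $\DPCM$ is not a routine corollary of the $\DPDA$ and $\DCM$ cases --- the usual normalization forcing a deterministic machine to scan its entire input must be redone when non-reading loops can manipulate unbounded counters whose reversal bound is only guaranteed along accepting computations --- and the paper conspicuously routes around it. To repair the proof you would need either to supply (or correctly cite) a complementation theorem for $\DPCM$, or to restructure the argument so that the non-$\DPCM$ witness is obtained without complementing a $\DPCM$ language, which is essentially what the paper's transition-encoding construction accomplishes.
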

\begin{proof}
Assume otherwise. Let $L$ be a language in $\NCM(1,1)$ that is not in $\DPCM$, which was shown to exist \cite{OscarNCMA2014journal}. Let $M$ be an $\NCM(1,1)$ machine 
accepting $L$. Let $T$ be a set of labels associated bijectively with transitions of $M$.
Consider the language 
$L' = \{ t_m \cdots t_1 \$ w \mid M \mbox{~accepts~} w \mbox{~via transitions~} t_1, \ldots , t_m \in T\}$.
This language is in $\DPCM$ since a $\DPCM$ machine $M'$ can be built that first pushes
$t_m \cdots t_1$, and then simulates $M$ deterministically on transitions $t_1, \ldots,  t_m$
while popping from the pushdown and reading $w$. Then $\suff(L') \cap \$ \Sigma^* = \$ L$,
a contradiction, as $\DPCM$ is clearly closed under left quotient with a single symbol.

Similarly for prefix, consider $L^R$, and create a machine $M^R$ accepting $L^R$, which is possible since
$\NCM(1,1)$ is closed under reversal. Then 
$L'' = \{ w \$ t_1 \cdots t_m \mid M^R \mbox{~accepts~} w^R \mbox{~via~} t_1, \ldots, t_m \in T\}$. This is also a $\DPCM$
language as one can construct a machine $M''$ that pushes $w$, then while popping $w^R$ letter-by-letter, simulates $M$ deterministically on transitions $t_1, \ldots, t_m$ on $w^R$.
Then $\pref(L'') \cap \Sigma^*\$ = L\$$, a contradiction, as $\DPCM$ is clearly closed under
right quotient with a single symbol.
\qed
\end{proof}

\begin{corollary} $\DPCM$ is not closed under right or left quotient with regular sets. 
\end{corollary}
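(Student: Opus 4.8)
The plan is to derive this immediately from Proposition~\ref{DPCMprefixsuffix} using the two identities recorded just after Definition~\ref{def:opGeneralize}, namely $\pref(L) = \rquot{L}{(\Sigma^*)}$ and $\suff(L) = \lquot{(\Sigma^*)}{L}$. The key observation is that $\Sigma^*$ is itself a regular language, so the prefix operation is nothing more than the special case of right quotient in which the divisor happens to be the regular set $\Sigma^*$, and likewise the suffix operation is the special case of left quotient by $\Sigma^*$. This reduces the corollary to a logical contrapositive of the proposition we have already proved.

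Concretely, I would argue as follows. Suppose, for contradiction, that $\DPCM$ were closed under right quotient with regular sets. Since $\Sigma^* \in \REG$, closure would in particular give that $\rquot{L}{(\Sigma^*)} = \pref(L) \in \DPCM$ for every $L \in \DPCM$, i.e.\ $\DPCM$ would be closed under prefix; this contradicts Proposition~\ref{DPCMprefixsuffix}. The same step with left quotient and the identity $\suff(L) = \lquot{(\Sigma^*)}{L}$ yields closure under suffix, again contradicting Proposition~\ref{DPCMprefixsuffix}. Hence $\DPCM$ is closed under neither right nor left quotient with regular sets.

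There is essentially no obstacle to overcome here: all of the genuine difficulty was already absorbed into the proof of Proposition~\ref{DPCMprefixsuffix}, where the witness languages $L'$ and $L''$ were constructed (encoding transition sequences of an $\NCM(1,1)$ machine whose language lies outside $\DPCM$). The only thing the corollary adds is the remark that $\pref$ and $\suff$ are instances of quotient by the regular language $\Sigma^*$, so the witnesses $L'$ and $L''$, together with the regular divisor $\Sigma^*$, serve directly as counterexamples to closure under right and left quotient with regular sets respectively. If one wished to make the witnesses fully explicit rather than merely invoking the proposition, the only care needed is to confirm that the single-symbol quotients used at the end of that proof (which isolate $\$\,L$ and $L\,\$$) are, as noted there, trivially within $\DPCM$, so that no hidden closure assumption is being smuggled in; this is immediate.
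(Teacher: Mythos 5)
Your proposal is correct and matches the paper's (implicit) argument exactly: the corollary is stated without proof precisely because $\pref(L) = \rquot{L}{(\Sigma^*)}$ and $\suff(L) = \lquot{(\Sigma^*)}{L}$ with $\Sigma^* \in \REG$, so non-closure under prefix and suffix from Proposition~\ref{DPCMprefixsuffix} immediately gives non-closure under right and left quotient with regular sets. Nothing is missing.
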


Thus, the deterministic variant of Corollary \ref{nondeterministicclosure} gives non-closure.

The following is also evident from the proof of Proposition \ref{DPCMprefixsuffix}.

\begin{corollary} Every $\NCM$ language can be obtained by taking the right quotient (resp. left
quotient) of a $\DPCM$ language by a regular language.
\end{corollary}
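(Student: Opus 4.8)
The plan is to reuse the two $\DPCM$ constructions from the proof of Proposition \ref{DPCMprefixsuffix}, but now applied to an \emph{arbitrary} $\NCM$ language rather than a fixed one lying outside $\DPCM$, and then to observe that a single left (respectively right) quotient by a regular language recovers the language exactly.

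For the left quotient, I would start with an arbitrary $L \in \NCM$, fix an $\NCM$ machine $M$ accepting it, and let $T$ be a set of labels in bijection with the transitions of $M$. Form
$$L' = \{ t_m \cdots t_1 \$ w \mid M \mbox{~accepts~} w \mbox{~via transitions~} t_1, \ldots, t_m \in T\}.$$
As in the proposition, $L' \in \DPCM$: the machine pushes the reversed label sequence $t_m \cdots t_1$ onto the pushdown, reads $\$$, and then reads $w$ while popping the labels in the order $t_1, t_2, \ldots$ and deterministically replaying the corresponding transitions of $M$, using its reversal-bounded counters to mirror $M$'s counters. Since the label sequence fixes every choice, the simulation is deterministic. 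The new observation needed is the quotient identity $(T^*\$)^{-1} L' = L$: a word $y$ lies in the left quotient iff some $t_m \cdots t_1 \$ \in T^*\$$ satisfies $t_m \cdots t_1 \$ y \in L'$, which holds exactly when $M$ has an accepting computation on $y$, i.e.\ $y \in L$. (Here one uses that $T$, $\Sigma$, and $\{\$\}$ are disjoint, so the only factorisation of a word of $L'$ with left factor in $T^*\$$ splits off the entire label prefix together with $\$$.)

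For the right quotient, the ordering obstacle is that reading $w$ before its transition labels prevents a forward deterministic simulation. I would resolve this exactly as the proposition handles the prefix case: using closure of $\NCM$ under reversal, take a machine $M^R$ accepting $L^R$ and set
$$L'' = \{ w \$ t_1 \cdots t_m \mid M^R \mbox{~accepts~} w^R \mbox{~via transitions~} t_1, \ldots, t_m\},$$
where the $t_i$ now label transitions of $M^R$. A $\DPCM$ machine reads and pushes $w$, reads $\$$, then pops $w^R$ off the pushdown while reading $t_1 \cdots t_m$ and deterministically replaying $M^R$, so $L'' \in \DPCM$. Then $L'' (\$ T^*)^{-1} = L$, since $x$ lies in the right quotient iff $x\,\$\, t_1 \cdots t_m \in L''$ for some label sequence, iff $M^R$ accepts $x^R$, iff $x \in L$.

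I expect the main point to be bookkeeping rather than a genuine difficulty: one must verify that encoding $M$'s nondeterministic transition choices in the input (routed through the pushdown so that they are consumed in the correct order) leaves a purely deterministic machine, and that the reversal-bounded counters of $M$ (or $M^R$) are faithfully tracked by the $\DPCM$ counters without introducing additional reversals. All of these facts were already established in the construction underlying Proposition \ref{DPCMprefixsuffix}, so the corollary follows immediately from the two quotient identities above.
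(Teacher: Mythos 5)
Your proposal is correct and follows essentially the same route as the paper: the paper derives this corollary directly from the two constructions in the proof of Proposition \ref{DPCMprefixsuffix}, observing (as you do) that they apply to an arbitrary $\NCM$ machine and that quotienting by the regular languages $T^*\$$ and $\$T^*$ recovers $L$ exactly. Your explicit verification of the two quotient identities, including the point that the unique occurrence of $\$$ forces the factorisation, is exactly the bookkeeping the paper leaves implicit.
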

The statement of this corollary cannot be weakened to taking the quotients
of a $\DPDA$ with a regular language, since $\DPDA$ is closed
under right quotient with regular languages \cite{harrison1978}.

Lastly, we will address the question of whether the left or right quotient of a $\DPDA$ language with a $\DCM$ language is always in $\DPCM$. 

\begin{proposition}
The right quotient (resp. left quotient) of a $\DPDA(1)$ language with a $\DCM(1,1)$ language can be outside $\DPCM$.
\end{proposition}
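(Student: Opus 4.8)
The plan is to reduce the statement to a language already known to lie outside $\DPCM$, rather than to argue non-semilinearity. This is deliberate: if $L_1\in\DPDA(1)\subseteq\NPDA$ and $L_2\in\DCM(1,1)\subseteq\NCM$, then writing $L_1' = \{x\#y \mid xy\in L_1\}$ (obtained from $L_1$ by inverse homomorphism and intersection with $\Sigma^{*}\#\Sigma^{*}$), intersecting the $\NPDA$ language $L_1'$ with the $\NCM$ language $\Sigma^{*}\#L_2$, and taking the right quotient by the regular set $\#\Sigma^{*}$ shows $\rquot{L_1}{L_2}\in\NPCM$ \cite{Ibarra1978}; so the quotient is always semilinear and a semilinearity argument cannot work. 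Instead I would exploit that, by \cite{OscarNCMA2014journal} (as already used in the proof of Proposition \ref{DPCMprefixsuffix}), there is a language $L\in\NCM(1,1)$ with $L\notin\DPCM$, and that $\DPCM$ is closed under intersection with regular languages and under right (and left) quotient with a single symbol. Hence it suffices to realize $L\$$ (respectively $\$L$) as the relevant quotient, since $L\$\notin\DPCM$ whenever $L\notin\DPCM$.

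The construction splits the verification of an accepting computation of a fixed $\NCM(1,1)$ machine $M$ for $L$ between the two operands. Let $T$ be a label set for the transitions of $M$ (each label recording the from-state, to-state, the input symbol read or $\lambda$, and the counter action and zero-test). The idea is that a single $1$-turn stack can hold $w$ and then match it, in reverse, against the input symbols recorded along a run, while a single $1$-reversal counter can independently certify that the counter actions of the same run are legal. Concretely I would set $L_1=\{\, w\$\,t_m t_{m-1}\cdots t_1 \mid t_1\cdots t_m \text{ is a state-consistent, input-consistent, accepting transition sequence of } M \text{ on } w\,\}$, where $L_1$ performs \emph{no} counter check: it pushes $w$, reads $\$$ (its single turn), and then reads the reversed run $t_m\cdots t_1$, popping to match each recorded input symbol against the reversed $w$ on the stack while tracking the state chain in the finite control. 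This is deterministic, since the run is given as input, and uses one stack reversal, so $L_1\in\DPDA(1)$. I would then let $L_2=\{\,t_m\cdots t_1 \mid \text{the counter actions of } t_1\cdots t_m \text{ form a valid $1$-reversal computation (non-negative, zero-tests correct)}\,\}$; reading the reversed run, the counter still makes only one reversal, so $L_2\in\DCM(1,1)$, and it is genuinely non-regular (this matters: the quotient of a $\DPDA$ language by a \emph{regular} set stays in $\DPDA\subseteq\DPCM$, so the counter of $L_2$ must do essential work).

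Putting these together, $\rquot{L_1}{L_2}=\{\,z \mid \exists \rho\in L_2,\ z\rho\in L_1\,\}$, and intersecting with the regular language $\Sigma^{*}\$$ forces the split to occur exactly at $\$$, so a string $w\$$ survives iff some $\rho\in L_2$ with $w\$\rho\in L_1$ exists, i.e.\ iff $M$ has a fully legal accepting run on $w$, i.e.\ iff $w\in L$. Thus $\rquot{L_1}{L_2}\cap\Sigma^{*}\$ = L\$$, and were $\rquot{L_1}{L_2}$ in $\DPCM$ we would get $L\$\in\DPCM$ and then $L\in\DPCM$, a contradiction. The left-quotient case is symmetric: mirror the encoding to $L_1=\{\,t_1\cdots t_m\$\,w \mid \cdots\,\}$ with $L_2$ certifying the counter actions as a prefix, yielding $\lquot{L_2}{L_1}\cap\$\,\Sigma^{*}=\$L$. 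I expect the main obstacle to be the bookkeeping of this resource split: arranging the reading order so that the $\DPDA(1)$ stack simultaneously stores $w$ and matches it against the reversed run, while confirming that reversing a $1$-reversal counter computation is still $\DCM(1,1)$-checkable and that $L_1$ and $L_2$ genuinely constrain the \emph{same} guessed run. Once these encoding details are verified, the conclusion follows from the closure properties of $\DPCM$ already invoked in Proposition \ref{DPCMprefixsuffix}.
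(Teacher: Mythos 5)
Your overall strategy is correct and it is the same high-level reduction the paper uses --- start from an $\NCM(1,1)$ language $L\notin\DPCM$ \cite{OscarNCMA2014journal}, encode its runs as transition-label strings, and let the quotient supply the existential quantification over runs --- but your division of labor between the two operands is genuinely different. The paper puts the \emph{entire} verification (state chaining, input consistency, and the counter) into the $\DCM(1,1)$ operand: it uses a forward encoding $\#x$ in which input letters are interleaved with labels, lets the $\DCM(1,1)$ machine deterministically replay the labelled run of $M^R$ from its initial configuration (so its counter starts at $0$ and is trivially $1$-reversal-bounded), and reduces the $\DPDA(1)$ operand to the bare matching condition $L_1=\{w\#x\mid h(x)=w^R\}$; the quotient then equals $L$ exactly, with no final intersection with a regular set. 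You instead make the $\DPDA(1)$ check state and input consistency and leave only counter validity to the $\DCM(1,1)$ operand, which must read the run \emph{reversed}. What your split buys is a cleaner-looking $\DPDA(1)$ operand and the sanity check that $L_2$ is non-regular; what it costs is the one technical point you flagged but did not resolve.

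That point is a real gap as written: reading $t_m\cdots t_1$, the natural deterministic simulation of the counter must be initialized to the final counter value $c_m$ of the run, which the machine cannot know, so $L_2\in\DCM(1,1)$ is not yet justified. The fix is to normalize $M$ so that every accepting computation empties its counter before accepting (decrement to zero on stay transitions at the end-marker; the paper uses exactly this normalization in the proof of Proposition \ref{leftquotientwithNPCM}). With $c_m=0$ the reversed profile $c_m,c_{m-1},\ldots,c_0$ is again unimodal, so applying the negated deltas in reverse order is a deterministic $1$-reversal-bounded computation that also verifies each recorded zero-test, and rejection in the finite control handles delta patterns that are not of the required shape. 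With that addition, the rest of your argument is sound: both membership tests constrain the same witness $\rho$, the intersection with $\Sigma^*\$$ isolates $L\$$, and closure of $\DPCM$ under intersection with regular sets and under quotient by a single symbol yields the contradiction. For the left quotient, note that mirroring swaps which operand sees the run backwards: the $\DCM(1,1)$ operand then reads $t_1\cdots t_m$ forwards (no normalization issue), but the $\DPDA(1)$ operand pushes the recorded input letters and pops them against $w$, so the encoded run must be a run on $w^R$ (take the machine accepting $L^R$, as the paper does).
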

\begin{proof}
To start, it is known that there exists an $\NCM(1,1)$ language that is not in $\DPCM$ \cite{OscarNCMA2014journal}. Let
$L$ be such a language, and let $M$ be a $\NCM(1,1)$ machine accepting $L$. Then
$L^R$ is also an $\NCM(1,1)$ language, and let $M^R$ be an $\NCM(1,1)$ machine accepting it.
Let $T$ be a set of labels associated bijectively with transitions of $M^R$.

Then, we can create a $\DCM(1,1)$ machine $M'$ accepting words in 
$\#(\Sigma \cup T)^*$ such that after reading $\#$, $M'$ simulates
$M^R$ deterministically by reading a label $t \in T$ before
simulating $t$ deterministically. That is,
if $M'$ reads a letter $a \in \Sigma$, $M'$ stores it in a buffer (that can hold exactly one letter), and if $M'$ reads a letter
$t\in T$, $M'$ simulates $M^R$ on the letter $a$ in the buffer using transition $t$, completely
deterministically. Then if $t$ is a stay transition, the next letter must be in $T$, and the
buffer stays intact, whereas if $t$ is a right transition, then the buffer is cleared,
and the next letter must be in $\Sigma$. If the input is not of this form (for example, if there are two letters from
$\Sigma$ in a row, or a transition label representing a right move followed by another transition label), the machine crashes (cannot continue and does not accept). The first transition must also
be from the initial state, and the simulation must end in a final state.
It is clear then that if $h$ is a homomorphism
that erases letters of $T$ and fixes letters of $\Sigma$, then 
$h(L(M')) = L(M^R)$.

Then, consider the language 
$L_1 = \{ w \# x \mid w \in \Sigma^*, x \in (\Sigma \cup T)^*, h(x) = w^R\}$. Then $L_1 \in \DPDA(1)$.

Consider $L_2 = L_1 L(M')^{-1}$. Then $L_2 = \{ w \mid w \in \Sigma^*, \mbox{~there exists~} x \in (\Sigma \cup T)^* \mbox{~such that~} 
h(x) = w^R, \mbox{~and~} h(x) \in L(M^R)\}$. Hence,
$L_2 = \{ w \mid w \in \Sigma^*, w \in L(M)\} = L$, which is not in $\DPCM$.

Similarly for left quotient by using the $\DPDA(1)$ language 
$L_1 = \{ x \# w \mid w \in \Sigma^*, x \in (\Sigma \cup T)^*\}$.
\qed
\end{proof}

The following is also evident from the proof above.

\begin{corollary} Every $\NCM$ language can be obtained by taking the right quotient (resp. left
quotient) of a $\DPDA(1)$ language by a $\DCM$ language.
\end{corollary}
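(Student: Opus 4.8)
The plan is to observe that the construction in the proof above used nothing about the chosen language beyond its being an $\NCM$ language, so it can be rerun verbatim on an arbitrary one. Thus let $L$ be any $\NCM$ language and let $M$ be an $\NCM$ machine accepting it; I will exhibit a $\DPDA(1)$ language $L_1$ and a $\DCM$ language whose right (resp.\ left) quotient is exactly $L$.

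For the right quotient I would first take $M^R$ to be an $\NCM$ machine for $L^R$ (using closure of $\NCM$ under reversal) and let $T$ label its transitions bijectively. As above, build the $\DCM$ machine $M'$ accepting words of $\#(\Sigma \cup T)^*$ that, after reading $\#$, replays $M^R$ deterministically: each symbol of $T$ names the transition to apply next, each symbol of $\Sigma$ sits in a one-letter buffer to be consumed by that transition, and $M'$ crashes on any inconsistency. Spelling out the transition sequence removes all nondeterminism while leaving the counters of $M^R$ and their reversal bound untouched, so $M'$ has the same number of counters and reversals as $M^R$ and hence $L(M') \in \DCM$; writing $h$ for the homomorphism that erases $T$ and fixes $\Sigma$, we obtain $h(L(M')) = L(M^R) = L^R$. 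Taking $L_1 = \{\, w \# x \mid w \in \Sigma^*,\ x \in (\Sigma \cup T)^*,\ h(x) = w^R \,\}$, which lies in $\DPDA(1)$ since a deterministic linear machine can push $w$ and then, on reading $x$, skip the $T$-symbols while popping and matching the $\Sigma$-symbols against the reversed stack, one checks that $w \in L_1 L(M')^{-1}$ precisely when some $\# x \in L(M')$ satisfies $w \# x \in L_1$; this forces both $h(x) = w^R$ and $h(x) \in L^R$, i.e.\ $w \in L$, and conversely every $w \in L$ arises this way. Hence $L = L_1 L(M')^{-1}$.

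For the left quotient I would use the mirror construction: let $M''$ be the $\DCM$ machine that replays $M^R$ on a prefix $x \in (\Sigma \cup T)^*$ exactly as $M'$ does and then accepts upon reading a trailing $\#$, and let $L_1 = \{\, x \# w \mid x \in (\Sigma \cup T)^*,\ w \in \Sigma^*,\ h(x) = w^R \,\} \in \DPDA(1)$. Here the pushdown of $L_1$ enforces the matching $h(x) = w^R$ while membership of $x \#$ in $L(M'')$ certifies that $x$ encodes a genuine accepting computation, so the same bookkeeping gives $(L(M''))^{-1} L_1 = L$.

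The only point that goes beyond the proof above is that $M'$ must stay deterministic when $M$, and therefore $M^R$, has several reversal-bounded counters rather than a single $1$-reversal counter; I expect this to be the main, though routine, obstacle, since $M'$ has to verify that each named transition is actually enabled at the current input symbol and counter-status vector and must reject any label whose zero-tests disagree with the simulated counters. Once that check is in place, the equalities above are immediate and the corollary follows.
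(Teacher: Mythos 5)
Your proposal is correct and follows essentially the same route as the paper, which simply observes that the construction in the preceding proposition (encoding an accepting computation of a machine for $L^R$ by transition labels, deterministically replaying it with a $\DCM$ machine $M'$, and matching $h(x)=w^R$ with a $\DPDA(1)$) uses nothing special about the particular $\NCM(1,1)$ language chosen there and so generalizes verbatim to any $\NCM$ language, with $M'$ now having as many counters as the given machine. Your explicit remark that the label-driven simulation stays deterministic and preserves the reversal bounds for arbitrarily many counters is exactly the (routine) point the paper leaves implicit.
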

Again, this statement cannot be weakened to the right quotient of a $\DPDA$
with a regular language since $\DPDA$ languages are closed under right quotient with regular languages \cite{GinsburgDPDAs}.

\section{Right and Left Quotients of Regular Sets}
\label{sec:reg}

Let $\mathscr{F}$ be any family of languages (which
need not be recursively enumerable).   It is known that $\REG$ is closed under right quotient by languages in $\mathscr{F}$ \cite{HU}.  However, this closure need  not be
effective, as it will depend on the properties of $\mathscr{F}$.  The following 
is an interesting observation which connects decidability of
the emptiness problem to effectiveness of closure under right
quotient:

\begin{proposition} \label{reg1}
Let $\mathscr{F}$ be any family of languages which is effectively closed under intersection with regular sets and whose emptiness problem is decidable.
Then $\REG$ is effectively closed under both left and right quotient by languages in $\mathscr{F}$.
\end{proposition}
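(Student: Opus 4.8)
The plan is to reduce each quotient to finitely many emptiness tests on languages in $\mathscr{F}$, so that the assumed decidability of emptiness turns the classical (unconditional) closure of $\REG$ under quotient into an effective construction. Fix a DFA $A = (Q,\Sigma,\delta,q_0,F)$ accepting a regular language $R$, and let $L \in \mathscr{F}$. For each $q \in Q$ I would consider two auxiliary regular languages, both effectively obtainable from $A$: the language $R_q = \{ w \mid \hat{\delta}(q,w) \in F\}$ accepted by $A$ when its start state is changed to $q$, and the language $P_q = \{ w \mid \hat{\delta}(q_0,w) = q\}$ of words steering $A$ from $q_0$ into $q$.

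For the right quotient, the key observation is that $x \in \rquot{R}{L}$ precisely when $\hat{\delta}(q_0,x) = q$ for some state $q$ that admits a $y \in L$ with $\hat{\delta}(q,y) \in F$, i.e.\ some $q$ with $L \cap R_q \neq \emptyset$. Setting $G = \{ q \in Q \mid L \cap R_q \neq \emptyset\}$, this gives $\rquot{R}{L} = \{ x \mid \hat{\delta}(q_0,x) \in G\}$, which is accepted by the DFA obtained from $A$ by replacing its final set $F$ with $G$, and is therefore regular. To make this effective I would, for each $q$, form $L \cap R_q$, which lies in $\mathscr{F}$ and is effectively constructible by the assumed effective closure under intersection with regular sets, and then test $L \cap R_q = \emptyset$ using the assumed decidable emptiness problem; this determines $G$ and hence the DFA. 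The left quotient is symmetric, using $P_q$ in place of $R_q$: $y \in \lquot{L}{R}$ exactly when there is a state $q$ with $L \cap P_q \neq \emptyset$ (some $x \in L$ reaches $q$) and $\hat{\delta}(q,y) \in F$. With $G' = \{ q \in Q \mid L \cap P_q \neq \emptyset\}$ one gets $\lquot{L}{R} = \bigcup_{q \in G'} R_q$, accepted by the NFA formed from $A$ by adjoining a fresh start state with $\lambda$-transitions into each $q \in G'$; again $G'$ is computable by intersecting with the regular $P_q$ and deciding emptiness.

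I do not expect any genuine obstacle here: the set-theoretic identities are routine, and the regularity of the results is classical. The entire content is the effectiveness bookkeeping, and the main point to get right is that both quotients hinge only on the two finite families of emptiness questions $\{L \cap R_q\}_{q \in Q}$ and $\{L \cap P_q\}_{q \in Q}$. These are answerable under exactly the two stated hypotheses, so no pumping, semilinearity, or further structural information about $\mathscr{F}$ is required.
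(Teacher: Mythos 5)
Your proposal is correct and follows essentially the same route as the paper's proof: for the right quotient you recompute the accepting set as those states $q$ with $L \cap R_q \neq \emptyset$, and for the left quotient you use an NFA with $\lambda$-moves into the states reachable from $q_0$ by some word of $L$, with effectiveness in both cases reduced to finitely many emptiness tests on intersections with regular sets. No substantive differences to report.
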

\begin{proof}
We will start with right quotient.

Let $L_1 \in \REG$ and $L_2$ be in $\mathscr{F}$.  Let $M$ be a DFA accepting $L_1$.
Let $q$ be a state of $M$, and 
$L_q = \{  y \mid M \mbox{~from initial state~} q \mbox{~accepts~} y \}$.
Let $Q' = \{ q \mid q \mbox{~is a state of~} M, L_q \cap L_2 \neq \emptyset \}$.
Since $\mathscr{F}$ is effectively closed under intersection with regular sets and has a
decidable emptiness problem, $Q'$ is computable.   Then a DFA $M'$ accepting
$L_1 L_2^{-1}$ can be obtained by  just making $Q'$ the set of accepting states
in $M$.

Next, for left quotient, let  $L_1$ be  in $\mathscr{F}$,
and $ L_2$ in $\REG$ be accepted by a DFA $M$ whose initial state is $q_0$.

Let $ L_q = \{ x \mid  M \mbox{~on input~} x \mbox{~ends in state~} q \}$.
Let $Q'  =  \{q \mid q \mbox{~is a state of~} M, L_q \cap L_1 \neq \emptyset \}$.  Then $Q'$
is computable, since  $\mathscr{F}$ is effectively closed under intersection with
regular sets and has a decidable emptiness problem.

We then construct an NFA (with $\lambda$-transitions) $M'$ to accept $L_1^{-1} L_2$ as follows:
$M'$ starting in state $q_0$ with input $y$, then
nondeterministically goes to a state $q$ in $Q'$ without reading
any input, and
then simulates the DFA $M$.

\qed
\end{proof}

\begin{corollary} \label{reg2}
$\REG$ is effectively closed under left and right quotient by languages in:
\begin{enumerate}
\item
the families of languages accepted by $\NPCM$ and $2\DCM(1)$ machines,
\item
the family of languages accepted by $\MPCA$s, $\TCA$s, $\QCA$s, and $\EPDA$s,
\item the families of ET0L and Indexed languages.
\end{enumerate}
\end{corollary}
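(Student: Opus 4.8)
The plan is to apply Proposition \ref{reg1} to each of the listed families. Recall that the proposition requires of a family $\mathscr{F}$ only two hypotheses: effective closure under intersection with regular sets, and a decidable emptiness problem. So for each family I would verify exactly these two conditions, after which the stated closure of $\REG$ under both left and right quotient is immediate.

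For item (1), the family $\NPCM$ is a full trio and hence effectively closed under intersection with regular sets; moreover every $\NPCM$ language is effectively semilinear \cite{Ibarra1978}, which makes its emptiness problem decidable. The family of $2\DCM(1)$ languages is closed under intersection with regular sets (a fact already used in the proof of Proposition \ref{NonclosureSuffix}), and although these languages need not be semilinear, their emptiness problem is nonetheless known to be decidable \cite{IbarraJiang}. Thus both families in item (1) satisfy the hypotheses of Proposition \ref{reg1}.

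For item (2), each of $\MPCA$, $\TCA$, $\QCA$, and $\EPDA$ augmented by reversal-bounded counters is effectively semilinear, as discussed following Corollary \ref{generalizedSemilinear} and in \cite{Harju2002278}; effective semilinearity yields decidability of emptiness. Each of these machine-based families is also closed under intersection with regular sets, obtained by running the deterministic finite automaton for the regular language in the finite control in parallel with the given machine. For item (3), ET0L and indexed languages are both closed under intersection with regular sets by standard closure results, and each has a decidable emptiness problem; for indexed languages this must be cited directly and does not follow from semilinearity, since the indexed languages include non-semilinear sets such as $\{a^{2^n} \mid n \geq 0\}$.

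The step I expect to require the most care is verifying the hypotheses for the families that are \emph{not} semilinear --- namely $2\DCM(1)$ and the indexed languages --- since for these the convenient route from semilinearity to decidable emptiness is unavailable, and one must instead appeal to the specific decidability results established for those models. Everything else reduces to quoting the relevant closure and semilinearity facts and then invoking Proposition \ref{reg1}.
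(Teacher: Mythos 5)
Your proposal is correct and follows exactly the route the paper takes: verify, for each listed family, effective closure under intersection with regular sets and decidability of emptiness, then invoke Proposition \ref{reg1}. The paper's own proof is just a terser version of this, citing \cite{Harju2002278,Ah68,RS} for the required properties; your additional observation that for $2\DCM(1)$ and the indexed languages one must cite emptiness decidability directly rather than derive it from semilinearity is accurate and a worthwhile clarification.
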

\begin{proof}
These families are closed under intersection with regular sets. They have also
a decidable emptiness problem \cite{Harju2002278,Ah68,RS}. The family of ET0L languages and Indexed languages are discussed further in \cite{RS} and \cite{Ah68} respectively.
\qed
\end{proof}

\section{Closure for Bounded Languages}
\label{sec:bounded}

In this subsection, deletion operations applied to bounded and letter-bounded languages will be examined.
We will need the following corollary to Theorem \ref{unary}.
\begin{corollary} \label{unarycor}
Let $L \subseteq \#a^*\#$ be accepted by a
$2\NCM$.  Then $L$ is regular.
\end{corollary}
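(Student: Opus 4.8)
The plan is to reduce the claim about $L \subseteq \#a^*\#$ to Theorem~\ref{unary}, which concerns unary languages $L' \subseteq a^*$ accepted by a $2\NCM$. The natural idea is to strip off the two fixed end-markers: define the language $L' = \{ a^n \mid \#a^n\# \in L \}$, and argue that $L'$ is accepted by some $2\NCM$ whenever $L$ is. Since $L' \subseteq a^*$, Theorem~\ref{unary} immediately gives that $L'$ is regular (indeed semilinear), and then I would push this regularity back up to conclude that $L$ itself is regular.

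First I would take a $2\NCM$ machine $M$ with $L(M) = L$. To build a $2\NCM$ machine $M'$ for $L'$, note that a two-way machine already has left and right input end-markers on its tape (the $2\DCM$/$2\NCM$ model requires both), so the symbols $\#$ appearing \emph{inside} $L$ are just ordinary input letters sitting immediately adjacent to the tape end-markers. The machine $M'$, given input $a^n$ flanked by its own tape end-markers, simply simulates $M$ while pretending that the cell just past the left end-marker holds $\#$ and the cell just before the right end-marker holds $\#$; whenever $M$'s head would move onto one of those two phantom $\#$ cells, $M'$ keeps track in its finite control of which phantom cell the head occupies and of $M$'s state, resuming ordinary simulation once the head moves back onto the $a^*$ region. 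This is a routine finite-state bookkeeping modification that preserves both the two-way behaviour and the reversal-bound on the counters, so $M'$ is a genuine $2\NCM$ with $L(M') = L'$.

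By Theorem~\ref{unary}, $L'$ is regular. To finish, I would observe that the map $a^n \mapsto \#a^n\#$ is the restriction to $a^*$ of a homomorphism (or, more safely, that $L = \# L' \#$ as a concatenation of the fixed words $\#$ on either side of a regular set), and since $\REG$ is closed under concatenation with singletons, $L = \{\#\} \cdot L' \cdot \{\#\}$ is regular. Hence $L$ is regular, as required.

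The only genuinely delicate point is the simulation in the first step: I must make sure that treating the two $\#$-cells as phantom symbols absorbed into the finite control of $M'$ does not lose information about head reversals at the boundary or violate the reversal-bound on the counters. Since the counters are untouched by this relabelling and only finitely many boundary configurations need to be remembered, I expect this obstacle to be purely notational rather than substantive; the essential content is entirely carried by Theorem~\ref{unary}.
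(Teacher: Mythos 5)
Your proof is correct and matches the paper's intent: the paper states this as an immediate corollary of Theorem~\ref{unary} with no written proof, and the obvious argument is exactly your marker-stripping reduction (absorb the two $\#$ cells into the finite control to get a $2\NCM$ for $L' = \{a^n \mid \#a^n\# \in L\}$, apply Theorem~\ref{unary}, and recover $L = \#L'\#$ by closure of $\REG$ under concatenation with singletons). Your attention to the boundary simulation and the preservation of the reversal bound is sound; nothing further is needed.
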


\begin{proposition} \label{bounded}
If $L$ is a bounded language accepted by either a finite-crossing $2\NCM$, an $\NPCM$ or a finite-crossing $2\DPCM$, then all of $\pref(L)$, $\suff(L)$, $\inf(L)$,
$\outf(L)$ can be accepted by a $\DCM$.
\end{proposition}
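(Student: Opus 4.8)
The plan is to combine the robustness of the bounded-semilinear class furnished by Theorem \ref{WWW} with the observation that the prefix, suffix, infix, and outfix operations preserve boundedness. First, each of the three hypotheses places $L$ in the common equivalence class of Theorem \ref{WWW}: whether $L$ is accepted by a finite-crossing $2\NCM$, by an $\NPCM$, or by a finite-crossing $2\DPCM$, the theorem applies because $L$ is bounded, and it yields that $L$ is accepted by an $\NCM$ machine, hence in particular is an $\NCM$ (and $\NPCM$) language.

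Second, I would show that each of $\pref(L)$, $\suff(L)$, $\infx(L)$, $\outf(L)$ is again a bounded language. Writing $L \subseteq w_1^* \cdots w_k^*$, any prefix of a word $w_1^{i_1} \cdots w_k^{i_k}$ has the form $w_1^{i_1} \cdots w_{j-1}^{i_{j-1}} w_j^m v$, where $v$ is a prefix of $w_j$; hence $\pref(L)$ is contained in the finite union $\bigcup_{j,v} w_1^* \cdots w_j^* v$. A symmetric analysis writes every suffix as a suffix of one block followed by the later full blocks, every infix as (a suffix of a block)(intermediate blocks)(a prefix of a block), and every outfix as (a prefix portion)(a suffix portion); in each case the image lies in a finite union of languages of this bounded form. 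Since a finite union of bounded languages is bounded — placing the defining word sequences one after another and taking zero exponents recovers each summand — all four images are bounded.

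Third, since $\NCM$ is a full trio, it is closed under prefix, suffix, infix, and outfix, as recorded in the introduction. Therefore $\pref(L)$, $\suff(L)$, $\infx(L)$, and $\outf(L)$ are all $\NCM$ languages. Finally, each of these four languages is simultaneously bounded (by the second step) and an $\NCM$ language (by the third), so a second application of Theorem \ref{WWW} shows that each is accepted by a $\DCM$.

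The main obstacle is the second step, namely verifying that boundedness is preserved by each operation, since membership in $\NCM$ alone is far from sufficient to force a language into $\DCM$. The argument itself is elementary once one records the block structure of prefixes, suffixes, infixes, and outfixes of words drawn from $w_1^* \cdots w_k^*$ and notes that the resulting finite unions of bounded languages remain bounded.
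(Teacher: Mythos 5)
Your proof is correct and follows essentially the same route as the paper's: both reduce to Theorem \ref{WWW} to place $L$ in $\NCM$, invoke closure of $\NCM$ under the four operations together with preservation of boundedness, and apply Theorem \ref{WWW} a second time to land in $\DCM$. The only difference is that you spell out the block-structure argument for why $\pref(L)$, $\suff(L)$, $\infx(L)$, and $\outf(L)$ remain bounded, which the paper simply cites as known.
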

\begin{proof}
By Theorem \ref{WWW}, an $\NCM$ can be constructed that accepts $L$. Further, one
can construct $\NCM$'s accepting
$\pref(L), \suff(L), \inf(L), \outf(L)$ since one-way $\NCM$ is closed under prefix, suffix, infix and outfix. In addition, it is known that applying these operations on bounded languages produce only bounded languages. Thus, by another application of Theorem \ref{WWW}, the result can then be
converted to a $\DCM$.
\qed \end{proof}

The ``finite-crossing'' requirement in the proposition above is necessary:
\begin{proposition}
There exists a letter-bounded language $L$ accepted by 
a $2\DCM(1)$ machine which makes only one reversal on the counter
such that $\suff(L)$ (resp., $\inf(L)$, $\outf(L), \pref(L)$)
is not in $\DCM \cup 2\DCM(1)$.
\end{proposition}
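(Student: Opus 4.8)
The plan is to exhibit a single letter-bounded language accepted by a non-finite-crossing $2\DCM(1)$ machine with one counter reversal whose suffix, infix and outfix all collapse, after intersection with a regular set, to a non-regular unary language; the contradiction will then come directly from Corollary \ref{unarycor}. Concretely, over the alphabet $\{a,\#\}$ I would take
\[
L = \{a^i \# a^n \# \mid i \ge 2,\ i \mid n,\ n \ge 2i\}.
\]
This is letter-bounded since $L \subseteq a^*\#^*a^*\#^*$. It lies in $2\DCM(1)$ with a single reversal: a two-way one-counter machine first scans the block $a^n$ while incrementing the counter to $n$, and then sweeps the block $a^i$ back and forth, decrementing by $1$ for each $a$ on left-to-right passes. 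Divisibility $i \mid n$ is detected by the counter reaching exactly $0$ at the end of a pass, $n \ge 2i$ is checked by recording in the finite control that at least two full passes occur, and $i \ge 2$ is checked while first reading $a^i$. The counter only increases during the initial load and only decreases afterward, so one reversal suffices; crucially the number of sweeps is $n/i$, so the machine is \emph{not} finite-crossing, which is exactly why Proposition \ref{bounded} does not apply.

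The key observation is that $\exists i\,(i \ge 2,\ i \mid n,\ n \ge 2i)$ holds precisely when $n$ is composite. I would then argue that every factor of a word of $L$ containing both occurrences of $\#$ must be all of $\# a^n \#$, so that
\[
\suff(L) \cap \# a^* \# = \inf(L) \cap \# a^* \# = \outf(L) \cap \# a^* \# = \{\# a^n \# \mid n \text{ is composite}\}.
\]
For the suffix this is because the only suffix lying in $\#a^*\#$ is obtained by deleting the entire prefix $a^i$; the infix and outfix cases reduce to the suffix case, since any surviving factor (respectively outer part) landing in $\#a^*\#$ must keep both markers and hence all of $a^n$, while a prefix beginning with $\#$ is impossible as every word of $L$ begins with $a$. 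Because the set of composite numbers is not ultimately periodic, $\{\# a^n \# \mid n \text{ composite}\}$ is not regular, so by Corollary \ref{unarycor} it is not accepted by any $2\NCM$. As $\DCM$ and $2\DCM(1)$ are both contained in $2\NCM$ and closed under intersection with regular languages, none of $\suff(L),\ \inf(L),\ \outf(L)$ can lie in $\DCM \cup 2\DCM(1)$. For the prefix operation I would run the same argument on the reversal $L^R = \{\# a^n \# a^i \mid i \ge 2,\ i \mid n,\ n \ge 2i\}$, which is again letter-bounded and in $2\DCM(1)$ with one reversal (as $2\DCM$ is closed under reversal), using that $\pref(L^R) \cap \# a^* \#$ equals the same composite-number language.

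The one genuinely delicate design point is arranging that the nontrivial-factorization predicate ``$n = i\cdot(n/i)$ with both factors at least $2$'' is recovered by the deletion while $L$ itself only ever checks a \emph{divisibility} relation. The deletion operation supplies the existential quantifier over the divisor $i$, so $L$ need not (and, by the impossibility of multiplication on $2\DCM(1)$ underlying Theorem \ref{unary}, could not) verify a product; it only verifies, for the concrete divisor sitting in the deleted block, that $i \mid n$ with cofactor at least $2$. Checking this divisibility-with-cofactor condition with a single one-reversal counter on a two-way input is routine, so I expect no obstacle there. The care lies entirely in choosing the block layout so that the surviving factor in $\#a^*\#$ pins down $n$ while existentially freeing $i$, and in keeping the witness for non-membership inside the scope of Corollary \ref{unarycor} rather than attempting a direct (and much harder) argument against a multiplication language.
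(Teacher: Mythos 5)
Your proposal is correct and follows the same skeleton as the paper's proof: take a divisibility language over two letter-blocks separated by markers, accept it by a one-counter-reversal $2\DCM(1)$ that repeatedly sweeps one block while decrementing a counter loaded from the other, intersect the deletion closure with the regular set $\#a^*\#$ to isolate a unary-between-markers language, and derive a contradiction from Corollary \ref{unarycor}. The one substantive difference is your extra condition $n \ge 2i$ (equivalently, cofactor at least $2$), and this is not cosmetic: it is exactly what makes the argument work. The paper's own witness $L = \{a^i\#b^j\# \mid i,j\ge 2,\ i \text{ divides } j\}$ has the defect that the existential quantifier introduced by the suffix operation is trivially satisfied by the divisor $i=j$, so $\suff(L)\cap\#b^+\#$ collapses to $\{\#b^j\# \mid j\ge 2\}$, which is regular and semilinear; the claimed contradiction does not arise. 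Your version forces the surviving predicate to be ``$n$ has a divisor $i$ with $2\le i\le n/2$,'' i.e.\ ``$n$ is composite,'' and the composites are indeed not ultimately periodic (for any putative period $p$ past a threshold, a large prime $q$ would force $q(1+p)$ to be non-composite), so the resulting language is not regular and Corollary \ref{unarycor} applies. Your handling of infix, outfix (via the observation that the prefix part $x$ must be empty because words of $L$ begin with $a$), and prefix (via $L^R$) is also sound. In short: same route as the paper, with a necessary repair.
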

\begin{proof}
Let $L = \{a^i \#b^j\# ~|~  i, j \ge 2, j$ is divisible by $i \}$.
Clearly, $L$ can be accepted by a $2\DCM(1)$ which makes only
one reversal on the counter.  If $\suff(L)$ is in $\DCM \cup 2\DCM(1)$,
then $L' = \suff(L) \cap \#b^+\#$ would be in $\DCM \cup 2\DCM(1)$.  
From Corollary \ref{unarycor}, we get a contradiction, since
$L'$ is not semilinear.
The other cases are shown similarly.
\qed \end{proof}

\section{Conclusions}

We investigated many different deletion operations  applied to languages accepted by one-way and two-way deterministic reversal-bounded multicounter machines, deterministic pushdown automata, and finite automata. The operations  include the prefix, suffix, infix, and outfix operations, as well as left and right quotient with languages from different families. Although it is frequently expected that language families defined from deterministic machines will not be closed under deletion operations,  we showed that $\DCM$ is closed under right quotient with languages from many different language families, such as the context-free languages. When starting with one-way deterministic machines with one counter that makes only one reversal ($\DCM(1,1)$), taking the left quotient with languages from many different language families, such as the context-free languages, yields only languages in $\DCM$ (by increasing the number of counters). It follows from
these results that the suffix or infix closure of $\DCM(1,1)$ languages are all in $\DCM$. These results are surprising given the nondeterministic behaviour of the deletion operations. However, for both $\DCM(1,3)$, or $\DCM(2,1)$, taking the left quotient (or even just the suffix operation) yields languages that can neither be accepted by deterministic reversal-bounded multicounter machines, nor by 2-way nondeterministic machines with one reversal-bounded counter ($2\DCM(1)$).  

Some interesting open questions remain.  For example,
is the outfix and embedding operations applied to $\DCM(1,1)$ languages always yield languages
in $\DCM$? Also, other deletion operations, such as schema for
parallel deletion \cite{parInsDel} have not yet been investigated applied
to languages accepted by deterministic machines.

\section*{Acknowledgements}
We thank the referees for helpful suggestions improving the presentation of the paper.


\begin{thebibliography}{10}
\expandafter\ifx\csname url\endcsname\relax
  \def\url#1{\texttt{#1}}\fi
\expandafter\ifx\csname urlprefix\endcsname\relax\def\urlprefix{URL }\fi
\expandafter\ifx\csname href\endcsname\relax
  \def\href#1#2{#2} \def\path#1{#1}\fi

\bibitem{TAMC2015}
J.~Eremondi, O.~H. Ibarra, I.~McQuillan, Deletion operations on deterministic
  families of automata, in: R.~Jain, S.~Jain, F.~Stephan (Eds.), Lecture Notes
  in Computer Science, Vol. 9076 of 12th Annual Conference on Theory and
  Applications of Models of Computation, TAMC 2015, Singapore, 2015, pp.
  388--399.

\bibitem{G75}
S.~Ginsburg, Algebraic and Automata-Theoretic Properties of Formal Languages,
  North-Holland Publishing Company, Amsterdam, 1975.

\bibitem{Baker1974}
B.~S. Baker, R.~V. Book, Reversal-bounded multipushdown machines, Journal of
  Computer and System Sciences 8~(3) (1974) 315--332.

\bibitem{Ibarra1978}
O.~H. Ibarra, Reversal-bounded multicounter machines and their decision
  problems, Journal of the ACM 25~(1) (1978) 116--133.

\bibitem{counterMembrane}
O.~H. Ibarra, On strong reversibility in {P Systems} and related problems,
  International Journal of Foundations of Computer Science 22~(01) (2011)
  7--14.

\bibitem{verification}
O.~H. Ibarra, J.~Su, Z.~Dang, T.~Bultan, R.~A. Kemmerer, Counter machines and
  verification problems, Theoretical Computer Science 289~(1) (2002) 165--189.

\bibitem{stringTransducers}
R.~Alur, J.~V. Deshmukh, Nondeterministic streaming string transducers, in:
  L.~Aceto, M.~Henzinger, J.~Sgall (Eds.), Automata, Languages and Programming,
  Vol. 6756 of Lecture Notes in Computer Science, Springer Berlin Heidelberg,
  2011, pp. 1--20.

\bibitem{modelChecking}
M.~Hague, A.~W. Lin, Model checking recursive programs with numeric data types,
  in: G.~Gopalakrishnan, S.~Qadeer (Eds.), Computer Aided Verification, Vol.
  6806 of Lecture Notes in Computer Science, Springer Berlin Heidelberg, 2011,
  pp. 743--759.

\bibitem{verificationDiophantine}
G.~Xie, Z.~Dang, O.~H. Ibarra, A solvable class of quadratic diophantine
  equations with applications to verification of infinite-state systems, in:
  J.~C. Baeten, J.~K. Lenstra, J.~Parrow, G.~J. Woeginger (Eds.), Automata,
  Languages and Programming, Vol. 2719 of Lecture Notes in Computer Science,
  Springer Berlin Heidelberg, 2003, pp. 668--680.

\bibitem{EIMInsertion2015}
J.~Eremondi, O.~Ibarra, I.~McQuillan, Insertion operations on deterministic
  reversal-bounded counter machines, in: A.~Dediu, E.~Formenti,
  C.~Mart\'in-Vide, B.~Truthe (Eds.), Lecture Notes in Computer Science, Vol.
  8977 of 9th International Conference on Language and Automata Theory and
  Applications, LATA 2015, Nice, France, 2015, pp. 200--211.

\bibitem{Chiniforooshan2012}
E.~Chiniforooshan, M.~Daley, O.~H. Ibarra, L.~Kari, S.~Seki, One-reversal
  counter machines and multihead automata: Revisited, Theoretical Computer
  Science 454 (2012) 81--87.

\bibitem{parInsDel}
L.~Kari, S.~Seki, Schema for parallel insertion and deletion: Revisited,
  International Journal of Foundations of Computer Science 22~(07) (2011)
  1655--1668.

\bibitem{HU}
J.~E. Hopcroft, J.~D. Ullman, Introduction to Automata Theory, Languages, and
  Computation, Addison-Wesley, Reading, MA, 1979.

\bibitem{boundedSemilin}
O.~H. Ibarra, S.~Seki, Characterizations of bounded semilinear languages by
  one-way and two-way deterministic machines, International Journal of
  Foundations of Computer Science 23~(6) (2012) 1291--1306.

\bibitem{JKT}
H.~J\"{u}rgensen, L.~Kari, G.~Thierrin, Morphisms preserving densities,
  International Journal of Computer Mathematics 78 (2001) 165--189.

\bibitem{IbarraJiang}
O.~H. Ibarra, T.~Jiang, N.~Tran, H.~Wang, New decidability results concerning
  two-way counter machines, SIAM J. Comput. 23~(1) (1995) 123--137.

\bibitem{Gurari1981220}
E.~M. Gurari, O.~H. Ibarra, The complexity of decision problems for finite-turn
  multicounter machines, Journal of Computer and System Sciences 22~(2) (1981)
  220--229.

\bibitem{GinsburgDPDAs}
S.~Ginsburg, S.~Greibach, Deterministic context free languages, Information and
  Control 9~(6) (1966) 620--648.

\bibitem{harrison1978}
M.~Harrison, Introduction to Formal Language Theory, Addison-Wesley series in
  computer science, Addison-Wesley Pub. Co., 1978.

\bibitem{Harju2002278}
T.~Harju, O.~Ibarra, J.~Karhumäki, A.~Salomaa, Some decision problems
  concerning semilinearity and commutation, Journal of Computer and System
  Sciences 65~(2) (2002) 278--294.

\bibitem{Vijayashanker:1987:STA:913947}
K.~Vijayashanker, A study of tree adjoining grammars, Ph.D. thesis,
  Philadelphia, PA, USA (1987).

\bibitem{Minsky}
M.~L. Minsky, Recursive unsolvability of {P}ost's problem of ``tag'' and other
  topics in theory of {T}uring {M}achines, Annals of Mathematics 74~(3) (1961)
  pp. 437--455.

\bibitem{OscarNCMA2014journal}
O.~H. Ibarra, Visibly pushdown automata and transducers with counters, {\it
  Fundamenta Informaticae}. To appear. (2016).

\bibitem{Ah68}
A.~V. Aho, Indexed grammars---an extension of context-free grammars, J.~ACM
  15~(4) (1968) 647--671.

\bibitem{RS}
G.~Rozenberg, A.~Salomaa, The Mathematical Theory of L Systems, Academic Press,
  Inc., New York, 1980.

\end{thebibliography}
\end{document}